\title{The Hagedorn--Hermite Correspondence}
\author{Tomoki Ohsawa}
\address{Department of Mathematical Sciences, The University of Texas at Dallas, 800 W Campbell Rd, Richardson, TX 75080-3021}
\email{tomoki@utdallas.edu}
\date{\today}
\keywords{Semiclassical wave packets, Hermite functions, symplectic group, metaplectic group, ladder operators}
\subjclass[2010]{20C35, 22E70, 81Q05, 81Q20, 81Q70, 81R05, 81R30, 81S10, 81S30}
\theoremstyle{plain}
\newtheorem{theorem}{Theorem}[section]
\newtheorem{corollary}[theorem]{Corollary}
\newtheorem{lemma}[theorem]{Lemma}
\newtheorem{proposition}[theorem]{Proposition}
\theoremstyle{definition}
\newtheorem{definition}[theorem]{Definition}
\newtheorem{example}[theorem]{Example}
\theoremstyle{remark}
\newtheorem{remark}[theorem]{Remark}
\def\pd#1#2{\dfrac{\partial #1}{\partial #2}}
\def\parentheses#1{{\left(#1\right)}}
\def\brackets#1{{\left[#1\right]}}
\def\braces#1{{\left\{#1\right\}}}
\def\Span{\mathop{\mathrm{span}}\nolimits} 
\def\norm#1{{\left\|#1\right\|}}
\def\abs#1{{\left|#1\right|}}
\def\R{\mathbb{R}}
\def\C{\mathbb{C}}
\def\N{\mathbb{N}}
\def\defeq{\mathrel{\mathop:}=}
\def\setdef#1#2{{\left\{ #1 \ |\ #2 \right\}}}
\def\ip#1#2{{\left\langle#1,#2\right\rangle}}
\def\exval#1{{\left\langle#1\right\rangle}}
\def\texval#1{\langle#1\rangle}
\def\diag{\operatorname{diag}}
\newcommand{\id}{\operatorname{id}}
\renewcommand{\Re}{\operatorname{Re}}
\renewcommand{\Im}{\operatorname{Im}}
\def\GL{\mathsf{GL}}
\def\SO{\mathsf{SO}}
\def\Sp{\mathsf{Sp}}
\def\FSp{\mathsf{FSp}}
\def\Mp{\mathsf{Mp}}
\def\U{\mathsf{U}}
\def\SO{\mathsf{SO}}
\def\Orth{\mathsf{O}}
\def\Mat{\mathsf{M}}
\def\Sym{\mathsf{Sym}}
\newenvironment{tbmatrix}{\left[\begin{smallmatrix}}{\end{smallmatrix}\right]}
\def\rmi{{\rm i}}
\begin{document}

\footskip=.6in

\begin{abstract}
  We investigate the relationship between the semiclassical wave packets of Hagedorn and the Hermite functions by establishing a relationship between their ladder operators.
  This Hagedorn--Hermite correspondence provides a unified view as well as simple proofs of some essential results on the Hagedorn wave packets.
  Particularly, we show that Hagedorn's ladder operators are a natural set of ladder operators obtained from the position and momentum operators using the symplectic group.
  This construction reveals an algebraic structure of the Hagedorn wave packets, and explains the relative simplicity of Hagedorn's parametrization compared to the rather intricate construction of the generalized squeezed states.
  We apply our formulation to show the existence of minimal uncertainty products for the Hagedorn wave packets, generalizing Hagedorn's one-dimensional result to multi-dimensions.
  The Hagedorn--Hermite correspondence also leads to an alternative derivation of the generating function for the Hagedorn wave packets based on the generating function for the Hermite functions.
  This result, in turn, reveals the relationship between the Hagedorn polynomials and the Hermite polynomials.
\end{abstract}

\maketitle

\section{Introduction}
\subsection{The Hagedorn Wave Packets and Generalized Squeezed States}
The Hagedorn wave packets $\{ \varphi^{\hbar}_{n} \}_{n \in \N_{0}^{d}} \subset L^{2}(\R^{d})$ with $\N_{0} \defeq \N \cup \{0\}$ are a set of wave functions with parameters, and have the following remarkable properties (see \citet{Ha1980, Ha1981, Hagedorn1985, Ha1998} and also \citet{Ro2007}): (i)~They are an orthonormal basis for $L^{2}(\R^{d})$ with associated ladder operators. (ii)~Each wave packet $\varphi^{\hbar}_{n}$ is an exact solution to the Schr\"odinger equation with quadratic Hamiltonians when the parameters evolve in time according to a certain set of ordinary differential equations. (iii)~By taking a certain linear combination of a finite subset of $\{ \varphi^{\hbar}_{n} \}_{n \in \N_{0}^{d}}$, one may construct an approximate solution---with an error of $O(\hbar^{N/2})$ for any $N \in \N$---to the Schr\"odinger equation with non-quadratic potentials with some regularity.

It goes without saying that these results give significant insights into solutions of the Schr\"odinger equation in the semiclassical regime $\hbar \ll 1$.
In recent years, the Hagedorn wave packets have been also implemented in numerical computations for the semiclassical Schr\"odinger equation; see, e.g., \citet{FaGrLu2009} and \citet{GrHa2014}.

Many of these theoretical and numerical studies take advantage of the key properties of the Hagedorn wave packets.
As one can see in the series of works of \citet{Ha1980, Ha1981, Hagedorn1985, Ha1998}, the Hagedorn wave packets share many properties with the Hermite functions, most notably the ladder operators discovered in \cite{Ha1998}, which are very useful in simplifying calculations and proofs involving the Hagedorn wave packets.

The relationship between Hagedorn wave packets and the Hermite functions is understood to some extent, but it is rather that those similarities are discovered by inspection case by case: Most of such properties of the Hagedorn wave packets have been proved by generalizing the proofs of the corresponding properties of Hermite functions.
However, these proofs tend to be cumbersome because the Hagedorn wave packets are significantly more complicated than the Hermite functions.

We note that the Hagedorn wave packets are known to be essentially equivalent to the so-called generalized squeezed states (see, e.g., \citet{CoRo2012} and references therein).
While the correspondence between the generalized squeezed states and the Hermite functions is well known, the construction and parametrization of the generalized squeezed states is rather intricate and less explicit, making them less amenable to applications.
The advantage of Hagedorn's approach is the simple and explicit expressions of the ladder operators, and the resulting explicit construction of the wave packets.
These explicit expressions are particularly beneficial in applications such as numerical implementations as mentioned above.

\subsection{Main Results}
Our main motivation is to establish the relationship between Hagedorn and Hermite.
More specifically, we reveal the exact correspondence between the ladder operators of Hagedorn and Hermite (Theorem~\ref{thm:ladder_operators} and Section~\ref{ssec:Hagedorn_ladder_operators}).
This leads to the correspondence between the Hagedorn wave packets and the Hermite functions (Theorem~\ref{thm:Hagedorn-Hermite}).
Furthermore, many properties of the Hagedorn wave packets follow naturally from the corresponding properties of the Hermite functions by exploiting this correspondence.

Our results are complementary to those in the generalized squeezed states literature.
The correspondence of the form in Theorem~\ref{thm:Hagedorn-Hermite} is well known for the generalized squeezed states.
Our contribution is to reveal the algebraic structure behind the Hagedorn--Hermite correspondence and elucidate the relative simplicity of Hagedorn's parametrization by stressing the role played by the symplectic group $\Sp(2d,\R)$.
This is in contrast to the very involved definition and parametrization of the metaplectic operator in the generalized squeezed states literature; see, e.g., \cite[Chapter~3]{CoRo2012}.

Particularly, our result reveals the so-called {\em symplectic covariance} (see, e.g., \citet{Go2011}) of the Hagedorn wave packets.
Symplectic covariance is particularly helpful in simplifying calculations involving metaplectic operators because it essentially turns those calculations involving metaplectic operators into matrix multiplications by the corresponding symplectic matrices.
In fact, it turns out that many of the known results regarding the Hagedorn wave packets turn out to be simple corollaries of some forms of symplectic covariance.

\subsection{Outline}
Section~\ref{sec:CCR_ladder_ops} starts off with a brief review of the set of operators that are written as linear combinations of the standard position and momentum operators.
The main result of this section, Theorem~\ref{thm:ladder_operators}, shows the necessary and sufficient condition for such a set of operators to be ladder operators.
We stress the role of the symplectic group $\Sp(2d,\R)$ in the construction and parametrization of the ladder operators.
The reader who is not familiar with the Heisenberg--Weyl and metaplectic operators may consult the brief review of them in Appendix~\ref{sec:HW_and_Mp} before Section~\ref{ssec:Transformations} as they play a critical role throughout the paper.

Section~\ref{sec:Hagedorn_wave_packets} applies the results from Section~\ref{sec:CCR_ladder_ops} to the setting of the Hagedorn wave packets.
In fact, it turns out that the ladder operators characterized in Theorem~\ref{thm:ladder_operators} (see also Section~\ref{ssec:Hagedorn_ladder_operators}) in terms of the symplectic group $\Sp(2d,\R)$ are essentially those of \citet{Ha1998}.
What follows immediately from it is the relationship between the ladder operators for Hagedorn and Hermite (Proposition~\ref{prop:As-as}); we note that \citet[Proposition~6]{LaTr2014} obtained an essentially the same result in a rather intricate manner.
This result is exploited to build a bridge between the ladders of the Hagedorn wave packets and the Hermite functions (Theorem~\ref{thm:Hagedorn-Hermite}).
We also prove symplectic covariance of the ladder operators and Hagedorn wave packets in this section.
These results yield some of the fundamental and essential results on the Hagedorn wave packets as simple corollaries.

In Section~\ref{sec:minimal_uncertainty}, we apply the approach developed in Section~\ref{sec:Hagedorn_wave_packets} to prove the existence of uncertainty products of the Hagedorn wave packets; this is a multi-dimensional generalization of the one-dimensional result of \citet{Ha2013}.

In Section~\ref{sec:generating_functions}, we obtain the generating function for the Hagedorn wave packets and those polynomials appearing in them (called the {\em Hagedorn polynomials} in this paper) again exploiting the results from Section~\ref{sec:Hagedorn_wave_packets}.
Such a generating function is obtained by \citet{DiKeTr2017} and \citet{Ha2015} using the recurrence relations and by direct calculations, respectively.
Our approach is different from them in the sense that the generating function for the Hagedorn wave packets is obtained directly from that for the Hermite functions; particularly, this is done in a manner that exactly parallels the Hagedorn--Hermite correspondence obtained in Theorem~\ref{thm:Hagedorn-Hermite}.
In other words, we reveal a simple relationship between the generating functions of Hagedorn and Hermite.
The Hagedorn--Hermite correspondence in terms of generating functions in turn yields (Corollary~\ref{cor:Hagedorn-Hermite_polynomials}) the relationship between the Hagedorn and Hermite polynomials as well.

Appendix~\ref{sec:HW_and_Mp} gives a quick review of the Heisenberg--Weyl and metaplectic operators, and Appendix~\ref{sec:Hermite} gives a summary of some known facts on the Hermite functions and Hermite polynomials.
The main purpose of these appendices is to set our notation as well as to include some key results to refer to in the main body in an effort to make the paper as self-contained as possible.

\section{Canonical Commutation Relations and Ladder Operators}
\label{sec:CCR_ladder_ops}
\subsection{Linear Transformations of Position \& Momentum Operators}
Let $\hat{z} = (\hat{x}, \hat{p})$ be the standard position and momentum operators on $L^{2}(\R^{d})$, i.e.,
\begin{equation}
  \label{eq:xhat_phat}
  \hat{x}_{j} f(x) \defeq x_{j} f(x),
  \qquad
  \hat{p}_{j} f(x) \defeq -\rmi\hbar \pd{}{x_{j}}f(x).
\end{equation}
One may then take the Schwartz space $\mathscr{S}(\R^{d})$ as the domain of any of these operators.
We now follow \citet[Section~7.3]{Go2011} (see also \citet[Section~7.1]{Go2006}) to first define the set of linear operators on the Hilbert space $L^{2}(\R^{d})$ that can be written as linear combinations of the $2d$ operators $\hat{z} = (\hat{x}, \hat{p})$ over the complex numbers\footnote{The original definition by \citet{Go2006,Go2011} is over real numbers $\R$.} $\C$, i.e.,
\begin{equation*}
  \Span_{\C}(\{\hat{z}\}) \defeq \setdef{ a \cdot \hat{x} + b \cdot \hat{p} = \sum_{j=1}^{d} (a_{j}\hat{x}_{j} + b_{k}\hat{p}_{k}) }{ a, b \in \C^{d} }.
\end{equation*}
Since each of the $2d$ operators $\hat{z} = (\hat{x}, \hat{p})$ is defined on the Schwartz space $\mathscr{S}(\R^{d})$, so is any linear combination; hence we may take $\mathscr{S}(\R^{d})$ as the domain of any operator in $\Span_{\C}(\{\hat{z}\})$.

Clearly $\Span_{\C}(\{\hat{z}\})$ is a vector space isomorphic to $\C^{2d}$.
We may then define the linear isomorphism
\begin{equation*}
  \varrho(\,\cdot\,; \hat{z}) \colon \C^{2d} \to \Span_{\C}(\{\hat{z}\});
  \qquad
  c = (c_{1}, c_{2}) \mapsto c_{1} \cdot \hat{p} - c_{2} \cdot \hat{x},
\end{equation*}
or equivalently, using the matrix
\begin{equation*}
  J \defeq
  \begin{bmatrix}
    0 & I_{d} \\
    -I_{d} & 0
  \end{bmatrix},
\end{equation*}
we have
\begin{equation}
  \label{eq:varrho}
  \varrho(c;\hat{z}) = c^{T} J \hat{z}.
\end{equation}
Alternatively, one may regard, with a slight abuse of notation, $\hat{z}$ as a vector in the symplectic vector space $T^{*}\R^{d} \cong \R^{2d}$ with the standard symplectic form $\Omega$ defined by $\Omega(v, w) = v^{T} J w$ for $v, w \in \R^{2d}$, and can rewrite the above expression in the following succinct form:
\begin{equation*}
  \varrho(c;\hat{z}) = \Omega(c, \hat{z}).
\end{equation*}

\begin{remark}
  \label{rem:why_J_in_varrho}
  Having $J$ in the definition~\eqref{eq:varrho} is crucial in making sure that $\varrho$ has the ``symplectic covariance'' property as we shall see in \eqref{eq:varrho-covariance} of Section~\ref{ssec:Transformations} below.
\end{remark}

Now let us extend this idea further to define a set of $2d$ operators, each of which belongs to $\Span_{\C}(\{\hat{z}\})$.
Specifically, let $\Mat_{2d}(\C)$ be the set of complex $2d \times 2d$ matrices, and define the homomorphism
\begin{subequations}
  \label{eq:rho}
  \begin{equation}
    \rho(\,\cdot\,; \hat{z}) \colon \Mat_{2d}(\C) \to \Span_{\C}(\{\hat{z}\})^{2d}
    = \underbrace{ \Span_{\C}(\{\hat{z}\}) \oplus \dots \oplus \Span_{\C}(\{\hat{z}\}) }_{\text{$2d$ copies}}
  \end{equation}
  as follows:
  \begin{equation}
    \rho(\mathcal{X}; \hat{z}) \defeq \mathcal{X}^{T} J \hat{z}.
  \end{equation}
  In particular, writing $\mathcal{X} =
  \begin{tbmatrix}
    \mathsf{A} & \mathsf{B} \smallskip\\
    \mathsf{C} & \mathsf{D}
  \end{tbmatrix}$ with $\mathsf{A}, \mathsf{B}, \mathsf{C}, \mathsf{D} \in \Mat_{d}(\C)$, we have
  \begin{equation}
    \rho\parentheses{
      \begin{bmatrix}
        \mathsf{A} & \mathsf{B} \smallskip\\
        \mathsf{C} & \mathsf{D}
      \end{bmatrix}
    } = 
    \begin{bmatrix}
      -\mathsf{C}^{T} & \mathsf{A}^{T} \smallskip\\
      -\mathsf{D}^{T} & \mathsf{B}^{T}
    \end{bmatrix}
    \begin{bmatrix}
      \hat{x} \smallskip\\
      \hat{p}
    \end{bmatrix}
    = 
    \begin{bmatrix}
      -\mathsf{C}^{T} \hat{x} + \mathsf{A}^{T} \hat{p} \smallskip\\
      -\mathsf{D}^{T} \hat{x} + \mathsf{B}^{T} \hat{p}
    \end{bmatrix}.
  \end{equation}
\end{subequations}
We mention in passing that a similar idea of defining such complex transformation is discussed in \citet{Wo1974}.

It turns out that it is convenient to group the resulting $2d$ operators into two---one consisting of the first $d$ operators and the other the rest of them---and so we may also define
\begin{equation*}
  \begin{array}{c}
    \rho^{\flat}(\,\cdot\,; \hat{z})\colon \Mat_{2d}(\C) \to \Span_{\C}(\{\hat{z}\})^{d};
    \qquad
    \rho^{\flat}(\mathcal{X}; \hat{z}) \defeq -\mathsf{C}^{T} \hat{x} + \mathsf{A}^{T} \hat{p},
    \smallskip\\
    \rho^{\sharp}(\,\cdot\,; \hat{z})\colon \Mat_{2d}(\C) \to \Span_{\C}(\{\hat{z}\})^{d};
    \qquad
    \rho^{\sharp}(\mathcal{X}; \hat{z}) \defeq -\mathsf{D}^{T} \hat{x} + \mathsf{B}^{T} \hat{p}.
  \end{array}
\end{equation*}
The motivation behind this grouping is that, as well shall see below, we will later characterize $\rho^{\flat}(\mathcal{X}; \hat{z})$ and $\rho^{\sharp}(\mathcal{X}; \hat{z})$ as lowering and raising operators, respectively, under a certain assumption on the matrix $\mathcal{X} \in \Mat_{2d}(\C)$.

\subsection{Symplectic Group $\Sp(2d,\R)$ and Ladder Operators}
So far we did not impose any additional assumptions on the matrix $\mathcal{X} \in \Mat_{2d}(\C)$.
In this subsection, we show the necessary and sufficient condition for the matrix $\mathcal{X}$ so that the set of operators $\rho(\mathcal{X}; \hat{z})$ defines ladder operators on $\mathscr{S}(\R^{d})$.

We first mention an auxiliary result when restricting $\mathcal{X}$ to $\Mat_{2d}(\R) \subset \Mat_{2d}(\C)$.
Let $\Sp(2d,\R)$ be the symplectic group of degree $2d$, i.e.,
\begin{subequations}
  \begin{equation*}
    \Sp(2d,\R) \defeq \setdef{ S \in \Mat_{2d}(\R) }{ S^{T} J S = J },
  \end{equation*}
  or equivalently, written as block matrices consisting of $d \times d$ submatrices, i.e., $S =
  \begin{tbmatrix}
    A & B \smallskip\\
    C & D
  \end{tbmatrix}$ with $A, B, C, D \in \Mat_{d}(\R)$, 
  \begin{equation}
    \label{def:Sp2dR-block}
    \Sp(2d,\R) \defeq
    \setdef{
      \begin{bmatrix}
        A & B \\
        C & D
      \end{bmatrix}
      \in \Mat_{2d}(\R)
    }{ A^{T}C = C^{T}A,\, B^{T}D = D^{T}B,\, A^{T}D - C^{T}B = I_{d} }.
  \end{equation}
\end{subequations}
Then it is straightforward to see the following:
\begin{proposition}[{\citet[Section~4.1]{Li1986}; see also \citet[Appendix~B]{Wo1974}}]
  \label{prop:Littlejohn}
  Let $\mathcal{X} \in \Mat_{2d}(\R)$ and $\hat{z}$ be $2d$ symmetric operators on $L^{2}(\R^{d})$ that satisfy the canonical commutation relations on $\mathscr{S}(\R^{d})$.
  Then the set of $2d$ operators $\rho(\mathcal{X}; \hat{z})$ defined in \eqref{eq:rho} are also symmetric operators on $L^{2}(\R^{d})$ that satisfy the canonical commutation relations on $\mathscr{S}(\R^{d})$ if and only if $\mathcal{X} \in \Sp(2d,\R)$.
\end{proposition}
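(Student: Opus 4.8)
The plan is to reduce the statement to a single matrix identity obtained by directly computing the commutators of the transformed operators. First I would dispose of the two properties that are \emph{not} the crux of the matter. By definition $\rho(\mathcal{X}; \hat{z}) = \mathcal{X}^{T} J \hat{z}$ is, for $\mathcal{X} \in \Mat_{2d}(\R)$, a collection of \emph{real} linear combinations of the symmetric operators $\hat{z}_{1}, \dots, \hat{z}_{2d}$; any real linear combination of symmetric operators is again symmetric, and since each $\hat{z}_{k}$ has domain $\mathscr{S}(\R^{d})$ and maps $\mathscr{S}(\R^{d})$ into itself, so does every finite linear combination of the $\hat{z}_{k}$. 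Hence both the symmetry requirement and property~(i) hold for \emph{every} real $\mathcal{X}$, and the entire content of the proposition is carried by property~(ii), the canonical commutation relations.

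To analyze~(ii), I would set $\hat{w} \defeq \rho(\mathcal{X}; \hat{z}) = M\hat{z}$ with $M \defeq \mathcal{X}^{T} J \in \Mat_{2d}(\R)$, so that $\hat{w}_{j} = \sum_{k} M_{jk}\hat{z}_{k}$. Using bilinearity of the commutator---legitimate on $\mathscr{S}(\R^{d})$ precisely because each $\hat{z}_{k}$ preserves the Schwartz space---together with the hypothesis $[\hat{z}_{l}, \hat{z}_{m}] = \rmi\hbar\,J_{lm}$, a short computation gives $[\hat{w}_{j}, \hat{w}_{k}] = \rmi\hbar\,(M J M^{T})_{jk}$.

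The next step is to simplify $M J M^{T}$ using the elementary identities $J^{T} = -J$ and $J^{2} = -I_{2d}$. Substituting $M = \mathcal{X}^{T} J$ and chasing the transposes yields $M J M^{T} = \mathcal{X}^{T} J J J^{T} \mathcal{X} = \mathcal{X}^{T} J \mathcal{X}$, so that $[\hat{w}_{j}, \hat{w}_{k}] = \rmi\hbar\,(\mathcal{X}^{T} J \mathcal{X})_{jk}$. Since $\hbar \neq 0$, the set $\hat{w}$ satisfies the canonical commutation relations~\eqref{eq:CCR} if and only if $\mathcal{X}^{T} J \mathcal{X} = J$, which is exactly the defining condition $\mathcal{X} \in \Sp(2d,\R)$. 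As this is an equivalence of the commutator entries with the entries of $\rmi\hbar J$, it delivers both directions of the claim simultaneously.

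The computation is routine and I do not expect a genuine obstacle; the only points requiring care are the bookkeeping of transposes in the chain $M J M^{T} = \mathcal{X}^{T} J \mathcal{X}$, and the (trivial but necessary) observation that the bilinearity manipulation of the commutators is justified because all operators involved act on the common invariant domain $\mathscr{S}(\R^{d})$.
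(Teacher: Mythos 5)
Your proposal is correct and takes essentially the same route as the paper: both dispose of symmetry and Schwartz-space invariance as immediate consequences of $\mathcal{X}$ being real, then reduce the commutation relations via bilinearity to the matrix identity $[\rho_{j}(\mathcal{X};\hat{z}), \rho_{k}(\mathcal{X};\hat{z})] = \rmi\hbar\,(\mathcal{X}^{T} J \mathcal{X})_{jk}$, which holds the entries of $\rmi\hbar J$ precisely when $\mathcal{X}^{T} J \mathcal{X} = J$, i.e., $\mathcal{X} \in \Sp(2d,\R)$. The only difference is that you spell out the transpose bookkeeping $M J M^{T} = \mathcal{X}^{T} J \mathcal{X}$ with $M = \mathcal{X}^{T} J$ (correctly, using $J^{T} = -J$ and $J^{2} = -I_{2d}$), which the paper compresses into ``straightforward calculations.''
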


\begin{proof}
  It is clear, for any $j \in \{1, \dots, 2d\}$, that $\rho(\mathcal{X}; \hat{z})_{j}$ is symmetric because $\mathcal{X} \in \Mat_{2d}(\R)$ and also that $\rho(\mathcal{X}; \hat{z})_{j} f \in \mathscr{S}(\R^{d})$ for any $f \in \mathscr{S}(\R^{d})$.
  Straightforward calculations yield
  \begin{align*}
    [\rho_{j}(\mathcal{X}; \hat{z}), \rho_{k}(\mathcal{X}; \hat{z})]
    &= [ (\mathcal{X}^{T} J \hat{z})_{j}, (\mathcal{X}^{T} J \hat{z})_{k} ]
    \\
    &= \rmi \hbar\, (\mathcal{X}^{T} J \mathcal{X})_{jk},
  \end{align*}
  But then $\mathcal{X}^{T}J\mathcal{X} = J$ if and only if $\mathcal{X} \in \Sp(2d,\R)$.
\end{proof}

\begin{remark}
  \citet[Section~4.1]{Li1986} does not have $J$ in the definition of $\rho$, but as far as this result is concerned, it is equivalent to the above.
  Having $J$ is important for us to maintain symplectic covariance of $\rho$ as alluded above; see Proposition~\ref{prop:rho-covariance} below.
  \citet[Appendix~B]{Wo1974} discusses more or less an equivalent result in terms of a {\em complex} matrix $\mathcal{X} \in \Mat_{2d}(\C)$, which is also related to Theorem~\ref{thm:ladder_operators} below.
\end{remark}

The goal of this subsection is to come up with a condition for $\mathcal{X} \in \Mat_{2d}(\C)$ so that the set of $2d$ operators $\rho(\mathcal{X}; \hat{z})$ gives ladder operators.
Let us first define what we mean by ladder operators:
\begin{definition}[Ladder operators on $\mathscr{S}(\R^{d})$]
  \label{def:ladder_operators}
  Let $\mathcal{X} \in \Mat_{2d}(\C)$ and $\hat{z}$ be $2d$ symmetric operators on $L^{2}(\R^{d})$ that satisfy the canonical commutation relations on $\mathscr{S}(\R^{d})$.
  We say that the $2d$ operators $\rho(\mathcal{X}; \hat{z})$ defined by \eqref{eq:rho} are {\em ladder operators} on $\mathscr{S}(\R^{d})$ if the following conditions are satisfied for any $j, k \in \{1, \dots, 2d\}$:
  \begin{enumerate}[(i)]
  \item $[\rho_{j}(\mathcal{X}; \hat{z}), \rho_{k}(\mathcal{X}; \hat{z})] = J_{jk}$ and
    \smallskip
  \item for any $f \in \mathscr{S}(\R^{d})$,
    \begin{subequations}
      \label{def:adjoint_condition}
      \begin{equation}
        \rho^{\sharp}_{j}(\mathcal{X}; \hat{z})f = \rho^{\flat}_{j}(\mathcal{X}; \hat{z})^{*}f,
      \end{equation}
      i.e., we have
      \begin{equation}
        \text{ $\rho^{\sharp}(\mathcal{X}; \hat{z}) = \rho^{\flat}(\mathcal{X}; \hat{z})^{*}$ on $\mathscr{S}(\R^{d})$ }.
      \end{equation}
    \end{subequations}
  \end{enumerate}
  More specifically, we call $\rho^{\flat}(\mathcal{X}; \hat{z})$ the {\em lowering operators} and $\rho^{\sharp}(\mathcal{X}; \hat{z})$ the {\em raising operators}.
\end{definition}

The ladder operators for the harmonic oscillator is a special example of the above definition.
To see this, let us first define the following set of matrices:
Let us define a unitary matrix $W \in \U(2d)$ by
\begin{equation*}
  W \defeq \frac{1}{\sqrt{2}}
  \begin{bmatrix}
    \rmi I_{d} & -\rmi I_{d} \\
    -I_{d} & -I_{d}
  \end{bmatrix}.
\end{equation*}
This is related to the unitary matrix
\begin{equation}
  \label{eq:mathcalW}
  \mathcal{W} \defeq
  \frac{1}{\sqrt{2}}
  \begin{bmatrix}
    I_{d} & \rmi I_{d} \\
    I_{d} & -\rmi I_{d}
  \end{bmatrix} 
  \in \U(2d)
\end{equation}
defined in \citet[Eq.~(4.12) on p.~174]{Fo1989} in the following way:
\begin{equation}
  \label{eq:W-mathcalW}
  \mathcal{W} = W^{T} J.
\end{equation}
We may also incorporate the small parameter $\hbar$ by defining
\begin{equation}
  \label{eq:W_hbar}
  W_{\hbar} \defeq \frac{1}{\sqrt{\hbar}} W
  = \frac{1}{\sqrt{2\hbar}}
  \begin{bmatrix}
    \rmi I_{d} & -\rmi I_{d} \\
    -I_{d} & -I_{d}
  \end{bmatrix},
\end{equation}
It is easy to see that $W_{\hbar}$ satisfies
\begin{equation*}
  W_{\hbar}^{T} J W_{\hbar} = -\frac{\rmi}{\hbar} J.
\end{equation*}
Similarly, we may define
\begin{equation*}
  \mathcal{W}_{\hbar} \defeq \frac{1}{\sqrt{\hbar}} \mathcal{W}
  = \frac{1}{\sqrt{2\hbar}}
  \begin{bmatrix}
    I_{d} & \rmi I_{d} \\
    I_{d} & -\rmi I_{d}
  \end{bmatrix}
\end{equation*}
so that 
\begin{equation*}
  \mathcal{W}_{\hbar} = W_{\hbar}^{T} J.
\end{equation*}

Now it is easy to see that the ladder operators for the harmonic oscillator are defined in terms of $\rho$ and the above matrices:
\begin{example}[Ladder operators for harmonic oscillator]
  \label{ex:harmonic_oscillator}
  Let $\hat{z} = (\hat{x},\hat{p})$ be the standard position and momentum operators.
  Set $\mathcal{X} = W_{\hbar}$ and define
  \begin{equation}
    \label{def:as}
    \begin{bmatrix}
      \hat{a} \smallskip\\
      \hat{a}^{*}
    \end{bmatrix}
    \defeq
    \begin{bmatrix}
      \rho^{\flat}(W_{\hbar}; \hat{z}) \smallskip\\
      \rho^{\sharp}(W_{\hbar}; \hat{z})
    \end{bmatrix}
    = \rho(W_{\hbar}; \hat{z})
    = W_{\hbar}^{T} J \hat{z}
    = \mathcal{W}_{\hbar}\, \hat{z}.
  \end{equation}
  This yields the ladder operators for the harmonic oscillator:
  \begin{equation*}
    \hat{a} \defeq \rho^{\flat}(W_{\hbar}; \hat{z}) = \frac{1}{\sqrt{2\hbar}} (\hat{x} + \rmi\,\hat{p}),
    \qquad
    \hat{a}^{*} \defeq \rho^{\sharp}(W_{\hbar}; \hat{z}) = \frac{1}{\sqrt{2\hbar}} (\hat{x} - \rmi\,\hat{p}).
  \end{equation*}
  They clearly satisfy the conditions in Definition~\ref{def:ladder_operators} and thus define ladder operators on $\mathscr{S}(\R^{d})$.
\end{example}

It turns out that, for any symplectic matrix $S \in \Sp(2d,\R)$, the set of $2d$ operators $\rho(S W_{\hbar}; \hat{z})$ also defines ladder operators; furthermore, conversely, any set $\rho(\mathcal{X}; \hat{z})$ of ladder operators in the sense of Definition~\ref{def:ladder_operators} can be written as $\rho(S W_{\hbar}; \hat{z})$ with some $S \in \Sp(2d,\R)$:
\begin{theorem}
  \label{thm:ladder_operators}
  Let $\hat{z}$ be $2d$ symmetric operators on $L^{2}(\R^{d})$ that satisfy the canonical commutation relations on $\mathscr{S}(\R^{d})$.
  \begin{enumerate}[(i)]
  \item \label{thm:ladder_operators-i}
    The $2d$ operators $\rho(\mathcal{X}; \hat{z})$ with $\mathcal{X} \in \Mat_{2d}(\C)$ are ladder operators on $\mathscr{S}(\R^{d})$ if and only if $\mathcal{X} = S W_{\hbar}$ with $S \in \Sp(2d,\R)$.
  \item \label{thm:ladder_operators-ii}
    With a symplectic matrix $S = 
    \begin{tbmatrix}
      A & B \smallskip\\
      C & D
    \end{tbmatrix} \in \Sp(2d,\R)$, the ladder operators $\rho(S W_{\hbar}; \hat{z})$ take the form
    \begin{equation}
      \label{eq:ladder_operators}
      \rho(S W_{\hbar}; \hat{z})
      = \frac{1}{\sqrt{\hbar}} \mathcal{W} S^{-1} \hat{z}.
    \end{equation}
    Specifically, the lowering and raising operators are given by
    \begin{subequations}
      \label{eq:lowering_and_raising_operators}
      \begin{align}
        \rho^{\flat}(S W_{\hbar}; \hat{z})  &= -\frac{\rmi}{\sqrt{2\hbar}} \brackets{ (C + \rmi D)^{T} \hat{x} - (A + \rmi B)^{T} \hat{p} }, \\
        \rho^{\sharp}(S W_{\hbar}; \hat{z}) &= \frac{\rmi}{\sqrt{2\hbar}} \brackets{ (C - \rmi D)^{T} \hat{x} - (A - \rmi B)^{T} \hat{p} },
      \end{align}
    \end{subequations}
    respectively.
  \end{enumerate}
\end{theorem}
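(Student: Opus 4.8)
The plan is to reduce both defining conditions of ladder operators (Definition~\ref{def:ladder_operators}) to purely algebraic conditions on the complex matrix $\mathcal{X}$, and then to recognize those conditions as saying exactly that $\mathcal{X}=SW_{\hbar}$ with $S\in\Sp(2d,\R)$. I would begin with the commutator condition~(i). The commutator computed in the proof of Proposition~\ref{prop:Littlejohn} is $\C$-bilinear, so it carries over verbatim to complex $\mathcal{X}\in\Mat_{2d}(\C)$, giving $[\rho_{j}(\mathcal{X};\hat{z}),\rho_{k}(\mathcal{X};\hat{z})]=\rmi\hbar\,(\mathcal{X}^{T}J\mathcal{X})_{jk}$. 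Hence condition~(i) of Definition~\ref{def:ladder_operators} is equivalent to $\mathcal{X}^{T}J\mathcal{X}=-(\rmi/\hbar)J$, which is precisely $\mathcal{X}^{T}J\mathcal{X}=W_{\hbar}^{T}JW_{\hbar}$.

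Next I would translate the adjoint condition~(ii). Writing $\mathcal{X}=\begin{tbmatrix}\mathsf{A}&\mathsf{B}\\\mathsf{C}&\mathsf{D}\end{tbmatrix}$ and using that each $\hat{z}_{j}$ is symmetric, the formal adjoint of a complex combination $\sum_{j}c_{j}\hat{z}_{j}$ is $\sum_{j}\bar{c}_{j}\hat{z}_{j}$, so that $\rho^{\flat}(\mathcal{X};\hat{z})^{*}=-\bar{\mathsf{C}}^{T}\hat{x}+\bar{\mathsf{A}}^{T}\hat{p}$. Since $\Span_{\C}(\{\hat{z}\})\cong\C^{2d}$, equality of two such combinations forces equality of coefficients; therefore condition~(ii), namely $\rho^{\sharp}(\mathcal{X};\hat{z})=\rho^{\flat}(\mathcal{X};\hat{z})^{*}$, is equivalent to the block relations $\mathsf{B}=\bar{\mathsf{A}}$ and $\mathsf{D}=\bar{\mathsf{C}}$.

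With these two reductions I would prove the equivalence in~(i) by setting $S\defeq\mathcal{X}W_{\hbar}^{-1}$, so that $\mathcal{X}=SW_{\hbar}$. For the forward implication, substituting $\mathcal{X}=SW_{\hbar}$ into $\mathcal{X}^{T}J\mathcal{X}$ and using $S^{T}JS=J$ recovers condition~(i), while the reality of $S$ makes the blocks of $SW_{\hbar}$ satisfy $\mathsf{B}=\bar{\mathsf{A}}$, $\mathsf{D}=\bar{\mathsf{C}}$, giving condition~(ii). For the converse, condition~(i) gives $W_{\hbar}^{T}(S^{T}JS)W_{\hbar}=W_{\hbar}^{T}JW_{\hbar}$, whence $S^{T}JS=J$ by invertibility of $W_{\hbar}$; it then remains only to show that $S$ is real. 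This is the step I expect to carry the weight of the argument, and it is exactly where condition~(ii) must be used. The cleanest route I see is to observe that both $\mathcal{X}$ (by $\mathsf{B}=\bar{\mathsf{A}}$, $\mathsf{D}=\bar{\mathsf{C}}$) and $W_{\hbar}$ satisfy $\overline{M}=M\Sigma$ with $\Sigma\defeq\begin{tbmatrix}0&I_{d}\\I_{d}&0\end{tbmatrix}$, so that $\overline{S}=\overline{\mathcal{X}}\,(\overline{W_{\hbar}})^{-1}=(\mathcal{X}\Sigma)(W_{\hbar}\Sigma)^{-1}=\mathcal{X}W_{\hbar}^{-1}=S$ because $\Sigma^{2}=I_{2d}$; hence $S$ is real and $S\in\Sp(2d,\R)$. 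A direct computation of the blocks of $S=\mathcal{X}W_{\hbar}^{-1}$ in terms of $\Re\mathsf{A},\Im\mathsf{A},\Re\mathsf{C},\Im\mathsf{C}$ works equally well.

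Finally, part~(ii) is a direct computation. For~\eqref{eq:ladder_operators} I would write $\rho(SW_{\hbar};\hat{z})=W_{\hbar}^{T}S^{T}J\hat{z}$ and use the symplectic identity $S^{T}J=JS^{-1}$ (immediate from $S^{T}JS=J$) together with $W_{\hbar}^{T}J=\mathcal{W}_{\hbar}=\mathcal{W}/\sqrt{\hbar}$. For the explicit formulas~\eqref{eq:lowering_and_raising_operators}, I would multiply out $W_{\hbar}^{T}S^{T}J\hat{z}$ using the block form of $S$ and $J\hat{z}=(\hat{p},-\hat{x})$, reading off the top $d$ and bottom $d$ entries as the lowering and raising operators, respectively; regrouping the coefficients into $(C\pm\rmi D)^{T}$ and $(A\pm\rmi B)^{T}$ then yields the stated expressions.
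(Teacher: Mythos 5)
Your proposal is correct, and your treatment of sufficiency and of part~(ii) coincides with the paper's: the same reduction of the commutator axiom to $\mathcal{X}^{T}J\mathcal{X} = -\frac{\rmi}{\hbar}J$ via the bilinearity of the computation in Proposition~\ref{prop:Littlejohn}, the same identification of the adjoint axiom with $\mathsf{B}=\overline{\mathsf{A}}$, $\mathsf{D}=\overline{\mathsf{C}}$, and the same chain $\rho(SW_{\hbar};\hat{z}) = W_{\hbar}^{T}S^{T}J\hat{z} = \frac{1}{\sqrt{\hbar}}\mathcal{W}S^{-1}\hat{z}$ using $S^{T}J = JS^{-1}$ and $\mathcal{W}=W^{T}J$ (the paper expands $\mathcal{W}S^{-1}$ via $S^{-1}=-JS^{T}J$ rather than multiplying out $S^{T}J\hat{z}$ directly, a cosmetic difference). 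Where you genuinely diverge is the necessity step, which is the heart of the theorem. The paper splits $\mathsf{A}=A_{1}+\rmi A_{2}$, $\mathsf{C}=C_{1}+\rmi C_{2}$, extracts four real matrix identities from the commutator and adjoint conditions, and then \emph{exhibits} $S = \sqrt{2\hbar}\begin{tbmatrix} A_{2} & -A_{1} \smallskip\\ C_{2} & -C_{1}\end{tbmatrix}$ with $\mathcal{X}=SW_{\hbar}$, checking the block characterization~\eqref{def:Sp2dR-block} of $\Sp(2d,\R)$ against those identities. You instead define $S \defeq \mathcal{X}W_{\hbar}^{-1}$ abstractly, obtain complex symplecticity $S^{T}JS=J$ immediately by cancelling the invertible $W_{\hbar}$, and prove reality of $S$ by the antilinear intertwining $\overline{M}=M\Sigma$ with $\Sigma = \begin{tbmatrix} 0 & I_{d} \smallskip\\ I_{d} & 0\end{tbmatrix}$, which both $\mathcal{X}$ (by the adjoint condition) and $W_{\hbar}$ satisfy, so that $\overline{S}=(\mathcal{X}\Sigma)(W_{\hbar}\Sigma)^{-1}=S$ since $\Sigma^{2}=I_{2d}$; I checked the intertwining for both matrices and it holds. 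Your route is shorter and conceptually cleaner---it decouples the two axioms neatly (commutator $\leftrightarrow$ symplectic over $\C$, adjoint $\leftrightarrow$ reality) and avoids all componentwise computation---whereas the paper's computation buys an explicit formula for $S$ in terms of the real and imaginary parts of the blocks of $\mathcal{X}$, which is what later feeds the $(Q,P)$ parametrization~\eqref{def:Sp2d-block-Lubich} of the Hagedorn ladder operators. One small point worth making explicit in your write-up: reading off $\mathsf{B}=\overline{\mathsf{A}}$, $\mathsf{D}=\overline{\mathsf{C}}$ from the operator identity $\rho^{\sharp}=(\rho^{\flat})^{*}$ uses the linear independence of $\hat{z}_{1},\dots,\hat{z}_{2d}$ in $\Span_{\C}(\{\hat{z}\})$; this is asserted in the paper ($\Span_{\C}(\{\hat{z}\})\cong\C^{2d}$) and in fact follows from the canonical commutation relations themselves, since $\sum_{j}c_{j}\hat{z}_{j}=0$ forces $c^{T}J=0$ upon commuting with each $\hat{z}_{k}$.
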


\begin{proof}
  First recall from the proof of Proposition~\ref{prop:Littlejohn} that
  \begin{equation*}
    [\rho(\mathcal{X}; \hat{z})_{j}, \rho(\mathcal{X}; \hat{z})_{k}]
    = \rmi \hbar\, (\mathcal{X}^{T} J \mathcal{X})_{jk}.
  \end{equation*}
  This implies that
  \begin{equation*}
    [\rho_{j}(\mathcal{X}; \hat{z}), \rho_{k}(\mathcal{X}; \hat{z})] = J_{jk}
    \iff
    \mathcal{X}^{T} J \mathcal{X} = -\frac{\rmi}{\hbar} J.
  \end{equation*}

  Let us first prove the sufficiency in \eqref{thm:ladder_operators-i} and also \eqref{thm:ladder_operators-ii}.
  It is a straightforward calculation to check that the above relationship holds with $\mathcal{X} = W_{\hbar}$, i.e.,
  \begin{equation*}
    W_{\hbar}^{T} J W_{\hbar} = -\frac{\rmi}{\hbar} J
  \end{equation*}
  as mentioned above.
  Therefore, by setting $\mathcal{X} = S W_{\hbar}$ with any $S \in \Sp(2d,\R)$, we have
  \begin{equation*}
    \mathcal{X}^{T} J \mathcal{X} = W_{\hbar}^{T} S^{T} J S W_{\hbar} = W_{\hbar}^{T} J W_{\hbar} = -\frac{\rmi}{\hbar} J.
  \end{equation*}
  Moreover, it is easy to see that $\rho^{\sharp}(S W_{\hbar}; \hat{z}) = \rho^{\flat}(S W_{\hbar}; \hat{z})^{*}$ as well:
  First notice that
  \begin{equation*}
    \rho(S W_{\hbar}; \hat{z})
    = (S W_{\hbar})^{T} J \hat{z}
    = \frac{1}{\sqrt{\hbar}} (S W)^{T} J \hat{z}
    = \frac{1}{\sqrt{\hbar}} \mathcal{W} S^{-1} \hat{z},
  \end{equation*}
  where we used the following equality: Using $S^{T} J S = J \iff S^{T} J = J S^{-1}$ and \eqref{eq:W-mathcalW},
  \begin{equation*}
    (S W)^{T} J = W^{T} S^{T} J = W^{T} J S^{-1} = \mathcal{W} S^{-1}.
  \end{equation*}
  By writing $S =
  \begin{tbmatrix}
    A & B \smallskip\\
    C & D
  \end{tbmatrix}$, we have $S^{-1} = - J S^{T} J = 
  \begin{tbmatrix}
    D^{T} & -B^{T} \smallskip\\
    -C^{T} & A^{T}
  \end{tbmatrix}$ and so
  \begin{equation*}
    \mathcal{W} S^{-1} = \frac{1}{\sqrt{2}}
    \begin{bmatrix}
      D^{T} - \rmi C^{T} & -B^{T} + \rmi A^{T} \smallskip\\
      D^{T} + \rmi C^{T} & -B^{T} - \rmi A^{T}
     \end{bmatrix}
     = \frac{\rmi}{\sqrt{2}}
    \begin{bmatrix}
      - (C + \rmi D)^{T} & (A + \rmi B)^{T} \smallskip\\
       (C - \rmi D)^{T} & -(A - \rmi B)^{T}
     \end{bmatrix}.
  \end{equation*}
  Therefore,
  \begin{equation*}
    \rho(S W_{\hbar}; \hat{z}) =
    \begin{bmatrix}
      \rho^{\flat}(S W_{\hbar}; \hat{z}) \smallskip\\
      \rho^{\sharp}(S W_{\hbar}; \hat{z}) 
    \end{bmatrix}
    = \frac{\rmi}{\sqrt{2\hbar}}
    \begin{bmatrix}
      -(C + \rmi D)^{T} & (A + \rmi B)^{T} \smallskip\\
      (C - \rmi D)^{T} & -(A - \rmi B)^{T}
    \end{bmatrix}
    \begin{bmatrix}
      \hat{x} \smallskip\\
      \hat{p}
    \end{bmatrix}.
  \end{equation*}
  and so we see that $\rho^{\sharp}(S W_{\hbar}; \hat{z}) = \rho^{\flat}(S W_{\hbar}; \hat{z})^{*}$.
  This proves the sufficiency in \eqref{thm:ladder_operators-i} as well as \eqref{thm:ladder_operators-ii}.

  For the necessity in \eqref{thm:ladder_operators-i}, let us first set $\mathcal{X} = \begin{tbmatrix}
    \mathsf{A} & \mathsf{B} \smallskip\\
    \mathsf{C} & \mathsf{D}
  \end{tbmatrix}$ with $\mathsf{A}, \mathsf{B}, \mathsf{C}, \mathsf{D} \in \Mat_{d}(\C)$.
  Then
  \begin{align*}
    \rho^{\flat}(\mathcal{X}; \hat{z}) &= -\mathsf{C}^{T} \hat{x} + \mathsf{A}^{T} \hat{p}, \\
    \rho^{\sharp}(\mathcal{X}; \hat{z}) &= -\mathsf{D}^{T} \hat{x} + \mathsf{B}^{T} \hat{p},
  \end{align*}
  and so the condition~\eqref{def:adjoint_condition} on the adjoints, i.e., $\rho^{\sharp}(\mathcal{X}; \hat{z}) = \rho^{\flat}(\mathcal{X}; \hat{z})^{*}$ on $\mathscr{S}(\R^{d})$, implies that
  \begin{equation}
    \label{eq:adjoint_condition}
    \mathsf{B} = \overline{\mathsf{A}}
    \quad\text{and}\quad
    \mathsf{D} = \overline{\mathsf{C}}.
  \end{equation}
  Also recall from above that $[\rho_{j}(\mathcal{X}; \hat{z}), \rho_{k}(\mathcal{X}; \hat{z})] = J_{jk}$ is equivalent to $\mathcal{X}^{T} J \mathcal{X} = -\frac{\rmi}{\hbar} J$; but then this in turn is equivalent to the following conditions on the block components:
  \begin{gather}
    \mathsf{A}^{T} \mathsf{C} = \mathsf{C}^{T} \mathsf{A}, \qquad
    \mathsf{B}^{T} \mathsf{D} = \mathsf{D}^{T} \mathsf{B}, \label{eq:ladder_condition-1}\\
    \mathsf{A}^{T} \mathsf{D} - \mathsf{C}^{T} \mathsf{B} = -\frac{\rmi}{\hbar} I_{d}. \label{eq:ladder_condition-2}
  \end{gather}
  The second equation in \eqref{eq:ladder_condition-1} is equivalent to the first one due to \eqref{eq:adjoint_condition}, and so is redundant here.
  Now, writing $\mathsf{A} = A_{1} + \rmi A_{2}$ and $\mathsf{C} = C_{1} + \rmi C_{2}$ with $A_{j}, C_{j} \in \Mat_{d}(\R)$ for $j = 1, 2$, the first equation in \eqref{eq:ladder_condition-1} is equivalent to
  \begin{gather}
    A_{1}^{T} C_{1} - C_{1}^{T} A_{1} = A_{2}^{T} C_{2} - C_{2}^{T} A_{2}, \label{eq:A_C-1} \\
    A_{1}^{T} C_{2} - C_{1}^{T} A_{2} = C_{2}^{T} A_{1} - A_{2}^{T} C_{1}, \label{eq:A_C-2}
  \end{gather}
  whereas \eqref{eq:ladder_condition-2} combined with \eqref{eq:adjoint_condition} gives
  \begin{gather}
    A_{1}^{T} C_{1} - C_{1}^{T} A_{1} = -(A_{2}^{T} C_{2} - C_{2}^{T} A_{2}), \label{eq:A_C-3} \\
    (A_{1}^{T} C_{2} - C_{1}^{T} A_{2}) + (C_{2}^{T} A_{1} - A_{2}^{T} C_{1}) = \frac{1}{\hbar} I_{d}. \label{eq:A_C-4}
  \end{gather}
  Now \eqref{eq:A_C-1} and \eqref{eq:A_C-3} together imply
  \begin{equation}
    \label{eq:A_C-symplecticity-1}
    A_{1}^{T} C_{1} = C_{1}^{T} A_{1}
    \quad\text{and}\quad
    A_{2}^{T} C_{2} = C_{2}^{T} A_{2},
  \end{equation}
  whereas \eqref{eq:A_C-2} and \eqref{eq:A_C-4} give
  \begin{equation}
    \label{eq:A_C-symplecticity-2}
    2\hbar( C_{2}^{T} A_{1} - A_{2}^{T} C_{1} ) = I_{d}.
  \end{equation}
  To conclude the proof, notice that \eqref{eq:adjoint_condition} implies that the matrix $\mathcal{X} \in \Mat_{2d}(\C)$ takes the following form:
  \begin{equation*}
    \mathcal{X} =
    \begin{bmatrix}
      \mathsf{A} & \mathsf{B} \smallskip\\
      \mathsf{C} & \mathsf{D}
    \end{bmatrix}
     =
    \begin{bmatrix}
      \mathsf{A} & \overline{\mathsf{A}} \smallskip\\
      \mathsf{C} & \overline{\mathsf{C}}
    \end{bmatrix}
    = \sqrt{2\hbar}
    \begin{bmatrix}
      A_{2} & -A_{1} \smallskip\\
      C_{2} & -C_{1}
    \end{bmatrix}
    \frac{1}{\sqrt{2\hbar}}
    \begin{bmatrix}
      \rmi I_{d} & -\rmi I_{d} \smallskip\\
      -I_{d} & -I_{d}
    \end{bmatrix}
    = S W_{\hbar},
  \end{equation*}
  where we set $S = \sqrt{2\hbar}
  \begin{tbmatrix}
    A_{2} & -A_{1} \smallskip\\
    C_{2} & -C_{1}
  \end{tbmatrix}$ and $W_{\hbar}$ is defined in \eqref{eq:W_hbar}; but then the conditions \eqref{eq:A_C-symplecticity-1} and \eqref{eq:A_C-symplecticity-2} imply that $S \in \Sp(2d,\R)$ in view of \eqref{def:Sp2dR-block}.
\end{proof}

\subsection{Transformations under the Heisenberg--Weyl and Metaplectic Operators}
\label{ssec:Transformations}
It turns out that an operator of the form $\varrho(c;\hat{z})$ defined in \eqref{eq:varrho} transforms rather nicely under the Heisenberg--Weyl and metaplectic operators (see Appendix~\ref{sec:HW_and_Mp} for a brief review of these operators), and as a result, so does $\rho(\mathcal{X}; \hat{z})$ defined in \eqref{eq:rho}.

Recall (see Section~\ref{ssec:HW_operator}) that the Heisenberg--Weyl operator~\eqref{eq:That} transforms the standard position and momentum operators $\hat{z}$ as follows:
\begin{equation*}
  \widehat{T}_{z}\,\hat{z}\,\widehat{T}_{z}^{*} = \hat{z} - z.
\end{equation*}
Then it follows easily from the definition~\eqref{eq:varrho} of $\varrho$ that, for any $c \in \C^{2d}$ and $j \in \{1, \dots, 2d\}$, we also have
\begin{equation*}
  \widehat{T}_{z}\,\varrho(c; \hat{z})_{j}\,\widehat{T}_{z}^{*} = \rho(c; \hat{z} - z)_{j}.
\end{equation*}
A similar property holds with the metaplectic operators as well.
Namely, we have the following symplectic covariance property (see, e.g., \citet[Section~7.3.1]{Go2011}) alluded in Remark~\ref{rem:why_J_in_varrho}:
For any metaplectic operator $\widehat{S} \in \Mp(2d,\R)$ with $S = \pi_{\Mp}(\widehat{S}) \in \Sp(2d,\R)$,
\begin{equation}
  \label{eq:varrho-covariance}
  \widehat{S} \varrho(c; \hat{z}) \widehat{S}^{*} = \varrho(S c; \hat{z}).
\end{equation}

\begin{remark}
  As mentioned in Remark~\ref{rem:why_J_in_varrho}, having $J$ in the definition~\eqref{eq:varrho} of $\varrho$ is critical because defining, e.g., $\tilde{\varrho}(c; \hat{z}) = c^{T} \hat{z}$ without the matrix $J$ results in violating the symplectic covariance.
  This is easy to see by checking that, e.g., $\widehat{M}_{L} \tilde{\varrho}(c; \hat{z}) \widehat{M}_{L}^{*} \neq \tilde{\varrho}(M_{L} c; \hat{z})$, where $\widehat{M}_{L}$ is one of the generators of $\Mp(2d,\R)$ defined in \eqref{eq:Mhat2} of Section~\ref{ssec:Mp}.
\end{remark}

These properties of transformations of operators $\varrho(c; \hat{z})$ can be easily extended to those $\rho(\mathcal{X}; \hat{z})$ defined in \eqref{eq:rho}:

\begin{proposition}
  \label{prop:rho-covariance}
  Let $z \in T^{*}\R^{d}$ and $\widehat{T}_{z}$ be the corresponding Heisenberg operator~\eqref{eq:That}, and $\widehat{S} \in \Mp(2d,\R)$ be a metaplectic operator corresponding to $S \in \Sp(2d,\R)$, i.e., $\pi_{\Mp}(\widehat{S}) = S$.
  Then, for any $\mathcal{X} \in \Mat_{2d}(\C)$,
  \begin{equation*}
    \widehat{T}_{z}\, \rho(\mathcal{X}; \hat{z})\, \widehat{T}_{z}^{*} = \rho(\mathcal{X}; \hat{z} - z),
    \qquad
    \widehat{S}\, \rho(\mathcal{X}; \hat{z})\, \widehat{S}^{*} = \rho(S\mathcal{X}; \hat{z}),
  \end{equation*}
  i.e., for any $j \in \{1, \dots, d\}$, we have
  \begin{equation*}
    \widehat{T}_{z}\, \rho_{j}(\mathcal{X}; \hat{z})\, \widehat{T}_{z}^{*} = \rho_{j}(\mathcal{X}; \hat{z} - z),
    \qquad
    \widehat{S}\, \rho_{j}(\mathcal{X}; \hat{z})\, \widehat{S}^{*} = \rho_{j}(S\mathcal{X}; \hat{z}).
  \end{equation*}
\end{proposition}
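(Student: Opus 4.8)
The plan is to reduce both identities to the corresponding single-operator statements for $\varrho$ that are already recorded in the excerpt, by peeling off the columns of $\mathcal{X}$ one at a time. The starting observation is that $\rho(\mathcal{X};\hat{z}) = \mathcal{X}^{T} J \hat{z}$ is a column of $2d$ operators whose $j$th entry $\rho_{j}(\mathcal{X};\hat{z}) = \sum_{k} (\mathcal{X}^{T} J)_{jk}\hat{z}_{k}$ is a $\C$-linear combination of $\hat{z}_{1}, \dots, \hat{z}_{2d}$. A one-line computation identifies this entry with $\varrho$ evaluated on the $j$th column of $\mathcal{X}$: writing $e_{j}$ for the $j$th standard basis vector, the $j$th row of $\mathcal{X}^{T} J$ is $e_{j}^{T}\mathcal{X}^{T} J = (\mathcal{X} e_{j})^{T} J$, so that
\[
\rho_{j}(\mathcal{X};\hat{z}) = (\mathcal{X} e_{j})^{T} J \hat{z} = \varrho(\mathcal{X} e_{j}; \hat{z}).
\]

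With this reduction in hand the metaplectic identity is immediate. I would apply the symplectic covariance \eqref{eq:varrho-covariance} of $\varrho$ with the vector $c = \mathcal{X} e_{j}$:
\[
\widehat{S}\, \rho_{j}(\mathcal{X};\hat{z})\, \widehat{S}^{*} = \widehat{S}\, \varrho(\mathcal{X} e_{j}; \hat{z})\, \widehat{S}^{*} = \varrho(S \mathcal{X} e_{j}; \hat{z}) = \varrho((S\mathcal{X}) e_{j}; \hat{z}) = \rho_{j}(S\mathcal{X};\hat{z}),
\]
where the final equality is the column identification applied to the matrix $S\mathcal{X}$, since $(S\mathcal{X}) e_{j}$ is precisely its $j$th column. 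Letting $j$ range over $\{1,\dots,2d\}$ yields $\widehat{S}\,\rho(\mathcal{X};\hat{z})\,\widehat{S}^{*} = \rho(S\mathcal{X};\hat{z})$.

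For the Heisenberg--Weyl identity I would argue in the same fashion, using the transformation rule $\widehat{T}_{z}\,\varrho(c;\hat{z})\,\widehat{T}_{z}^{*} = \varrho(c;\hat{z}-z)$ recorded just before the proposition in place of \eqref{eq:varrho-covariance}. Equivalently, and perhaps more transparently, conjugation by $\widehat{T}_{z}$ is a $\C$-linear operation, so it passes through the (possibly complex) coefficients $(\mathcal{X}^{T}J)_{jk}$ and acts only on the $\hat{z}_{k}$; combined with $\widehat{T}_{z}\,\hat{z}\,\widehat{T}_{z}^{*} = \hat{z}-z$ this gives $\widehat{T}_{z}\,\rho_{j}(\mathcal{X};\hat{z})\,\widehat{T}_{z}^{*} = \sum_{k}(\mathcal{X}^{T}J)_{jk}(\hat{z}_{k}-z_{k}) = \rho_{j}(\mathcal{X};\hat{z}-z)$.

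The computations here are entirely routine, so there is no genuine obstacle; the only point deserving care is the verification that conjugation commutes with the complex scalar entries of $\mathcal{X}$, that is, that the maps $\hat{A}\mapsto \widehat{T}_{z}\hat{A}\widehat{T}_{z}^{*}$ and $\hat{A}\mapsto \widehat{S}\hat{A}\widehat{S}^{*}$ are $\C$-linear and not merely $\R$-linear. This is clear because each is conjugation by a fixed operator, so any scalar multiple factors out unchanged. Once this is noted, both covariance identities follow by the column-wise bookkeeping above from the already-established single-operator versions for $\varrho$.
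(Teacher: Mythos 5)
Your proof is correct and follows essentially the same route as the paper: the author likewise observes that $\rho_{j}(\mathcal{X};\hat{z}) = \varrho(\mathcal{X}_{*j};\hat{z})$ for each column $\mathcal{X}_{*j} = \mathcal{X}e_{j}$ and then applies the single-operator covariance identities for $\varrho$ columnwise, using $(S\mathcal{X})_{*j} = S\mathcal{X}_{*j}$ exactly as you do. Your added remark on the $\C$-linearity of conjugation is a harmless extra precaution that the paper leaves implicit.
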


\begin{proof}
  Follows easily from the above transformation formulas for $\varrho$ since, writing $\mathcal{X}$ in terms of column vectors, i.e., $\mathcal{X} = [\mathcal{X}_{*1}\,|\, \dots \,|\, \mathcal{X}_{*d}]$, we have
  \begin{equation*}
    \rho(\mathcal{X}; \hat{z}) =
    \begin{bmatrix}
      \varrho(\mathcal{X}_{*1}; \hat{z}) \\
      \vdots \\
      \varrho(\mathcal{X}_{*d}; \hat{z})
    \end{bmatrix},
  \end{equation*}
  i.e., $\rho_{j}(\mathcal{X}; \hat{z}) = \varrho(\mathcal{X}_{*j}; \hat{z})$.
  Therefore, we have
  \begin{align*}
    \widehat{S}\,\rho_{j}(\mathcal{X}; \hat{z})\,\widehat{S}^{*} &= \widehat{S}\,\varrho(\mathcal{X}_{*j}; \hat{z})\,\widehat{S}^{*} \\
    &= \varrho(S\mathcal{X}_{*j}; \hat{z}) \\
    &= \varrho((S\mathcal{X})_{*j}; \hat{z}) \\
    &= \rho_{j}(S\mathcal{X}; \hat{z}).
  \end{align*}
  A similar calculation yields the desired equality for the Heisenberg--Weyl operator $\widehat{T}_{z}$ as well.
\end{proof}

\section{Ladder Operators and Semiclassical Wave Packets of Hagedorn}
\label{sec:Hagedorn_wave_packets}
\subsection{The Hagedorn Ladder Operators}
\label{ssec:Hagedorn_ladder_operators}
Let $S \in \Sp(2d,\R)$, $\widehat{S} \in \Mp(2d,\R)$ be a corresponding metaplectic operator, $z \defeq (q, p) \in T^{*}\R^{d}$, and $\hat{z} = (\hat{x}, \hat{p})$ be the position and momentum operators~\eqref{eq:xhat_phat}.
We now define a set of operators $(\mathscr{A}(S,z), \mathscr{A}^{*}(S,z))$ by
\begin{subequations}
  \label{def:As}
  \begin{equation}
    \begin{bmatrix}
      \mathscr{A}(S,z) \smallskip\\
      \mathscr{A}^{*}(S,z)
    \end{bmatrix}
    \defeq \widehat{T}_{z}\, \widehat{S}\, \rho(W_{\hbar}; \hat{z})\, \widehat{S}^{*}\, \widehat{T}_{z}^{*},
  \end{equation}
  i.e., for any $j \in \{1, \dots, d\}$,
  \begin{equation*}
    \mathscr{A}_{j}(S,z) \defeq \widehat{T}_{z}\, \widehat{S}\, \rho^{\flat}_{j}(W_{\hbar}; \hat{z})\, \widehat{S}^{*}\, \widehat{T}_{z}^{*},
    \qquad
    \mathscr{A}^{*}_{j}(S,z) \defeq \widehat{T}_{z}\, \widehat{S}\, \rho^{\sharp}_{j}(W_{\hbar}; \hat{z})\, \widehat{S}^{*}\, \widehat{T}_{z}^{*}.
  \end{equation*}
  Note that the ambiguity in the sign of $\widehat{S} \in \Mp(2d,\R)$ (see Appendix~\ref{ssec:Mp}) is immaterial here because the signs cancel out by the conjugation.
  Using Proposition~\ref{prop:rho-covariance} and \eqref{eq:ladder_operators}, we may write
  \begin{equation}
    \begin{bmatrix}
      \mathscr{A}(S,z) \smallskip\\
      \mathscr{A}^{*}(S,z)
    \end{bmatrix}
    = \rho(S W_{\hbar}; \hat{z} - z) = 
    \begin{bmatrix}
      \rho^{\flat}(S W_{\hbar}; \hat{z} - z) \smallskip\\
      \rho^{\sharp}(S W_{\hbar}; \hat{z} - z)
    \end{bmatrix}
    = \frac{1}{\sqrt{\hbar}} \mathcal{W} S^{-1} (\hat{z} - z).
  \end{equation}
\end{subequations}
Since the symmetric operators $\widehat{z} - z = (\hat{x} - q, \hat{p} - p)$ clearly satisfy the canonical commutation relations on $\mathscr{S}(\R^{d})$, Theorem~\ref{thm:ladder_operators} implies that $( \mathscr{A}(S,z), \mathscr{A}^{*}(S,z) )$ define ladder operators on $\mathscr{S}(\R^{d})$ with $\mathscr{A}(S,z) = \rho^{\flat}(S W_{\hbar}; \hat{z} - z)$ being the lowering operators and $\mathscr{A}^{*}(S,z) = \rho^{\sharp}(S W_{\hbar}; \hat{z} - z)$ being the raising operators; hence for any $j, k \in \{1, \dots, d\}$,
\begin{equation}
  \label{eq:As-commutators}
   [\mathscr{A}_{j}(S,z), \mathscr{A}_{k}(S,z)] = 0,
   \qquad
   [\mathscr{A}_{j}^{*}(S,z), \mathscr{A}^{*}_{k}(S,z)] = 0,
   \qquad
  [\mathscr{A}_{j}(S,z), \mathscr{A}^{*}_{k}(S,z)] = \delta_{jk}.
\end{equation}

We now employ the following parametrization of $S \in \Sp(2d,\R)$ of \citet[Section~V.1]{Lu2008}:
\begin{align}
  \label{def:Sp2d-block-Lubich}
  \Sp(2d,\R)
  &= \setdef{
    \begin{bmatrix}
      \Re Q & \Im Q \smallskip\\
      \Re P & \Im P
    \end{bmatrix} \in \Mat_{2d}(\R)
              }
              {
              \begin{array}{c}
                Q, P \in \Mat_{d}(\C),\ Q^{T}P - P^{T}Q = 0, \smallskip\\
                Q^{*}P - P^{*}Q = 2\rmi I_{d}
              \end{array}
              },
\end{align}
that is, we set
\begin{equation}
  \label{eq:S_in_QP}
  S \defeq 
  \begin{bmatrix}
    \Re Q & \Im Q \smallskip\\
    \Re P & \Im P
  \end{bmatrix} \in \Sp(2d,\R)
\end{equation}
in \eqref{def:As}.
Then, using the expressions~\eqref{eq:lowering_and_raising_operators} from Theorem~\ref{thm:ladder_operators}, we have
\begin{align*}
  \mathscr{A}(S,z) &= -\frac{\rmi}{\sqrt{2\hbar}} \brackets{ P^{T}(\hat{x} - q) - Q^{T}(\hat{p} - p) }, \\
  \mathscr{A}^{*}(S,z) &= \frac{\rmi}{\sqrt{2\hbar}} \brackets{ P^{*}(\hat{x} - q) - Q^{*}(\hat{p} - p) }.
\end{align*}
We recognize them as the ladder operators of \citet{Ha1998} (\citeauthor{Ha1998} uses parameters $A, B \in \Mat_{d}(\C)$, which are related to $Q$ and $P$ as $A = Q$ and $B = -\rmi P$; see also \citet[Section~V.2]{Lu2008}).

As we shall see later in Section~\ref{ssec:Hagedorn-Hermite}, the normalized ``ground state'' $\varphi^{\hbar}_{0}(S,z;\,\cdot\,)$ of the Hagedorn wave packet contains the $d \times d$ complex matrix $P Q^{-1}$ in its quadratic term inside the exponential (see \eqref{eq:varphi_0} below).
In fact, one can show that if $S = \begin{tbmatrix}
  \Re Q & \Im Q \smallskip\\
  \Re P & \Im P
\end{tbmatrix} \in \Sp(2d,\R)$ then $P Q^{-1}$ is an element in the Siegel upper half space
\begin{equation*}
  \Sigma_{d} \defeq 
  \setdef{ \mathcal{A} + {\rm i}\mathcal{B} \in \Mat_{d}(\mathbb{C}) }{ \mathcal{A}, \mathcal{B} \in \Mat_{d}(\R),\, \mathcal{A}^{T} = \mathcal{A},\, \mathcal{B}^{T} = \mathcal{B},\, \mathcal{B} > 0 },
\end{equation*}
i.e., the set of symmetric $d \times d$ complex matrices (symmetric in the real sense) with positive-definite imaginary parts; this guarantees that $\varphi^{\hbar}_{0}(S,z;\,\cdot\,)$ is an element in $L^{2}(\R^{d})$ (again see \eqref{eq:varphi_0} below).

\begin{remark}
  Geometrically, this is because the Siegel upper half space $\Sigma_{d}$ is identified as the homogeneous space $\Sp(2d,\R)/\U(d)$, where the (transitive) action is defined as the following generalized linear fractional transformation:
  \begin{equation}
    \label{eq:Psi}
    \Psi\colon \Sp(2d,\R) \times \Sigma_{d} \to \Sigma_{d};
    \quad
    \parentheses{
      \begin{bmatrix}
        A & B \\
        C & D
      \end{bmatrix},
      \mathcal{Z}
    }
    \mapsto
    (C + D\mathcal{Z})(A + B\mathcal{Z})^{-1}.
  \end{equation}
  This action gives rise to the following natural quotient map (see \citet{Si1943}, \citet[Section~4.5]{Fo1989}, and \citet[Exercise~2.28 on p.~48]{McSa1999}; see also \citet{Oh2015c}):
  \begin{align*}
    \pi_{\U(d)}\colon & \Sp(2d,\R) \to \Sp(2d,\R)/\U(d) \cong \Sigma_{d};
    \\
    & Y = \begin{bmatrix}
      A & B \\
      C & D
    \end{bmatrix}
    \mapsto
    \Psi_{Y}({\rm i}I_{d}) = (C + {\rm i}D)(A + {\rm i}B)^{-1}.
  \end{align*}
  Therefore, with the parametrization~\eqref{def:Sp2d-block-Lubich} for $S \in \Sp(2d,\R)$, we have
  \begin{equation}
    \label{eq:PQinv}
    \pi_{\U(d)}(S) = 
    \pi_{\U(d)}\parentheses{
      \begin{bmatrix}
        \Re Q & \Im Q \smallskip\\
        \Re P & \Im P
      \end{bmatrix}
    }
    = P Q^{-1} \in \Sigma_{d}.
  \end{equation}
\end{remark}

Now let us go back to the definition~\eqref{def:As} of the Hagedorn ladder operators.
We observed in Example~\ref{ex:harmonic_oscillator} that $\rho(W_{\hbar}; \hat{z})$ gives the ladder operators $( \hat{a}, \hat{a}^{*} )$ for the harmonic oscillator.
This implies the following:
\begin{proposition}
  \label{prop:As-as}
  The ladder operators $( \mathscr{A}(S,z), \mathscr{A}^{*}(S,z) )$ of \citet{Ha1998} are related to those $( \hat{a}, \hat{a}^{*} )$ for the harmonic oscillator (see \eqref{def:as}) as follows:
  \begin{equation}
    \label{eq:As-as}
    \mathscr{A}_{j}(S,z) = \widehat{T}_{z}\, \widehat{S}\, \hat{a}_{j}\, \widehat{S}^{*}\, \widehat{T}_{z}^{*},
    \qquad
    \mathscr{A}^{*}_{j}(S,z) = \widehat{T}_{z}\, \widehat{S}\, \hat{a}^{*}_{j}\, \widehat{S}^{*}\, \widehat{T}_{z}^{*},
  \end{equation}
  that is, the diagrams
  \begin{equation}
    \label{cd:As-as}
    \begin{tikzcd}[column sep=7ex, row sep=6ex, ampersand replacement=\&]
      \mathscr{S}(\R^{d}) \arrow{r}{\widehat{T}_{z}\,\widehat{S}} \arrow{d}[swap]{\hat{a}_{j}} \& \mathscr{S}(\R^{d}) \arrow{d}{\mathscr{A}_{j}(S,z)}
      \\
      \mathscr{S}(\R^{d}) \arrow{r}[swap]{\widehat{T}_{z}\,\widehat{S}} \& \mathscr{S}(\R^{d})
    \end{tikzcd}
    \qquad
    \begin{tikzcd}[column sep=7ex, row sep=6ex, ampersand replacement=\&]
      \mathscr{S}(\R^{d}) \arrow{r}{\widehat{T}_{z}\,\widehat{S}} \& \mathscr{S}(\R^{d})
      \\
      \mathscr{S}(\R^{d}) \arrow{r}[swap]{\widehat{T}_{z}\,\widehat{S}} \arrow{u}{\hat{a}^{*}_{j}} \& \mathscr{S}(\R^{d}) \arrow{u}[swap]{\mathscr{A}^{*}_{j}(S,z)}
    \end{tikzcd}
  \end{equation}
  commute for each $j \in \{1, \dots, d\}$.
\end{proposition}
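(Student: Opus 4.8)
The plan is to observe that the statement is essentially a direct unwinding of the definition~\eqref{def:As} of the Hagedorn ladder operators, combined with the identification of the harmonic-oscillator ladder operators made in Example~\ref{ex:harmonic_oscillator}. First I would recall that, by definition, $\hat{a}_{j} = \rho^{\flat}_{j}(W_{\hbar}; \hat{z})$ and $\hat{a}^{*}_{j} = \rho^{\sharp}_{j}(W_{\hbar}; \hat{z})$ for each $j \in \{1, \dots, d\}$; these are precisely the $j$-th and $(d+j)$-th components of the stacked operator $\rho(W_{\hbar}; \hat{z})$.

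Next I would read off the componentwise form of~\eqref{def:As}. Since that definition stacks $\mathscr{A}(S,z)$ on top of $\mathscr{A}^{*}(S,z)$ and sets the stack equal to $\widehat{T}_{z}\,\widehat{S}\,\rho(W_{\hbar}; \hat{z})\,\widehat{S}^{*}\,\widehat{T}_{z}^{*}$, extracting the individual rows and substituting the identifications above immediately yields
\[
  \mathscr{A}_{j}(S,z) = \widehat{T}_{z}\,\widehat{S}\,\hat{a}_{j}\,\widehat{S}^{*}\,\widehat{T}_{z}^{*}, \qquad \mathscr{A}^{*}_{j}(S,z) = \widehat{T}_{z}\,\widehat{S}\,\hat{a}^{*}_{j}\,\widehat{S}^{*}\,\widehat{T}_{z}^{*},
\]
which are exactly the asserted relations~\eqref{eq:As-as}. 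No computation beyond matching indices is required here.

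Finally, to translate~\eqref{eq:As-as} into the two commuting diagrams~\eqref{cd:As-as}, I would invoke the unitarity of the Heisenberg--Weyl and metaplectic operators: both $\widehat{T}_{z}$ and $\widehat{S}$ are unitary on $L^{2}(\R^{d})$ and restrict to bijections of $\mathscr{S}(\R^{d})$, so that $\widehat{S}^{*}\,\widehat{T}_{z}^{*} = (\widehat{T}_{z}\,\widehat{S})^{-1}$. Multiplying the first relation on the right by $\widehat{T}_{z}\,\widehat{S}$ gives $\mathscr{A}_{j}(S,z)\,\widehat{T}_{z}\,\widehat{S} = \widehat{T}_{z}\,\widehat{S}\,\hat{a}_{j}$, which is the commutativity of the left diagram; the right diagram follows in the same way from the relation for the raising operators. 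The only point to watch is the correct reading of the arrow directions in the diagrams, but this reduces entirely to the invertibility of $\widehat{T}_{z}\,\widehat{S}$ on $\mathscr{S}(\R^{d})$; there is no genuine obstacle, since the substantive work was already carried out in Theorem~\ref{thm:ladder_operators} and Proposition~\ref{prop:rho-covariance}, which expressed the conjugated family $\widehat{T}_{z}\,\widehat{S}\,\rho(W_{\hbar};\hat{z})\,\widehat{S}^{*}\,\widehat{T}_{z}^{*}$ as $\rho(SW_{\hbar}; \hat{z}-z)$ and thereby identified it with Hagedorn's operators.
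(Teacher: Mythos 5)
Your proof is correct and takes essentially the same route as the paper's, which likewise obtains \eqref{eq:As-as} immediately from the definition~\eqref{def:As} together with the identification $(\hat{a}, \hat{a}^{*}) = \rho(W_{\hbar}; \hat{z})$ from Example~\ref{ex:harmonic_oscillator}. Your closing remarks on the unitarity of $\widehat{T}_{z}$ and $\widehat{S}$ and the identity $\widehat{S}^{*}\,\widehat{T}_{z}^{*} = (\widehat{T}_{z}\,\widehat{S})^{-1}$ merely spell out the passage from \eqref{eq:As-as} to the commuting diagrams~\eqref{cd:As-as}, a step the paper leaves implicit.
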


\begin{proof}
  Follows easily from the definition~\eqref{def:As} of $( \mathscr{A}(S,z), \mathscr{A}^{*}(S,z) )$ and the fact that $( \hat{a}, \hat{a}^{*} )$ are given as $\rho(W_{\hbar}; \hat{z})$ as shown in Example~\ref{ex:harmonic_oscillator}.
\end{proof}

\begin{remark}
  An essentially the same result is obtained by \citet[Proposition~6]{LaTr2014} by intricate calculations involving the squeezing operators; see also, e.g., \citet[Section~3.4]{CoRo2012}.
\end{remark}

\subsection{Symplectic Covariance of the Hagedorn Ladder Operators}
The above characterization of the ladder operators of \citet{Ha1998} leads to the following symplectic covariance property of the ladder operators:
\begin{proposition}[Symplectic covariance of Hagedorn ladder operators]
  \label{prop:A_covariance}
  For any $\widehat{S}_{0} \in \Mp(2d,\R)$ with $S_{0} \defeq \pi_{\Mp}(\widehat{S}_{0}) \in \Sp(2d,\R)$, the ladder operators $( \mathscr{A}(S,z), \mathscr{A}^{*}(S,z) )$ satisfy
  \begin{equation*}
    \widehat{S}_{0}\, \mathscr{A}_{j}(S, z)\, \widehat{S}_{0}^{*} = \mathscr{A}_{j}(S_{0} S, S_{0} z),
    \qquad
    \widehat{S}_{0}\, \mathscr{A}^{*}_{j}(S, z)\, \widehat{S}_{0}^{*} = \mathscr{A}^{*}_{j}(S_{0} S, S_{0} z)
  \end{equation*}
  for each $j \in \{1, \dots, d\}$.
\end{proposition}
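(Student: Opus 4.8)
The plan is to conjugate the defining formula~\eqref{def:As} by $\widehat{S}_0$ and to push $\widehat{S}_0$ through the Heisenberg--Weyl factor $\widehat{T}_z$ using the symplectic covariance of the Heisenberg--Weyl operators. The key auxiliary fact is that the metaplectic representation intertwines with the Heisenberg--Weyl operators via $\widehat{S}_0\,\widehat{T}_z\,\widehat{S}_0^* = \widehat{T}_{S_0 z}$ (this standard relation is reviewed in Appendix~\ref{sec:HW_and_Mp}); equivalently, $\widehat{S}_0\,\widehat{T}_z = \widehat{T}_{S_0 z}\,\widehat{S}_0$.

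First I would start from $\mathscr{A}_j(S,z) = \widehat{T}_z\,\widehat{S}\,\rho^{\flat}_j(W_{\hbar};\hat{z})\,\widehat{S}^*\,\widehat{T}_z^*$ and write
\begin{equation*}
  \widehat{S}_0\,\mathscr{A}_j(S,z)\,\widehat{S}_0^*
  = \widehat{S}_0\,\widehat{T}_z\,\widehat{S}\,\rho^{\flat}_j(W_{\hbar};\hat{z})\,\widehat{S}^*\,\widehat{T}_z^*\,\widehat{S}_0^*.
\end{equation*}
Applying the intertwining relation to the leftmost block turns $\widehat{S}_0\,\widehat{T}_z\,\widehat{S}$ into $\widehat{T}_{S_0 z}\,(\widehat{S}_0\widehat{S})$, and taking adjoints turns the rightmost block $\widehat{S}^*\,\widehat{T}_z^*\,\widehat{S}_0^*$ into $(\widehat{S}_0\widehat{S})^*\,\widehat{T}_{S_0 z}^*$. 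Hence
\begin{equation*}
  \widehat{S}_0\,\mathscr{A}_j(S,z)\,\widehat{S}_0^*
  = \widehat{T}_{S_0 z}\,(\widehat{S}_0\widehat{S})\,\rho^{\flat}_j(W_{\hbar};\hat{z})\,(\widehat{S}_0\widehat{S})^*\,\widehat{T}_{S_0 z}^*.
\end{equation*}

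Next I would invoke the (projective) homomorphism property $\pi_{\Mp}(\widehat{S}_0\widehat{S}) = \pi_{\Mp}(\widehat{S}_0)\,\pi_{\Mp}(\widehat{S}) = S_0 S$, so that $\widehat{S}_0\widehat{S}$ is a metaplectic operator covering the symplectic matrix $S_0 S$. Comparing the right-hand side above with the definition~\eqref{def:As} of $\mathscr{A}_j(S_0 S, S_0 z)$, built from the base point $S_0 z$ and any metaplectic operator covering $S_0 S$, identifies it as $\mathscr{A}_j(S_0 S, S_0 z)$; the two possible signs of $\widehat{S}_0\widehat{S}$ are immaterial since they cancel under the conjugation, exactly as remarked after~\eqref{def:As}. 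The computation for the raising operators $\mathscr{A}^*_j$ is word for word the same with $\rho^{\flat}_j$ replaced by $\rho^{\sharp}_j$.

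The only genuinely nontrivial ingredient is the covariance relation $\widehat{S}_0\,\widehat{T}_z\,\widehat{S}_0^* = \widehat{T}_{S_0 z}$; everything else is bookkeeping with adjoints and the homomorphism property. I expect the main obstacle to be merely making sure this intertwining identity is stated in the correct normalization, namely that it encodes the action of $\Mp(2d,\R)$ on the Heisenberg--Weyl group so that the base point transforms cleanly as $z \mapsto S_0 z$ with no spurious phase; once that is in hand, the proposition follows immediately.
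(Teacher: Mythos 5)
Your proposal is correct and follows essentially the same route as the paper: conjugate the definition~\eqref{def:As}, push $\widehat{S}_{0}$ through $\widehat{T}_{z}$ via the symplectic covariance~\eqref{eq:T-S_covariance}, and recognize $\widehat{S}_{0}\widehat{S}$ as a metaplectic lift of $S_{0}S$ (with the sign ambiguity harmless under conjugation, as the paper remarks after~\eqref{def:As}). The only cosmetic difference is that the paper additionally invokes Proposition~\ref{prop:rho-covariance} to pass through the explicit form $\rho(S_{0} S W_{\hbar}; \hat{z} - S_{0} z)$ before matching the definition, whereas you match the definition directly; both steps are equally valid.
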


\begin{proof}
  First recall from~\eqref{def:As} that 
  \begin{equation*}
    \begin{bmatrix}
      \mathscr{A}(S,z) \smallskip\\
      \mathscr{A}^{*}(S,z)
    \end{bmatrix}
    \defeq \widehat{T}_{z}\, \widehat{S}\, \rho(W_{\hbar}; \hat{z})\, \widehat{S}^{*}\, \widehat{T}_{z}^{*}.
  \end{equation*}
  Then, using the symplectic covariance~\eqref{eq:T-S_covariance} of the Heisenberg--Weyl operator $\widehat{T}_{z}$ and Proposition~\ref{prop:rho-covariance}, we have 
  \begin{align*}
    \widehat{S}_{0}
    \begin{bmatrix}
      \mathscr{A}(S,z) \smallskip\\
      \mathscr{A}^{*}(S,z)
    \end{bmatrix}
    \widehat{S}_{0}^{*}
    &= \widehat{S}_{0}\, \widehat{T}_{z}\, \widehat{S}\, \rho(W_{\hbar}; \hat{z})\, \widehat{S}^{*}\, \widehat{T}_{z}^{*}\, \widehat{S}_{0}^{*} \\
    &= \widehat{T}_{S_{0} z}\, \widehat{S}_{0}\, \widehat{S}\, \rho(W_{\hbar}; \hat{z})\, \widehat{S}^{*}\, \widehat{S}_{0}^{*}\, \widehat{T}_{S_{0} z}^{*} \\
    &= \rho(S_{0} S W_{\hbar}; \hat{z} - S_{0} z) \\
    &= \begin{bmatrix}
      \mathscr{A}(S_{0}S, S_{0}z) \smallskip\\
      \mathscr{A}^{*}(S_{0}S, S_{0}z)
    \end{bmatrix}. \qedhere
  \end{align*}
\end{proof}

This is a generalization of the transformation property of the ladder operators under the conjugation by the Fourier transform in \citet{Ha1998}:
Let $\mathscr{F}_{\hbar}$ be the semiclassical Fourier transform defined as 
\begin{equation}
  \label{eq:mathcalF_hbar}
  \mathscr{F}_{\hbar}\psi(x) = \frac{1}{ (2\pi\hbar)^{d/2} } \int_{\R^{d}} e^{-\frac{\rmi}{\hbar}x \cdot \tilde{x}}\, \psi(\tilde{x})\,d\tilde{x}
\end{equation}
on $\mathscr{S}(\R^{d})$.
Then it is easy to see the following:
\begin{corollary}[{\citet[Eqs.~(2.12) and (2.13)]{Ha1998}}]
  The ladder operators $( \mathscr{A}(S,z), \mathscr{A}^{*}(S,z) )$ satisfy, for each $j \in \{1, \dots, d\}$,
  \begin{equation*}
    \mathscr{F}_{\hbar}\, \mathscr{A}_{j}(S, z)\, \mathscr{F}_{\hbar}^{-1} = \mathscr{A}_{j}(J S, J z),
    \qquad
    \mathscr{F}_{\hbar}\, \mathscr{A}^{*}_{j}(S, z)\, \mathscr{F}_{\hbar}^{-1} = \mathscr{A}^{*}_{j}(J S, J z),
  \end{equation*}
  where $\mathscr{F}_{\hbar}$ is the semiclassical Fourier transform defined in \eqref{eq:mathcalF_hbar}; or equivalently, writing
  \begin{equation*}
    \mathscr{A}(Q, P, q, p) \defeq \mathscr{A}(S, z),
    \qquad
    \mathscr{A}^{*}(Q, P, q, p) \defeq \mathscr{A}^{*}(S, z),
  \end{equation*}
  we have, for each $j \in \{1, \dots, d\}$,
  \begin{equation*}
    \mathscr{F}_{\hbar}\, \mathscr{A}_{j}(Q, P, q, p)\, \mathscr{F}_{\hbar}^{-1} = \mathscr{A}_{j}(P, -Q, p, -q),
    \qquad
    \mathscr{F}_{\hbar}\, \mathscr{A}^{*}_{j}(Q, P, q, p)\, \mathscr{F}_{\hbar}^{-1} = \mathscr{A}^{*}_{j}(P, -Q, p, -q).
  \end{equation*}
\end{corollary}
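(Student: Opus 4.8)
The plan is to recognize that this corollary is exactly the special case $\widehat{S}_{0} = \mathscr{F}_{\hbar}$ of the symplectic covariance in Proposition~\ref{prop:A_covariance}, once we know that the semiclassical Fourier transform $\mathscr{F}_{\hbar}$ is a metaplectic operator whose image under $\pi_{\Mp}$ is precisely the matrix $J$ of \eqref{eq:J}. Granting this, applying Proposition~\ref{prop:A_covariance} with $S_{0} = J$, and using that $\mathscr{F}_{\hbar}$ is unitary so that $\mathscr{F}_{\hbar}^{-1} = \mathscr{F}_{\hbar}^{*}$, yields the first pair of identities immediately.

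To identify $\pi_{\Mp}(\mathscr{F}_{\hbar})$, I would compute how $\mathscr{F}_{\hbar}$ conjugates the position and momentum operators. Integrating by parts in \eqref{eq:mathcalF_hbar} gives, for each $j \in \{1,\dots,d\}$,
\begin{equation*}
  \mathscr{F}_{\hbar}\, \hat{x}_{j}\, \mathscr{F}_{\hbar}^{-1} = -\hat{p}_{j},
  \qquad
  \mathscr{F}_{\hbar}\, \hat{p}_{j}\, \mathscr{F}_{\hbar}^{-1} = \hat{x}_{j}.
\end{equation*}
Feeding this into $\varrho(c;\hat{z}) = c^{T} J \hat{z}$ and comparing with the covariance relation \eqref{eq:varrho-covariance}, namely $\mathscr{F}_{\hbar}\, \varrho(c;\hat{z})\, \mathscr{F}_{\hbar}^{-1} = \varrho(S_{0}c;\hat{z})$ for all $c \in \C^{2d}$, forces $S_{0} = J$; equivalently the conjugation above reads $\mathscr{F}_{\hbar}\, \hat{z}\, \mathscr{F}_{\hbar}^{-1} = J^{-1}\hat{z}$, which matches $S_{0}^{-1}\hat{z}$ precisely when $S_{0} = J$. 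That $\mathscr{F}_{\hbar} \in \Mp(2d,\R)$ with $\pi_{\Mp}(\mathscr{F}_{\hbar}) = J$ is in any case standard, and the sign ambiguity of the metaplectic lift is harmless since it cancels under conjugation.

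For the second, equivalent formulation, I would simply evaluate the two matrix products in the parametrization \eqref{eq:S_in_QP}. A direct computation gives
\begin{equation*}
  J S =
  \begin{bmatrix}
    \Re P & \Im P \smallskip\\
    -\Re Q & -\Im Q
  \end{bmatrix},
  \qquad
  J z =
  \begin{bmatrix}
    p \smallskip\\
    -q
  \end{bmatrix}.
\end{equation*}
Reading the top block of $JS$ as the new $Q$-parameter and the bottom block as the new $P$-parameter shows that $JS$ corresponds to $(P,-Q)$, while $Jz = (p,-q)$; hence $\mathscr{A}_{j}(JS,Jz) = \mathscr{A}_{j}(P,-Q,p,-q)$, and identically for $\mathscr{A}^{*}_{j}$.

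I do not anticipate a genuine obstacle, as essentially all the work is carried by Proposition~\ref{prop:A_covariance}. The one point requiring care is the identification $\pi_{\Mp}(\mathscr{F}_{\hbar}) = J$: the signs in the integration-by-parts computation and the sign convention built into \eqref{eq:varrho-covariance} for $\pi_{\Mp}$ must be kept mutually consistent, since an off-by-a-sign slip would replace $J$ by $J^{-1} = -J$ and thereby flip the signs of $Q$, $P$, $q$, and $p$ in the final formulas.
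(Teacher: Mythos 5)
Your proposal is correct and takes essentially the same route as the paper: the paper likewise specializes Proposition~\ref{prop:A_covariance} to $\widehat{S}_{0} = \widehat{J}$, using the identity $\mathscr{F}_{\hbar} = \rmi^{d/2}\,\widehat{J}$ together with $\pi_{\Mp}(\widehat{J}) = J$ from \eqref{eq:pi_Mp_of_generators} (the phase cancels under conjugation, just as you observe for the lift ambiguity), and then computes $JS$ and $Jz$ exactly as you do. The only cosmetic imprecision is your assertion that $\mathscr{F}_{\hbar} \in \Mp(2d,\R)$: for general $d$ the factor $\rmi^{d/2}$ is not $\pm 1$, so $\mathscr{F}_{\hbar}$ is only a phase multiple of the metaplectic operator $\widehat{J}$, but since conjugation is insensitive to a unimodular scalar your argument is unaffected.
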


\begin{remark}
  \label{eq:QP_versus_AB}
  The apparent difference from Eqs.~(2.12) and (2.13) of \citet{Ha1998} by the constant factors $\pm\rmi$ stems from different parametrizations of elements in $S \in \Sp(2d,\R)$.
  Namely, the parameters $(A, B)$ in \citet{Ha1998} correspond to ours (originally due to \citet[Section~V.1]{Lu2008}) as $A = Q$ and $B = -\rmi P$.
  This implies that what corresponds to the transformation $(Q, P) \mapsto (P, -Q)$ is $(A, B) \mapsto (\rmi B, \rmi A)$ in Hagedorn's parametrization, and the imaginary unit $\rmi$ is pulled out of the expressions to appear as the constant factors $\pm\rmi$ in Eqs.~(2.12) and (2.13) of \cite{Ha1998}.
\end{remark}

\begin{proof}
  Using the identity $\mathscr{F}_{\hbar} = \rmi^{d/2} \widehat{J}$ (see \eqref{eq:Jhat} for the definition of $\widehat{J} \in \Mp(2d,\R)$), we have
  \begin{equation*}
    \mathscr{F}_{\hbar}\, \mathscr{A}_{j}(S, z)\, \mathscr{F}_{\hbar}^{-1}
    = \widehat{J}\, \mathscr{A}_{j}(S, z)\, \widehat{J}^{*},
    \qquad
    \mathscr{F}_{\hbar}\, \mathscr{A}^{*}_{j}(S, z)\, \mathscr{F}_{\hbar}^{-1}
    = \widehat{J}\, \mathscr{A}^{*}_{j}(S, z)\, \widehat{J}^{*}
  \end{equation*}
  for each $j \in \{1, \dots, d\}$.
  Then, setting $\widehat{S}_{0} = \widehat{J}$ in the above proposition, we have $S_{0} = J$ (see \eqref{eq:pi_Mp_of_generators}) here and so the above proposition gives
  \begin{equation*}
    \mathscr{F}_{\hbar}\, \mathscr{A}_{j}(S, z)\, \mathscr{F}_{\hbar}^{-1}
    = \mathscr{A}_{j}(J S, J z)
    \qquad
    \mathscr{F}_{\hbar}\, \mathscr{A}^{*}_{j}(S, z)\, \mathscr{F}_{\hbar}^{-1}
    = \mathscr{A}^{*}_{j}(J S, J z).
  \end{equation*}
  Since 
  \begin{equation*}
    J S =
    \begin{bmatrix}
      \Re P & \Im P \smallskip\\
      -\Re Q & -\Im Q
    \end{bmatrix},
    \qquad
    J z =
    \begin{bmatrix}
      p \smallskip\\
      -q
    \end{bmatrix},
  \end{equation*}
  the maps $S \mapsto J S$ and $z \mapsto J z$ correspond to $(Q, P) \mapsto (P, -Q)$ and $(q, p) \mapsto (p, -q)$, respectively.
\end{proof}

\subsection{The Hagedorn Wave Packets and the Hermite Functions}
\label{ssec:Hagedorn-Hermite}
Recall that the ground state $\psi^{\hbar}_{0} \in L^{2}(\R^{d})$ for the harmonic oscillator may be defined as
\begin{equation*}
  \hat{a}_{j} \psi^{\hbar}_{0} = 0
  \text{ for any $j \in \{1, \dots, d\}$}
  \quad\text{and}\quad
  \bigl\| \psi^{\hbar}_{0} \bigr\| = 1,
\end{equation*}
where $\norm{\,\cdot\,}$ is the $L^{2}$ norm.
It is easy to find (modulo the phase factor)
\begin{equation}
  \label{eq:psi_0}
  \psi^{\hbar}_{0}(x) = \frac{1}{(\pi\hbar)^{d/4}}\,\exp\parentheses{-\frac{x^{2}}{2\hbar}}.
\end{equation}
Likewise, one may define the ``ground state'' of the Hagedorn ladder operators as
\begin{equation*}
  \mathscr{A}_{j}(S,z)\, \varphi^{\hbar}_{0}(S,z;\,\cdot\,) = 0
  \text{ for any $j \in \{1, \dots, d\}$}
  \quad\text{and}\quad
  \bigl\| \varphi^{\hbar}_{0}(S,z;\,\cdot\,) \bigr\| = 1,
\end{equation*}
but then it is easy to see from \eqref{eq:As-as} of Proposition~\ref{prop:As-as} that, for any $j \in \{1, \dots, d\}$,
\begin{equation*}
  \mathscr{A}_{j}(S,z)\, \widehat{T}_{z}\, \widehat{S}\, \psi^{\hbar}_{0} = \widehat{T}_{z}\, \widehat{S}\, \hat{a}_{j}\, \psi^{\hbar}_{0} = 0,
\end{equation*}
and also that $\bigl\| \widehat{T}_{z}\, \widehat{S}\, \psi^{\hbar}_{0} \bigr\| = 1$ because both $\widehat{T}_{z}$ and $\widehat{S}$ are unitary.
So we would like to define the ground state $\varphi^{\hbar}_{0}(S,z;\,\cdot\,)$ as
\begin{equation}
  \label{eq:Hagedorn-Hermite_0}
  \varphi^{\hbar}_{0}(S,z;\,\cdot\,) \defeq e^{-\frac{\rmi}{2\hbar} p\cdot q}\, \widehat{T}_{z}\, \widehat{S}\, \psi^{\hbar}_{0}.
\end{equation}
We put the extra phase factor $e^{-\frac{\rmi}{2\hbar} p\cdot q}$ because one can show (see, e.g., \citet[Section~7.2]{Li1986} and \citet[Theorem~4.65 on p.~202]{Fo1989}) that this definition coincides with that of \citet{Ha1980,Ha1998}:
\begin{equation}
  \label{eq:varphi_0}
  \varphi^{\hbar}_{0}(S,z; x) = \frac{(\det Q)^{-1/2}}{(\pi\hbar)^{d/4}} \exp\braces{ \frac{{\rm i}}{\hbar}\brackets{ \frac{1}{2}(x - q)^{T}P Q^{-1}(x - q) + p \cdot (x - q) } }.
\end{equation}

\begin{remark}
  Strictly speaking, the above expression represents two functions that differ by the sign, depending on how one takes the branch cut in defining the square root $(\det Q)^{1/2}$.
  The same goes with many of those functions to follow that are defined to be parametrized by $S \in \Sp(2d,\R)$ and contain factors like $(\det Q)^{-1/2}$ in their expressions.
  They are in fact parametrized by $\widehat{S} \in \Mp(2d,\R)$ and hence is double-valued.
  Nevertheless, we ostensibly parametrize those functions by $S \in \Sp(2d,\R)$ or $(Q,P)$ and let the square root term take care of the ambiguity in the sign.
\end{remark}

\citet{Ha1998} generated an orthonormal basis $\{ \varphi^{\hbar}_{n}(S,z;\,\cdot\,) \}_{n \in \N_{0}^{d}}$ for $L^{2}(\R^{d})$ by applying the raising operator recursively just as is done with the Hermite functions in \eqref{eq:psi_n-ladders}, i.e., for any multi-index $n = (n_{1}, \dots, n_{d}) \in \N_{0}^{d}$ and $j \in \{1, \dots, d\}$,
\begin{equation}
  \label{eq:varphi_n-raised}
  \varphi^{\hbar}_{n + e_{j}}(S,z;\,\cdot\,) \defeq \frac{1}{\sqrt{n_{j} + 1}}\,\mathscr{A}^{*}_{j} \varphi^{\hbar}_{n}(S,z;\,\cdot\,).
\end{equation}
where $e_{j}$ is the unit vector in $\R^{d}$ whose $j$-th entry is 1.
One can also show (see \citet{Ha1998}) inductively that
\begin{equation}
  \label{eq:varphi_n-lowered}
  \varphi^{\hbar}_{n - e_{j}}(S,z;\,\cdot\,) \defeq \frac{1}{\sqrt{n_{j}}}\,\mathscr{A}_{j} \varphi^{\hbar}_{n}(S,z;\,\cdot\,).
\end{equation}
It is also easy to see that each $\varphi^{\hbar}_{n}(S, z; x)$ is the ground state $\varphi^{\hbar}_{0}(S, z; x)$ multiplied by a polynomial in $x$.
Therefore, for any multi-index $n \in \N_{0}^{d}$, we may define the polynomial
\begin{equation}
  \label{eq:P_n}
  \mathcal{P}^{\hbar}_{n}(S, z; x) \defeq c_{n}\,\frac{ \varphi^{\hbar}_{n}(S, z; x) }{ \varphi^{\hbar}_{0}(S, z; x) }
\end{equation}
with $c_{n} \defeq \sqrt{2^{|n|}n!}$ as in \eqref{eq:c_n}, and call $\{ \mathcal{P}^{\hbar}_{n}(S, z; \,\cdot\,) \}_{n \in \N_{0}^{d}}$ the {\em Hagedorn polynomials} so that
\begin{equation*}
  \varphi^{\hbar}_{n}(S, z; x) = \frac{\mathcal{P}^{\hbar}_{n}(S, z; x)}{c_{n}}\, \varphi^{\hbar}_{0}(S, z; x).
\end{equation*}

It turns out that Proposition~\ref{prop:As-as} also implies that the Hagedorn wave packets and the Hermite functions are related to each other just as in \eqref{eq:Hagedorn-Hermite_0} at {\em every} level of their ladders, not just at the ground level:
\begin{theorem}[The Hagedorn--Hermite correspondence]
  \label{thm:Hagedorn-Hermite}
  The Hagedorn wave packets
  \begin{equation*}
    \{ \varphi^{\hbar}_{n}(S,z;\,\cdot\,) \}_{n \in \N_{0}^{d}}
  \end{equation*}
  and the semiclassically scaled Hermite functions $\{ \psi^{\hbar}_{n} \}_{n \in \N_{0}^{d}}$ (see Appendix~\ref{sec:Hermite} for the definition) are related to each other as follows: For any $n \in \N_{0}^{d}$,
  \begin{equation}
    \label{eq:Hagedorn-Hermite}
    \varphi^{\hbar}_{n}(S,z;\,\cdot\,) = e^{-\frac{\rmi}{2\hbar} p\cdot q}\, \widehat{T}_{z}\, \widehat{S}\, \psi^{\hbar}_{n},
  \end{equation}
  that is, the diagrams
  \begin{equation*}
    \begin{tikzcd}[column sep=10ex, row sep=8ex, ampersand replacement=\&]
      \psi^{\hbar}_{n} \arrow[mapsto]{r}{e^{-\frac{\rmi}{2\hbar} p\cdot q}\,\widehat{T}_{z}\,\widehat{S}} \arrow[mapsto]{d}[swap]{\frac{\hat{a}_{j}}{\sqrt{n_{j}}}} \& \varphi^{\hbar}_{n}(S,z;\,\cdot\,) \arrow[mapsto]{d}{\frac{\mathscr{A}_{j}(S,z)}{\sqrt{n_{j}}}}
      \\
      \psi^{\hbar}_{n-e_{j}} \arrow[mapsto]{r}[swap]{e^{-\frac{\rmi}{2\hbar} p\cdot q}\,\widehat{T}_{z}\,\widehat{S}} \& \varphi^{\hbar}_{n-e_{j}}(S,z;\,\cdot\,)
    \end{tikzcd}
    \qquad
    \begin{tikzcd}[column sep=11ex, row sep=8ex, ampersand replacement=\&]
       \psi^{\hbar}_{n+e_{j}} \arrow[mapsto]{r}{e^{-\frac{\rmi}{2\hbar} p\cdot q}\,\widehat{T}_{z}\,\widehat{S}} \& \varphi^{\hbar}_{n+e_{j}}(S,z;\,\cdot\,)
      \\
      \psi^{\hbar}_{n} \arrow[mapsto]{r}[swap]{e^{-\frac{\rmi}{2\hbar} p\cdot q}\,\widehat{T}_{z}\,\widehat{S}} \arrow[mapsto]{u}{\frac{\hat{a}^{*}_{j}}{\sqrt{n_{j}+1}}} \&  \varphi^{\hbar}_{n}(S,z;\,\cdot\,) \arrow[mapsto]{u}[swap]{\frac{\mathscr{A}^{*}_{j}(S,z)}{\sqrt{n_{j}+1}}}
    \end{tikzcd}
  \end{equation*}
  commute for any $n \in \N_{0}^{d}$ and $j \in \{1, \dots, d\}$, where $n_{j} \ge 1$ is assumed in the left diagram.
\end{theorem}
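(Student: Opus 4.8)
The plan is to prove the identity \eqref{eq:Hagedorn-Hermite} by induction on the order $|n| = n_{1} + \dots + n_{d}$ of the multi-index, climbing both ladders in lockstep via the raising operators. The base case $|n| = 0$ is precisely the definition \eqref{eq:Hagedorn-Hermite_0} of the Hagedorn ground state, so there is nothing to check there. The engine of the induction is the intertwining relation \eqref{eq:As-as} of Proposition~\ref{prop:As-as}, which asserts that the Hagedorn raising operator $\mathscr{A}^{*}_{j}(S,z)$ is nothing but the harmonic-oscillator raising operator $\hat{a}^{*}_{j}$ conjugated by the unitary $\widehat{T}_{z}\widehat{S}$.

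For the inductive step, suppose \eqref{eq:Hagedorn-Hermite} holds at level $n$. I would start from the definition \eqref{eq:varphi_n-raised} of $\varphi^{\hbar}_{n+e_{j}}(S,z;\,\cdot\,)$, substitute the inductive hypothesis for $\varphi^{\hbar}_{n}(S,z;\,\cdot\,)$, and then replace $\mathscr{A}^{*}_{j}(S,z)$ by $\widehat{T}_{z}\widehat{S}\hat{a}^{*}_{j}\widehat{S}^{*}\widehat{T}_{z}^{*}$ using \eqref{eq:As-as}. The scalar phase $e^{-\frac{\rmi}{2\hbar}p\cdot q}$ does not depend on $n$ and commutes through everything. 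The crucial simplification is that the inner factors $\widehat{S}^{*}\widehat{T}_{z}^{*}$ meet $\widehat{T}_{z}\widehat{S}$ and collapse to the identity by unitarity of $\widehat{T}_{z}$ and $\widehat{S}$, leaving $e^{-\frac{\rmi}{2\hbar}p\cdot q}\widehat{T}_{z}\widehat{S}\,(\hat{a}^{*}_{j}/\sqrt{n_{j}+1})\,\psi^{\hbar}_{n}$. Applying the Hermite raising relation \eqref{eq:psi_n-ladders}, namely $\hat{a}^{*}_{j}\psi^{\hbar}_{n} = \sqrt{n_{j}+1}\,\psi^{\hbar}_{n+e_{j}}$, yields exactly $e^{-\frac{\rmi}{2\hbar}p\cdot q}\widehat{T}_{z}\widehat{S}\psi^{\hbar}_{n+e_{j}}$, which is the right-hand side of \eqref{eq:Hagedorn-Hermite} at level $n+e_{j}$. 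This closes the induction and establishes \eqref{eq:Hagedorn-Hermite} for every $n \in \N_{0}^{d}$.

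Once \eqref{eq:Hagedorn-Hermite} is known at all levels, both commuting diagrams follow at once: the right-hand (raising) diagram is just a restatement of the inductive step, while the left-hand (lowering) diagram follows by the same conjugation argument applied to the lowering intertwiner $\mathscr{A}_{j}(S,z) = \widehat{T}_{z}\widehat{S}\hat{a}_{j}\widehat{S}^{*}\widehat{T}_{z}^{*}$ together with the Hermite lowering relation, reproducing \eqref{eq:varphi_n-lowered}. The one point that deserves care --- and essentially the only obstacle --- is the well-definedness of the climb: reaching a general $n$ requires applying raising operators in some order, so I would note that path-independence is guaranteed by the commutativity $[\mathscr{A}^{*}_{j}(S,z), \mathscr{A}^{*}_{k}(S,z)] = 0$ from \eqref{eq:As-commutators} (and the matching relation for the $\hat{a}^{*}_{j}$), which makes the induction on $|n|$ consistent regardless of the chosen sequence of raisings. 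Beyond this bookkeeping, the argument is a clean conjugation calculation with no analytic subtleties.
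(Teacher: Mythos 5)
Your proposal is correct and takes essentially the same route as the paper: the base case is the ground-state identity \eqref{eq:Hagedorn-Hermite_0}, and your inductive step is precisely the paper's ``stacking'' of the commuting diagrams \eqref{cd:As-as} from Proposition~\ref{prop:As-as} combined with the ladder relations \eqref{eq:varphi_n-raised}, \eqref{eq:varphi_n-lowered}, and \eqref{eq:psi_n-ladders}. Your explicit remark that path-independence of the climb is guaranteed by $[\mathscr{A}^{*}_{j}(S,z), \mathscr{A}^{*}_{k}(S,z)] = 0$ from \eqref{eq:As-commutators} is a point the paper leaves implicit, but it does not alter the argument.
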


\begin{proof}
  We know from \eqref{eq:Hagedorn-Hermite_0} that \eqref{eq:Hagedorn-Hermite} holds for $n = 0$, i.e., at the bottom of the ladders.
  Then the above diagrams follow by stacking up the diagrams \eqref{cd:As-as} from Proposition~\ref{prop:As-as}---with the operators being divided by appropriate constants $\sqrt{n_{j}}$ and $\sqrt{n_{j}+1}$ etc.---recursively (or more precisely by induction) along with the relations \eqref{eq:varphi_n-raised}, \eqref{eq:varphi_n-lowered}, and \eqref{eq:psi_n-ladders}.
\end{proof}

The above characterization of the Hagedorn wave packets can be exploited to give very simple proofs of the following fundamental facts originally due to \citet{Hagedorn1985, Ha1998}:

\begin{corollary}[{\citet[Lemma~2.1]{Hagedorn1985}; see also \citet{Ha1998}}]
  \label{cor:Hagedorn_is_cons}
  The Hagedorn wave packets $\{ \varphi^{\hbar}_{n}(S,z;\,\cdot\,) \}_{n \in \N_{0}^{d}}$ form an orthonormal basis for $L^{2}(\R^{d})$.
\end{corollary}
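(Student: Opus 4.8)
The plan is to read the conclusion straight off the Hagedorn--Hermite correspondence of Theorem~\ref{thm:Hagedorn-Hermite}. That theorem expresses every Hagedorn wave packet as $\varphi^{\hbar}_{n}(S,z;\,\cdot\,) = U\,\psi^{\hbar}_{n}$, where $U \defeq e^{-\frac{\rmi}{2\hbar} p\cdot q}\, \widehat{T}_{z}\, \widehat{S}$ is a single operator that does not depend on the multi-index $n \in \N_{0}^{d}$. Thus the whole family $\{ \varphi^{\hbar}_{n}(S,z;\,\cdot\,) \}_{n \in \N_{0}^{d}}$ is the image of the family $\{ \psi^{\hbar}_{n} \}_{n \in \N_{0}^{d}}$ under one fixed map, and the task reduces to understanding that map together with the known properties of the Hermite functions.

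First I would observe that $U$ is unitary on $L^{2}(\R^{d})$: the scalar $e^{-\frac{\rmi}{2\hbar} p\cdot q}$ is a unimodular phase, the Heisenberg--Weyl operator $\widehat{T}_{z}$ is unitary, and the metaplectic operator $\widehat{S} \in \Mp(2d,\R)$ is unitary (both are reviewed in Appendix~\ref{sec:HW_and_Mp}). A composition of unitary operators pre-multiplied by a unimodular scalar is again unitary, so $U$ is unitary. Next I would invoke the standard fact, recorded in Appendix~\ref{sec:Hermite}, that the semiclassically scaled Hermite functions $\{ \psi^{\hbar}_{n} \}_{n \in \N_{0}^{d}}$ form an orthonormal basis for $L^{2}(\R^{d})$. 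Since a unitary operator carries any orthonormal basis to an orthonormal basis, applying $U$ yields that $\{ \varphi^{\hbar}_{n}(S,z;\,\cdot\,) \}_{n \in \N_{0}^{d}} = \{ U\,\psi^{\hbar}_{n} \}_{n \in \N_{0}^{d}}$ is an orthonormal basis as well, which is exactly the claim.

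There is essentially no obstacle here, and that is the point: the entire content of the statement is packaged into Theorem~\ref{thm:Hagedorn-Hermite}, and the only things left to verify — unitarity of each factor of $U$ and the orthonormal-basis property of the Hermite functions — are either standard or already assembled in the appendices. This is precisely the kind of simplification the Hagedorn--Hermite correspondence is meant to deliver: a result whose original proof in \citet[Lemma~2.1]{Hagedorn1985} required a direct verification of orthonormality and of completeness becomes a one-line consequence of the unitarity of $U$ and the corresponding property of the Hermite basis.
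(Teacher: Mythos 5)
Your proposal is correct and follows exactly the paper's own argument: both deduce the result from Theorem~\ref{thm:Hagedorn-Hermite} by noting that $\widehat{U} \defeq e^{-\frac{\rmi}{2\hbar} p\cdot q}\, \widehat{T}_{z}\, \widehat{S}$ is unitary and that unitary operators carry the orthonormal basis $\{ \psi^{\hbar}_{n} \}_{n \in \N_{0}^{d}}$ of Hermite functions to an orthonormal basis. There is nothing to add; the paper's proof is the same one-line transfer of orthonormality and completeness.
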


\begin{proof}
  The operator $\widehat{U} \defeq e^{-\frac{\rmi}{2\hbar} p\cdot q}\, \widehat{T}_{z}\, \widehat{S}$ is unitary because both the Heisenberg--Weyl operator $\widehat{T}_{z}$ and the metaplectic operator $\widehat{S}$ are unitary.
  So Theorem~\ref{thm:Hagedorn-Hermite} states that each $\varphi^{\hbar}_{n}(S,z;\,\cdot\,)$ is the result of applying the unitary operator $\widehat{U} \in \U(L^{2}(\R^{d}))$ to $\psi^{\hbar}_{n}$.
  Therefore, the Hagedorn wave packets $\{ \varphi^{\hbar}_{n}(S,z;\,\cdot\,) \}_{n \in \N_{0}^{d}}$ inherit orthonormality and completeness from the Hermite functions $\{ \psi^{\hbar}_{n} \}_{n \in \N_{0}^{d}}$.
\end{proof}

\begin{corollary}[{\citet[Lemma~2.2]{Hagedorn1985}; see also \citet[Eq.~(3.19)]{Ha1998}}]
  \label{cor:Hagedorn-Fourier}
  For any $S \in \Sp(2d,\R)$, $z \in T^{*}\R^{d}$, and multi-index $n \in \N_{0}^{d}$, 
  \begin{equation*}
    \mathscr{F}_{\hbar} \varphi^{\hbar}_{n}(S,z;\,\cdot\,) = \rmi^{d/2}\,e^{-\frac{\rmi}{\hbar} p\cdot q}\, \varphi^{\hbar}_{n}(J S, J z;\,\cdot\,),
  \end{equation*}
  or more explicitly,
  \begin{equation*}
    \mathscr{F}_{\hbar} \varphi^{\hbar}_{n}(Q, P, q, p;\,\cdot\,) = \rmi^{d/2}\,e^{-\frac{\rmi}{\hbar} p\cdot q}\, \varphi^{\hbar}_{n}(P, -Q, p, -q;\,\cdot\,).
  \end{equation*}
\end{corollary}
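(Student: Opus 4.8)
The plan is to derive the statement directly from the Hagedorn--Hermite correspondence (Theorem~\ref{thm:Hagedorn-Hermite}) together with the identification $\mathscr{F}_{\hbar} = \rmi^{d/2}\widehat{J}$ already exploited in the preceding corollary; in effect I mirror that corollary's $\widehat{S}_{0} = \widehat{J}$ specialization of Proposition~\ref{prop:A_covariance}, but now applied to the wave packet itself rather than to the ladder operators through conjugation. First I would use \eqref{eq:Hagedorn-Hermite} to write $\varphi^{\hbar}_{n}(S,z;\,\cdot\,) = e^{-\frac{\rmi}{2\hbar}p\cdot q}\,\widehat{T}_{z}\,\widehat{S}\,\psi^{\hbar}_{n}$ and pull $\mathscr{F}_{\hbar}$ inside as $\rmi^{d/2}\widehat{J}$, obtaining $\mathscr{F}_{\hbar}\varphi^{\hbar}_{n}(S,z;\,\cdot\,) = \rmi^{d/2}\,e^{-\frac{\rmi}{2\hbar}p\cdot q}\,\widehat{J}\,\widehat{T}_{z}\,\widehat{S}\,\psi^{\hbar}_{n}$.

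Next I would commute $\widehat{J}$ to the right of $\widehat{T}_{z}$ using the symplectic covariance~\eqref{eq:T-S_covariance} of the Heisenberg--Weyl operator, i.e.\ $\widehat{J}\,\widehat{T}_{z} = \widehat{T}_{Jz}\,\widehat{J}$, and then absorb the surviving $\widehat{J}\,\widehat{S}$ into a single metaplectic operator $\widehat{JS}$ lifting $JS \in \Sp(2d,\R)$, leaving $\rmi^{d/2}\,e^{-\frac{\rmi}{2\hbar}p\cdot q}\,\widehat{T}_{Jz}\,\widehat{JS}\,\psi^{\hbar}_{n}$. At this point I would recognize the trailing string $\widehat{T}_{Jz}\,\widehat{JS}\,\psi^{\hbar}_{n}$ as, up to a phase, the wave packet at the relabelled parameters by invoking Theorem~\ref{thm:Hagedorn-Hermite} once more at $(JS,Jz)$. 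The key arithmetic is that $Jz = (p,-q)$, so writing $(q',p') \defeq Jz$ we have $p'\cdot q' = -p\cdot q$; solving \eqref{eq:Hagedorn-Hermite} at $(JS,Jz)$ for the operator string therefore gives $\widehat{T}_{Jz}\,\widehat{JS}\,\psi^{\hbar}_{n} = e^{-\frac{\rmi}{2\hbar}p\cdot q}\,\varphi^{\hbar}_{n}(JS,Jz;\,\cdot\,)$, and multiplying by the outer $\rmi^{d/2}\,e^{-\frac{\rmi}{2\hbar}p\cdot q}$ collapses the two half-phases into the single factor $\rmi^{d/2}\,e^{-\frac{\rmi}{\hbar}p\cdot q}$ of the statement. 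The explicit block form $JS = \begin{tbmatrix} \Re P & \Im P \\ -\Re Q & -\Im Q\end{tbmatrix}$ already computed for the preceding corollary then identifies $(JS,Jz)$ with $(P,-Q,p,-q)$, yielding the second displayed identity.

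A feature of this route worth stressing is that it never requires knowing how the Hermite functions $\psi^{\hbar}_{n}$ themselves transform under $\widehat{J}$ or under the Fourier transform: the entire action of $\widehat{J}$ is absorbed into the relabelling of the symplectic and phase-space parameters, which is precisely why no $n$-dependent eigenvalue survives and the prefactor is the universal $\rmi^{d/2}$. The main obstacle I anticipate is sign and phase bookkeeping rather than any analytic difficulty: I must pin down a consistent metaplectic lift when writing $\widehat{J}\,\widehat{S} = \widehat{JS}$, since $\pi_{\Mp}$ is only a double cover of $\Sp(2d,\R)$ and a spurious sign could a priori intervene. As the remark following \eqref{eq:varphi_0} notes, the $\varphi^{\hbar}_{n}$ are genuinely parametrized by $\widehat{S} \in \Mp(2d,\R)$ and are double-valued through the $(\det Q)^{-1/2}$ branch, so I would make the lift explicit (or simply read the identity with compatible branches of the square root on both sides) to keep the $\rmi^{d/2}$ factor unambiguous.
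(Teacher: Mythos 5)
Your proposal is correct and follows essentially the same route as the paper's own proof: both factor $\varphi^{\hbar}_{n}(S,z;\,\cdot\,)$ via Theorem~\ref{thm:Hagedorn-Hermite}, write $\mathscr{F}_{\hbar} = \rmi^{d/2}\widehat{J}$, commute $\widehat{J}$ past $\widehat{T}_{z}$ using the symplectic covariance~\eqref{eq:T-S_covariance}, and reassemble $\widehat{T}_{Jz}\,\widehat{J}\,\widehat{S}\,\psi^{\hbar}_{n}$ as the wave packet at $(JS,Jz)$ with the phase arithmetic $p'\cdot q' = -p\cdot q$ collapsing the two half-phases into $e^{-\frac{\rmi}{\hbar}p\cdot q}$. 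Your explicit attention to the metaplectic-lift sign ambiguity is sound and matches how the paper handles it implicitly through its branch-cut convention.
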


\begin{proof}
  Using $\mathscr{F}_{\hbar} = \rmi^{d/2} \widehat{J}$ and the symplectic covariance~\eqref{eq:T-S_covariance} of $\widehat{T}_{z}$, we have
  \begin{align*}
    \mathscr{F}_{\hbar} \varphi^{\hbar}_{n}(S, z; \,\cdot\,)
    &= e^{-\frac{\rmi}{2\hbar} p\cdot q}\, \mathscr{F}_{\hbar}\, \widehat{T}_{z}\, \widehat{S}\, \psi^{\hbar}_{n} \\
    &= e^{-\frac{\rmi}{2\hbar} p\cdot q}\, \mathscr{F}_{\hbar}\, \widehat{T}_{z}\, \mathscr{F}_{\hbar}^{*} \mathscr{F}_{\hbar}\, \widehat{S}\, \psi^{\hbar}_{n} \\
    &= \rmi^{d/2}\,e^{-\frac{\rmi}{2\hbar} p\cdot q}\, \widehat{J}\, \widehat{T}_{z}\, \widehat{J}^{*} \widehat{J}\, \widehat{S}\, \psi^{\hbar}_{n} \\
    &= \rmi^{d/2}\,e^{-\frac{\rmi}{\hbar} p\cdot q}\,e^{\frac{\rmi}{2\hbar} q\cdot p}\, \widehat{T}_{J z}\, \widehat{J}\, \widehat{S}\, \psi^{\hbar}_{n} \\
    &= \rmi^{d/2}\,e^{-\frac{\rmi}{\hbar} p\cdot q}\, \varphi^{\hbar}_{n}(J S, J z;\,\cdot\,).
  \end{align*}
  Recall that $S \mapsto J S$ and $z \mapsto J z$ correspond to $(Q, P) \mapsto (P, -Q)$ and $(q, p) \mapsto (p, -q)$, respectively.
\end{proof}

\begin{remark}
  Again, the apparent difference in the constant factors---$\rmi^{d/2}$ in our expression whereas $(-\rmi)^{|k|}$ in Lemma~2.1 of \citet{Hagedorn1985} or Eq.~(3.19) of \citet{Ha1998}---stems from different parametrizations of elements in $S \in \Sp(2d,\R)$; see Remark~\ref{eq:QP_versus_AB}.
\end{remark}

\section{Minimal Uncertainty Products for Ground State Hagedorn Wave Packet}
\label{sec:minimal_uncertainty}
The characterization of the ladder operators of Hagedorn in Section~\ref{ssec:Hagedorn_ladder_operators} is also useful in generalizing the minimal uncertainty product obtained by \citet{Ha2013} for the one-dimensional case to $d$-dimensions for any $d \in \N$.

\subsection{Symplectic Rotation of Position \& Momentum Operators}
Let us first express the operators $\hat{x} - q$ and $\hat{p} - p$ in terms of the ladder operators $( \mathscr{A}(S,z), \mathscr{A}^{*}(S,z) )$ as is done in \citet{Ha1998}.
In our setting, this is done by inverting the relation~\eqref{def:As}.
Since $\mathcal{W}$ is unitary (see \eqref{eq:mathcalW}), one obtains
\begin{equation}
  \label{eq:zhat_in_As}
  \hat{z} - z
  = \sqrt{\hbar}\, S \mathcal{W}^{*}
  \begin{bmatrix}
    \mathscr{A}(S,z) \smallskip\\
    \mathscr{A}^{*}(S,z)
  \end{bmatrix}
  = \sqrt{\frac{\hbar}{2}}
  \begin{bmatrix}
    \overline{Q} & Q \smallskip\\
    \overline{P} & P
  \end{bmatrix}
  \begin{bmatrix}
    \mathscr{A}(S,z) \smallskip\\
    \mathscr{A}^{*}(S,z)
  \end{bmatrix},
\end{equation}
or
\begin{equation*}
  \hat{x} - q = \sqrt{\frac{\hbar}{2}}\parentheses{ \overline{Q} \mathscr{A}(S,z) + Q \mathscr{A}^{*}(S,z) },
  \qquad
  \hat{p} - p = \sqrt{\frac{\hbar}{2}}\parentheses{ \overline{P} \mathscr{A}(S,z) + P \mathscr{A}^{*}(S,z) },
\end{equation*}
which are (3.28) and (3.29) in \citet{Ha1998}.

Now, consider the set of $2d$ operators $\hat{\zeta} \defeq (\hat{\xi}, \hat{\eta})$ defined as a symplectic rotation by $\mathcal{R} \in \Sp(2d,\R) \cap \Orth(2d)$ of the operators $\hat{z} - z = (\hat{x} - q, \hat{p} - p)$ in the phase space $T^{*}\R^{d} \cong \R^{2d}$, i.e.,
\begin{equation}
  \label{eq:zeta}
  \hat{\zeta} =
  \begin{bmatrix}
    \hat{\xi} \\
    \hat{\eta}
  \end{bmatrix}
  \defeq \mathcal{R}(\hat{z} - z),
\end{equation}
or equivalently, by setting
\begin{equation}
  \label{eq:mathcalR}
  \mathcal{R} \defeq
  \begin{bmatrix}
    U & V \\
    -V & U
  \end{bmatrix} \in \Sp(2d,\R) \cap \Orth(2d)
  \iff
  U^{T}V = V^{T}U
  \text{ and }
  U^{T}U + V^{T}V = I_{d},
\end{equation}
we may write 
\begin{equation*}
  \hat{\zeta} =
  \begin{bmatrix}
    \hat{\xi} \\
    \hat{\eta}
  \end{bmatrix}
  \defeq
  \begin{bmatrix}
    U & V \\
    -V & U
  \end{bmatrix}
  \begin{bmatrix}
    \hat{x} - q\\
    \hat{p} - p
  \end{bmatrix}.
\end{equation*}
Note that the intersection $\Sp(2d,\R) \cap \Orth(2d)$ may be identified with the unitary group $\U(d)$ by the map
\begin{equation*}
  \mathcal{R} = 
  \begin{bmatrix}
    U & V \\
    -V & U
  \end{bmatrix}
  \mapsto
  U + \rmi V.
\end{equation*}
It is easy to see that $\hat{\zeta} = (\hat{\xi}, \hat{\eta})$ satisfies the canonical commutation relations on $\mathscr{S}(\R^{d})$:
Let us first rewrite
\begin{equation*}
  \hat{\zeta} = (J \mathcal{R}^{T})^{T} J (\hat{z} - z) = \rho(J \mathcal{R}^{T}; \hat{z} - z).
\end{equation*}
Since $J \mathcal{R}^{T} \in \Sp(2d,\R)$ and $\hat{z} - z$ clearly satisfies the canonical commutation relations, so does $\hat{\zeta} \defeq (\hat{\xi}, \hat{\eta})$ due to Proposition~\ref{prop:Littlejohn}.

\subsection{Minimal Uncertainty Products for Ground State Hagedorn Wave Packet}
Let us introduce some shorthand notation before stating the main result of this section.
Suppose $\mathscr{B}$ is a symmetric operator with domain $D(\mathscr{B}) = \mathscr{S}(\R^{d})$ along with the property that $\mathscr{B} \psi \in \mathscr{S}(\R^{d})$ for any $\psi \in \mathscr{S}(\R^{d})$.
We introduce the following shorthand notation for the expectation value for measurements of $\mathscr{B}$ in the state $\varphi^{\hbar}_{0}(S, z;\,\cdot\,)$:
\begin{equation*}
  \exval{ \mathscr{B} }_{0} \defeq \ip{ \varphi^{\hbar}_{0}(S,z;\,\cdot\,) }{\mathscr{B}\, \varphi^{\hbar}_{0}(S,z;\,\cdot\,) }.
\end{equation*}
For example, it is easy to see that $\bigl\langle \hat{\xi}_{j} \bigr\rangle_{0} = \bigl\langle \hat{\eta}_{j} \bigr\rangle_{0} = 0$ for each $j \in \{1, 2, \dots, d\}$.
Also, let us denote by $\Delta_{0} \mathscr{B}$ the uncertainty or standard deviation associated with measurements of $\mathscr{B}$ in the state $\varphi^{\hbar}_{0}(S, z;\,\cdot\,)$, i.e.,
\begin{equation*}
  (\Delta_{0} \mathscr{B})^{2} = \bigl\langle \mathscr{B}^{2} \bigr\rangle_{0} - \bigl\langle \mathscr{B} \bigr\rangle_{0}^{2}.
\end{equation*}
So we have, for example, $(\Delta_{0} \hat{\xi}_{j})^{2} = \bigl\langle \hat{\xi}_{j}^{2} \bigr\rangle_{0}$ and $(\Delta_{0} \hat{\eta}_{j})^{2} = \bigl\langle \hat{\eta}_{j}^{2} \bigr\rangle_{0}$ for each $j \in \{1, 2, \dots, d\}$ where no summation is assumed over $j$.

\citet{Ha2013} showed in the one-dimensional case, i.e., for $d = 1$, that there exists $\mathcal{R} \in \SO(2,\R)$ such that $\Delta_{0} \hat{\xi}_{1}$ and $\Delta_{0} \hat{\eta}_{1}$ give the minimal uncertainty product, i.e.,
\begin{equation*}
  \Delta_{0}\hat{\xi}_{1}\, \Delta_{0}\hat{\eta}_{1} = \frac{\hbar}{2}.
\end{equation*}
The main result of this section generalizes this to multi-dimensions:

\begin{theorem}[Minimal uncertainty products for $\varphi^{\hbar}_{0}(S,z;\,\cdot\,)$]
  There exists a symplectic rotation matrix $\mathcal{R} \in \Sp(2d,\R) \cap \Orth(2d)$ such that, for any $j \in \{1, 2, \dots, d\}$, the uncertainty product for the pair of operators $\hat{\xi}_{j}$ and $\hat{\eta}_{j}$ (defined in \eqref{eq:zeta}) with respect to the ground state~\eqref{eq:varphi_0} of the Hagedorn wave packets is minimized, i.e.,
  \begin{equation}
    \label{eq:MinimumUncertainty}
    \Delta_{0}\hat{\xi}_{j}\, \Delta_{0}\hat{\eta}_{j} = \frac{\hbar}{2}.
  \end{equation}
\end{theorem}

\begin{proof}
  We first write the set of operators $\hat{\zeta} \defeq (\hat{\xi}, \hat{\eta})$ in terms of the ladder operators $( \mathscr{A}, \mathscr{A}^{*} )$ using \eqref{eq:zhat_in_As} and \eqref{eq:zeta}:
  \begin{equation*}
    \hat{\zeta} = 
    \begin{bmatrix}
      \hat{\xi} \\
      \hat{\eta}
    \end{bmatrix}
    = \sqrt{\hbar}\, \mathcal{R} S \mathcal{W}^{*}
    \begin{bmatrix}
      \mathscr{A} \\
      \mathscr{A}^{*}
    \end{bmatrix},
  \end{equation*}
  where $\mathcal{R} \in \Sp(2d,\R) \cap \Orth(2d)$ is arbitrary for now, and we suppressed the parameters $(S,z)$ and used $\mathscr{A}$ as a shorthand for $\mathscr{A}(S,z)$.
  Then we find
  \begin{equation*}
    \begin{bmatrix}
      \hat{\xi}\, \hat{\xi}^{T} & \hat{\xi}\, \hat{\eta}^{T} \smallskip\\
      \hat{\eta}\, \hat{\xi}^{T}  & \hat{\eta}\, \hat{\eta}^{T}
    \end{bmatrix}
    = \hat{\zeta} \hat{\zeta}^{T}
    = \hbar\, \mathcal{R} S \mathcal{W}^{*}
    \begin{bmatrix}
      \mathscr{A} \mathscr{A}^{T} & \mathscr{A} (\mathscr{A}^{*})^{T} \smallskip\\
      \mathscr{A}^{*} \mathscr{A}^{T}  & \mathscr{A}^{*} (\mathscr{A}^{*})^{T}
    \end{bmatrix}
    \overline{\mathcal{W}} S^{T} \mathcal{R}^{T},
  \end{equation*}
  where $\hat{\xi}\, \hat{\eta}^{T}$, for example, stands for the $d \times d$ matrix of operators whose $(j,k)$-component is $\hat{\xi}_{j} \hat{\eta}_{k}$ and similarly for others.
  Taking the expectation values of both sides of the above equality with respect to the ground state wave packet \eqref{eq:varphi_0}, 
  \begin{equation*}
    \begin{bmatrix}
      \texval{ \hat{\xi}\, \hat{\xi}^{T} }_{0} & \texval{ \hat{\xi}\, \hat{\eta}^{T} }_{0} \smallskip\\
      \texval{ \hat{\eta}\, \hat{\xi}^{T} }_{0}  & \texval{ \hat{\eta}\, \hat{\eta}^{T} }_{0}
    \end{bmatrix}
    = 
    \hbar\, \mathcal{R} S \mathcal{W}^{*}
    \begin{bmatrix}
      \texval{ \mathscr{A} \mathscr{A}^{T} }_{0} & \texval{ \mathscr{A} (\mathscr{A}^{*})^{T} }_{0} \smallskip\\
      \texval{ \mathscr{A}^{*} \mathscr{A}^{T} }_{0} & \texval{ \mathscr{A}^{*} (\mathscr{A}^{*})^{T} }_{0}
    \end{bmatrix}
    \overline{\mathcal{W}} S^{T} \mathcal{R}^{T}.
  \end{equation*}
  However, writing $\varphi^{\hbar}_{0} = \varphi^{\hbar}_{0}(S,z;\,\cdot\,)$ for brevity, $\mathscr{A}_{j} \varphi^{\hbar}_{0} = 0$ for any $j \in \{1, \dots, d\}$; so
  \begin{equation*}
    \mathscr{A}_{j} \mathscr{A}_{k} \varphi^{\hbar}_{0} = 0,
    \qquad
    \mathscr{A}_{j}^{*} \mathscr{A}_{k} \varphi^{\hbar}_{0} = 0
  \end{equation*}
  and
  \begin{equation*}
    \texval{ \mathscr{A}_{j}^{*} \mathscr{A}_{k}^{*} }_{0} = \ip{ \varphi^{\hbar}_{0} }{ \mathscr{A}_{j}^{*} \mathscr{A}_{k}^{*} \varphi^{\hbar}_{0} }
    = \ip{ \mathscr{A}_{j} \varphi^{\hbar}_{0} }{ \mathscr{A}_{k}^{*} \varphi^{\hbar}_{0} } = 0
  \end{equation*}
  for any $j, k \in \{1, \dots, d\}$, whereas, using the identity $[\mathscr{A}_{j}(S,z), \mathscr{A}^{*}_{k}(S,z)] = \delta_{jk}$ from \eqref{eq:As-commutators},
 \begin{equation*}
   \mathscr{A}_{j} \mathscr{A}_{k}^{*} \varphi^{\hbar}_{0} = (\delta_{jk} + \mathscr{A}_{k}^{*} \mathscr{A}_{j}) \varphi^{\hbar}_{0} = \delta_{jk}\, \varphi^{\hbar}_{0}.
  \end{equation*}
  Hence $\texval{ \mathscr{A}_{j} \mathscr{A}_{k}^{*} }_{0} = \delta_{jk}$ and as a result we have
  \begin{equation*}
    \begin{bmatrix}
      \texval{ \mathscr{A} \mathscr{A}^{T} }_{0} & \texval{ \mathscr{A} (\mathscr{A}^{*})^{T} }_{0} \smallskip\\
      \texval{ \mathscr{A}^{*} \mathscr{A}^{T} }_{0} & \texval{ \mathscr{A}^{*} (\mathscr{A}^{*})^{T} }_{0}
    \end{bmatrix}
    =
    \begin{bmatrix}
      0 & I_{d} \\
      0 & 0
    \end{bmatrix},
  \end{equation*}
  and hence, using the expression \eqref{eq:mathcalW} for $\mathcal{W}$, 
  \begin{equation*}
    \mathcal{W}^{*}
    \begin{bmatrix}
      \texval{ \mathscr{A} \mathscr{A}^{T} }_{0} & \texval{ \mathscr{A} (\mathscr{A}^{*})^{T} }_{0} \smallskip\\
      \texval{ \mathscr{A}^{*} \mathscr{A}^{T} }_{0} & \texval{ \mathscr{A}^{*} (\mathscr{A}^{*})^{T} }_{0}
    \end{bmatrix}
    \overline{\mathcal{W}}
    = \frac{1}{2} 
    \begin{bmatrix}
      I_{d} & \rmi I_{d} \\
      -\rmi I_{d} & I_{d}
    \end{bmatrix}
    = \frac{1}{2}(I_{2d} + \rmi J).
  \end{equation*}
  Therefore, 
  \begin{equation*}
    \begin{bmatrix}
      \texval{ \hat{\xi}\, \hat{\xi}^{T} }_{0} & \texval{ \hat{\xi}\, \hat{\eta}^{T} }_{0} \smallskip\\
      \texval{ \hat{\eta}\, \hat{\xi}^{T} }_{0}  & \texval{ \hat{\eta}\, \hat{\eta}^{T} }_{0}
    \end{bmatrix}
    = 
    \frac{\hbar}{2}\, \mathcal{R} S (I_{2d} + \rmi J) S^{T} \mathcal{R}^{T}
    = 
    \frac{\hbar}{2}\, (\mathcal{R} S S^{T} \mathcal{R}^{T} + \rmi J),
  \end{equation*}
  because $S, \mathcal{R} \in \Sp(2d,\R)$; that is,
  \begin{equation*}
    (\Delta_{0}\hat{\xi}_{j})^{2} = \texval{ \hat{\xi}_{j}^{2} } = \frac{\hbar}{2}\, (\mathcal{R} S S^{T} \mathcal{R}^{T})_{jj},
    \qquad
    (\Delta_{0}\hat{\eta}_{j})^{2} = \texval{ \hat{\eta}_{j}^{2} } = \frac{\hbar}{2}\, (\mathcal{R} S S^{T} \mathcal{R}^{T})_{d+j,d+j}
  \end{equation*}
  for each $j \in \{1, 2, \dots, d\}$ (no summation is assumed over $j$).

  Now, since the matrix $S S^{T}$ is positive-definite and symplectic, we may choose $\mathcal{R} \in \Sp(2d,\R) \cap \Orth(2d)$ such that
  \begin{equation*}
    \mathcal{R} S S^{T} \mathcal{R}^{T} = \diag(\lambda_{1}, \dots, \lambda_{d}, 1/\lambda_{1}, \dots, 1/\lambda_{d}),
  \end{equation*}
  where $\lambda_{j} > 0$ for each $j \in \{1, 2, \dots, d\}$ (see, e.g., \citet[Proposition~32 on p.~26]{Go2011}).
  As a result we obtain, for each $j \in \{1, 2, \dots, d\}$, 
  \begin{equation*}
    (\Delta_{0}\hat{\xi}_{j})^{2} = \frac{\hbar}{2}\,\lambda_{j},
    \qquad
    (\Delta_{0}\hat{\eta}_{j})^{2} = \frac{\hbar}{2}\,\lambda_{j}^{-1},
  \end{equation*}
  which implies the minimum uncertainty relation~\eqref{eq:MinimumUncertainty}.
\end{proof}

\begin{example}[The one-dimensional case; {\citet{Ha2013}}]
  Consider the one-dimensional case, i.e., $d = 1$.
  The matrix $S$ in \eqref{eq:S_in_QP} is $2 \times 2$ with $Q, P \in \C$, and $\mathcal{R}$ in \eqref{eq:mathcalR} is in $\Sp(2,\R) \cap \Orth(2,\R) = \SO(2,\R)$ and thus can be written as
  \begin{equation*}
    \mathcal{R} =
    \begin{bmatrix}
      \cos\theta & \sin\theta \\
      -\sin\theta & \cos\theta
    \end{bmatrix}.
  \end{equation*}
  However, from the last step of the above proof, we know that the minimal uncertainty relation~\eqref{eq:MinimumUncertainty} is realized if the row vectors of $\mathcal{R}$ are the normalized eigenvectors of $S S^{T}$.
  Tedious but straightforward calculations of these eigenvectors yield
  \begin{equation*}
    \tan(2\theta) = \frac{2\Re(P\overline{Q})}{|Q|^{2} - |P|^{2}} = \frac{2\Im(B\overline{A})}{|B|^{2} - |A|^{2}},
  \end{equation*}
  where $A = Q$ and $B = -\rmi P$ is the notation of \citet{Ha1980,Ha1998,Ha2013}, and $|Q| \defeq (Q Q^{*})^{1/2}$.
  This is precisely Theorem~5.2 of \citet{Ha2013}.
\end{example}

\section{Generating Function for the Hagedorn Wave Packets}
\label{sec:generating_functions}
In Theorem~\ref{thm:Hagedorn-Hermite}, we established a link between the Hagedorn wave packets and Hermite functions using a unitary operator essentially consisting of the Heisenberg--Weyl and metaplectic operators.
This simple link suggests that those properties satisfied by the Hermite functions may also be adapted into the corresponding ones for the Hagedorn wave packets by means of the unitary operator.
One such example is the simple proof of Corollary~\ref{cor:Hagedorn_is_cons} that the Hagedorn wave packets form an orthonormal basis for $L^{2}(\R^{d})$.

As another example, this section takes the generating functions for the Hermite functions and polynomials and shows how they can be transformed into the generating functions for the Hagedorn wave packets and polynomials.
Such generating functions are obtained in \citet{DiKeTr2017} and \citet{Ha2015}.
We present an alternative derivation of them based on Theorem~\ref{thm:Hagedorn-Hermite} using the Heisenberg--Weyl and metaplectic operators.
Our derivation reveals how the generating functions of Hagedorn and Hermite are related to each other, and shows that the former follows from the latter.

\subsection{Generating Functions for the Hermite Functions and Polynomials}
Let us first briefly review the generating functions for the Hermite functions and polynomials.
See Appendix~\ref{sec:Hermite} for a more detailed account.
The semiclassically scaled Hermite functions $\{ \psi^{\hbar}_{n} \}_{n \in \N_{0}^{d}}$ are given as
\begin{equation*}
  \psi^{\hbar}_{n}(x)
  = \frac{p^{\hbar}_{n}(x)}{\sqrt{2^{|n|}n!}}\,\psi^{\hbar}_{0}(x),
\end{equation*}
where $\psi^{\hbar}_{0}$ is the ground state~\eqref{eq:psi_0} and $\{ p^{\hbar}_{n} \}_{n \in \N_{0}^{d}}$ are the {\em semiclassically scaled Hermite polynomials}; see \eqref{eq:p^hbar_n}.
It is well known that 
\begin{align}
  \label{eq:Gamma-Hermite}
  \Gamma^{\hbar}(w,x)
  \defeq \frac{1}{(\pi\hbar)^{d/4}}\,\exp\parentheses{ -\frac{x^{2}}{2\hbar} + \frac{2}{\sqrt{\hbar}}\,w^{T}x - w^{2} }
  = \sum_{n \in \N_{0}^{d}} \psi^{\hbar}_{n}(x)\,\frac{c_{n}}{n!}\,w^{n}
\end{align}
and
\begin{equation}
  \label{eq:gamma-Hermite}
  \gamma^{\hbar}(w,x)
  \defeq \frac{\gamma^{\hbar}(w, x)}{\psi^{\hbar}_{0}(x)}
  = \exp\parentheses{ \frac{2}{\sqrt{\hbar}}\,w^{T}x - w^{2} }
  = \sum_{n \in \N_{0}^{d}} p^{\hbar}_{n}(x)\,\frac{w^{n}}{n!},
\end{equation}
where $w \in \C^{n}$, and the coefficients $\{ c_{n} \}_{n \in \N_{0}^{d}}$ are defined in \eqref{eq:c_n}; hence we may call  $\Gamma^{\hbar}(w,x)$ and $\gamma^{\hbar}(w,x)$ the {\em generating functions} for the Hermite functions and Hermite polynomials, respectively.

\subsection{Transformation of Generating Function $\Gamma^{\hbar}$}
Now we would like to derive the generating functions for the Hagedorn wave packets and Hagedorn polynomials using the same techniques and tools as in Section~\ref{sec:Hagedorn_wave_packets}.

Based on what we have in Theorem~\ref{thm:Hagedorn-Hermite}, it is natural to conjecture that $e^{-\frac{\rmi}{2\hbar} p\cdot q}\, \widehat{T}_{z}\, \widehat{S}\, \Gamma^{\hbar}(w,\,\cdot\,)$ would give the generating function for the Hagedorn wave packets.
In fact, applying the operator $e^{-\frac{\rmi}{2\hbar} p\cdot q}\, \widehat{T}_{z}\, \widehat{S}$ to both sides of \eqref{eq:Gamma-Hermite}, we have, using \eqref{eq:Hagedorn-Hermite} in Theorem~\ref{thm:Hagedorn-Hermite},
\begin{equation*}
  e^{-\frac{\rmi}{2\hbar} p\cdot q}\, \widehat{T}_{z}\, \widehat{S}\, \Gamma^{\hbar}(w,\,\cdot\,)
  = \sum_{n \in \N_{0}^{d}} \varphi^{\hbar}_{n}(S, z;\,\cdot\,)\,\frac{c_{n}}{n!}\,w^{n},
\end{equation*}
because the operator $e^{-\frac{\rmi}{2\hbar} p\cdot q}\, \widehat{T}_{z}\, \widehat{S}$ is unitary and thus applies to the series on the right-hand side term by term.
Therefore, the problem boils down to finding an explicit expression of the function on the left-hand side of the above equation; the resulting function gives the generating function for the Hagedorn wave packets.

Finding an expression for the generating function---particularly the calculation of $\widehat{S}\,\Gamma^{\hbar}(w,\,\cdot\,)$---is a little tricky, because the metaplectic operators $\widehat{S} \in \Mp(2d,\R)$ are not always given in simple explicit forms as we mentioned in Section~\ref{ssec:Mp} (particularly Remark~\ref{rem:abstractness_of_Mp}).
Therefore, we first would like to find $\widehat{S}\,\Gamma^{\hbar}(w,\,\cdot\,)$ for the special case where $S \in \FSp(2d,\R)$, i.e., $S$ is a free symplectic matrix (see the definition~\eqref{eq:FSp} in Section~\ref{ssec:Mp}), because in this case $\widehat{S}$ is given explicitly as a quadratic Fourier transform~\eqref{eq:quad_Fourier_transform2}.
Let $S \in \FSp(2d,\R)$ and write, as in \eqref{eq:S_in_QP}, 
\begin{equation*}
  S =
  \begin{bmatrix}
    \Re Q & \Im Q \smallskip\\
    \Re P & \Im P
  \end{bmatrix},
\end{equation*}
where $\Im Q \neq 0$ is assumed by definition.
Then, using \eqref{eq:quad_Fourier_transform2} and evaluating the resulting Gaussian integral, we obtain
\begin{equation}
  \label{eq:Gamma-FSp}
  \widehat{S}\,\Gamma^{\hbar}(w, x)
  \defeq [\widehat{S}\,\Gamma^{\hbar}(w, \,\cdot\,)](x)
  = \frac{\mu(S,\rmi I_{d})}{(\pi\hbar)^{d/4}} \exp\parentheses{ \frac{\rmi}{2\hbar}x^{T} P Q^{-1} x + \frac{2}{\sqrt{\hbar}}\,w^{T} Q^{-1} x - w^{T} Q^{-1} \overline{Q} w },
\end{equation}
where $\mu\colon \Sp(2d,\R) \times \Sigma_{d} \to \C$ is defined as
\begin{equation}
  \label{eq:mu}
  \mu\parentheses{
    \begin{bmatrix}
      A & B \\
      C & D
    \end{bmatrix},
    \mathcal{Z}
    }
    \defeq [\det(A + B \mathcal{Z})]^{-1/2}
\end{equation}
so that $\mu(S,\rmi I_{d}) = (\det Q)^{-1/2}$.
Again, the sign of $\mu$ changes depending on the branch chosen for the square root.
The definition of the factor $\mu$ is a variant of that of \citet[Eq.~(4.61) on p.~201]{Fo1989} but $\mu$ retains the same key property: It is straightforward to show that, for any $S_{j} \in \Sp(2d,\R)$ with $j = 1, 2$ and $\mathcal{Z} \in \Sigma_{d}$, we have
\begin{equation*}
  \mu(S_{1} S_{2}, \mathcal{Z}) = \mu(S_{1}, \Psi_{S_{2}}(\mathcal{Z}))\, \mu(S_{2}, \mathcal{Z}),
\end{equation*}
where $\Psi$ is the action defined in \eqref{eq:Psi}, and we need to interpret the square roots with proper branch cuts.

This motivates us to {\em define} for {\em any} $S \in \Sp(2d,\R)$,
\begin{equation}
  \label{eq:Gamma-generalized}
  \Gamma^{\hbar}(S; w, x)
  \defeq \frac{\mu(S,\rmi I_{d})}{(\pi\hbar)^{d/4}} \exp\parentheses{ \frac{\rmi}{2\hbar}x^{T} P Q^{-1} x + \frac{2}{\sqrt{\hbar}}\,w^{T} Q^{-1} x - w^{T} Q^{-1} \overline{Q} w },
\end{equation}
where we note that both $P$ and $Q$ are invertible if $S = \begin{tbmatrix}
  \Re Q & \Im Q \smallskip\\
  \Re P & \Im P
\end{tbmatrix} \in \Sp(2d,\R)$.
The above definition generalizes the generating function $\Gamma^{\hbar}(w, x)$ for the Hermite functions because $\Gamma^{\hbar}(I_{2d}; w, x) = \Gamma^{\hbar}(w, x)$.
Clearly, if $S \in \FSp(2d,\R)$ then $\Gamma^{\hbar}(S; w, x) = \widehat{S}\,\Gamma^{\hbar}(w, x)$ by definition; but then we would like to show that it is the case for {\em any} $S \in \Sp(2d,\R)$ so that \eqref{eq:Gamma-FSp} holds for {\em any} $S \in \Sp(2d,\R)$, i.e., $\widehat{S}\,\Gamma^{\hbar}(w, x) = \Gamma^{\hbar}(S; w, x)$ for {\em any} $S \in \Sp(2d,\R)$.
To that end, we first prove the following lemma; it is slightly more general than what we need, but may be thought of as the symplectic covariance of the generating function~\eqref{eq:Gamma-generalized}:

\begin{lemma}[Symplectic covariance of generating function $\Gamma^{\hbar}$]
  \label{lem:Mp-gen_fcn}
  Let $S_{0} \in \Sp(2d,\R)$ and $\widehat{S} \in \Mp(2d,\R)$ with $S \defeq \pi_{\Mp}(\widehat{S}) \in \Sp(2d,\R)$.
  Then, 
  \begin{equation*}
    \widehat{S}\,\Gamma^{\hbar}(S_{0}; w, x) = \Gamma^{\hbar}(S S_{0}; w, x).
  \end{equation*}
\end{lemma}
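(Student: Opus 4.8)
The plan is to prove the identity first for free symplectic matrices, where $\widehat{S}$ is available as an explicit quadratic Fourier transform, and then bootstrap to an arbitrary $S \in \Sp(2d,\R)$ using the factorization of symplectic matrices into free ones together with the homomorphism property of the metaplectic representation. Throughout, the two structural identities that do the real work are the group law $\Psi_{S} \circ \Psi_{S_{0}} = \Psi_{S S_{0}}$ for the linear fractional action~\eqref{eq:Psi} and the cocycle law for $\mu$ stated below~\eqref{eq:mu}.

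First I would treat the base case $S \in \FSp(2d,\R)$ with $S_{0} \in \Sp(2d,\R)$ arbitrary. Here $\widehat{S}$ is given explicitly by the quadratic Fourier transform~\eqref{eq:quad_Fourier_transform2}, so that $\widehat{S}\,\Gamma^{\hbar}(S_{0}; w, \,\cdot\,)$ is the integral of the Gaussian~\eqref{eq:Gamma-generalized} against a Gaussian kernel; for each fixed $x$ and $w$ the integrand is a pure Gaussian in the integration variable, and evaluating it amounts to completing the square and applying the standard Gaussian integral formula, exactly as in the derivation of~\eqref{eq:Gamma-FSp} (the special case $S_{0} = I_{2d}$). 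The task is then to recognize the output as $\Gamma^{\hbar}(S S_{0}; w, x)$. Writing $S S_{0} = \begin{tbmatrix} \Re Q' & \Im Q' \smallskip\\ \Re P' & \Im P' \end{tbmatrix}$, the three structural pieces to match are: the quadratic term $\frac{\rmi}{2\hbar} x^{T} P' Q'^{-1} x$, which follows from $\Psi_{S} \circ \Psi_{S_{0}} = \Psi_{S S_{0}}$ together with $\Psi_{S_{0}}(\rmi I_{d}) = P_{0} Q_{0}^{-1}$ from~\eqref{eq:PQinv}; the $w$-linear term $\frac{2}{\sqrt{\hbar}}\,w^{T} Q'^{-1} x$ and the $w$-quadratic constant $-w^{T} Q'^{-1} \overline{Q'} w$; and the scalar prefactor, which assembles into $\mu(S S_{0}, \rmi I_{d})$ via the cocycle identity $\mu(S S_{0}, \rmi I_{d}) = \mu(S, \Psi_{S_{0}}(\rmi I_{d}))\,\mu(S_{0}, \rmi I_{d})$.

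Second I would remove the restriction to free symplectic matrices. Every $S \in \Sp(2d,\R)$ factors as a product $S = S_{1} S_{2}$ of two free symplectic matrices (see, e.g., \citet{Go2011}), and the metaplectic lifts $\widehat{S_{1}}, \widehat{S_{2}}$ of $S_{1}, S_{2}$ may be chosen so that $\widehat{S} = \widehat{S_{1}}\,\widehat{S_{2}}$, the sign ambiguity being absorbed into one factor. Applying the base case twice then gives
\begin{equation*}
  \widehat{S}\,\Gamma^{\hbar}(S_{0}; w, x)
  = \widehat{S_{1}}\,\widehat{S_{2}}\,\Gamma^{\hbar}(S_{0}; w, x)
  = \widehat{S_{1}}\,\Gamma^{\hbar}(S_{2} S_{0}; w, x)
  = \Gamma^{\hbar}(S_{1} S_{2} S_{0}; w, x)
  = \Gamma^{\hbar}(S S_{0}; w, x),
\end{equation*}
where each inner step is the base case applied with $S_{2}$ (resp. $S_{1}$) free and the general matrix $S_{0}$ (resp. $S_{2} S_{0}$) in the second slot.

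The main obstacle is the base-case Gaussian integral together with the bookkeeping that identifies its output with $\Gamma^{\hbar}(S S_{0}; w, x)$: one must track the quadratic, $w$-linear, and $w$-quadratic contributions simultaneously and verify that the transformed quadratic parameter is exactly $\Psi_{S}(P_{0} Q_{0}^{-1}) = P' Q'^{-1}$ while the $w$-dependent data reorganize into $Q'^{-1}$ and $Q'^{-1} \overline{Q'}$. I expect this to be transparent only if the group law for $\Psi$ and the cocycle law for $\mu$ are invoked structurally rather than by manipulating block entries directly; with those two identities in hand, the computation parallels~\eqref{eq:Gamma-FSp} and the square-root branch tracking across the factorization becomes automatic, since the cocycle law guarantees the product of the two free-case prefactors reassembles into $\mu(S S_{0}, \rmi I_{d})$ with the branch matching the chosen lift $\widehat{S}$.
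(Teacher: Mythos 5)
Your proposal is correct, but it takes a genuinely different route from the paper. The paper exploits the generator decomposition of $\Mp(2d,\R)$: it reduces the lemma to the three cases $\widehat{S} \in \{\widehat{J}, \widehat{V}_{R}, \widehat{M}_{L}\}$, of which only the $\widehat{J}$ case requires an actual Gaussian integral (resolved via the identity $Q_{0}^{*}P_{0} - P_{0}^{*}Q_{0} = 2\rmi I_{d}$, which yields $2\rmi P_{0}^{-1}(Q_{0}^{-1})^{T} - Q_{0}^{-1}\overline{Q}_{0} = P_{0}^{-1}\overline{P}_{0}$), while the $\widehat{V}_{R}$ and $\widehat{M}_{L}$ cases are immediate. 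You instead prove a single heavier base case---the covariance for all $S \in \FSp(2d,\R)$ with arbitrary $S_{0}$, generalizing the computation behind \eqref{eq:Gamma-FSp}---and then compose twice using the factorization $S = S_{1}S_{2}$ into free symplectic matrices from Remark~\ref{rem:abstractness_of_Mp}. Both composition arguments are logically parallel (the paper's generator reduction implicitly uses the same ``covariance composes'' induction that your two-factor step makes explicit), and your base case does go through: writing $Q' = AQ_{0} + BP_{0}$, $P' = CQ_{0} + DP_{0}$, the completed square produces the quadratic parameter $\Psi_{S}(P_{0}Q_{0}^{-1}) = P'Q'^{-1}$ via $DB^{-1} - B^{-T}Q_{0}Q'^{-1} = P'Q'^{-1}$ (equivalent to $D^{T}A - B^{T}C = I_{d}$), the $w$-linear term reorganizes into $Q'^{-1}$ via the symmetry of $Q_{0}Q'^{-1}B$, and the $w$-quadratic term matches via $P_{0}Q_{0}^{-1}\overline{Q}_{0} - \overline{P}_{0} = 2\rmi Q_{0}^{-T}$---note this last is exactly the conjugated form of the identity the paper uses in its $\widehat{J}$ case, so the same symplectic relation does the essential work in both proofs. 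What the paper's route buys is three light, independent verifications with minimal block bookkeeping; what your route buys is that the structural content is foregrounded (the group law $\Psi_{S}\circ\Psi_{S_{0}} = \Psi_{SS_{0}}$ and the cocycle law for $\mu$, which indeed assembles the prefactor since $A + BP_{0}Q_{0}^{-1} = Q'Q_{0}^{-1}$ gives $\mu(S, \Psi_{S_{0}}(\rmi I_{d}))\,\mu(S_{0},\rmi I_{d}) = (\det Q')^{-1/2}$), plus only two composition steps; your treatment of the sign ambiguity by absorbing it into one free factor is consistent with the paper's branch-cut conventions. The only caveat is that your write-up is a plan rather than an executed computation---the block identities above must actually be verified---but each one checks out against the symplectic relations \eqref{def:Sp2dR-block} and \eqref{def:Sp2d-block-Lubich}, so there is no gap in the approach.
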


\begin{remark}
  The above expression $\widehat{S}_{0} \Gamma^{\hbar}(S; w, x)$ is a shorthand for $[\widehat{S}_{0} \Gamma^{\hbar}(S; w, \,\cdot\,)](x)$.
  We will use similar shorthands below for notational simplicity.
\end{remark}

\begin{proof}
  Recall from Section~\ref{ssec:Mp} that the metaplectic group $\Mp(2d,\R)$ is generated by $\widehat{J}$, $\widehat{M}_{L}$, and $\widehat{V}_{R}$ with $L \in \GL(d,\R)$ and $R \in \Sym(d,\R)$.
  So it suffices to prove the above assertion for those cases where $\widehat{S}$ is $\widehat{J}$, $\widehat{M}_{L}$, and $\widehat{V}_{R}$ for any $L \in \GL(d,\R)$ and $R \in \Sym(d,\R)$.

  First set $\widehat{S} = \widehat{J}$.
  We would like to show that $\widehat{J} \Gamma^{\hbar}(S_{0}; w, x) = \Gamma^{\hbar}(J S_{0}; w, x)$.
  Let us first evaluate $\widehat{J} \Gamma^{\hbar}(S_{0}; w, x)$.
  We have, using \eqref{eq:Jhat},
  \begin{equation*}
    \widehat{J} \Gamma^{\hbar}(S_{0}; w, x) = \frac{1}{ (2\pi\hbar\,\rmi)^{d/2} } \int_{\R^{d}} e^{-\frac{\rmi}{\hbar}x \cdot \tilde{x}}\, \Gamma^{\hbar}(S_{0};w,\tilde{x})\,d\tilde{x},
  \end{equation*}
  but then, using the expression \eqref{eq:Gamma-generalized}, the integrand becomes 
  \begin{multline*}
    e^{-\frac{\rmi}{\hbar}x \cdot \tilde{x}}\, \Gamma^{\hbar}(S_{0};w,\tilde{x}) \\
    = \frac{\parentheses{ \det Q_{0} }^{-1/2}}{(\pi\hbar)^{d/4}} \exp\brackets{ -\pi \tilde{x}^{T} \parentheses{ \frac{1}{2\pi\hbar\,\rmi}\,P_{0} Q_{0}^{-1} } \tilde{x}
      - 2\pi\rmi \parentheses{
      \frac{x}{2\pi\hbar} + \frac{\rmi}{\pi\sqrt{\hbar}}\,(Q_{0}^{-1})^{T}w
      } \cdot \tilde{x}
      - w^{T} Q_{0}^{-1} \overline{Q}_{0} w }.
  \end{multline*}
  Carrying out the integral (see \citet[Theorem~1 on p.~256]{Fo1989} for a useful formula for such Gaussian integrals) gives
  \begin{align*}
    \widehat{J} \Gamma^{\hbar}&(S_{0}; w, x) \\
                              &= \frac{\exp(- w^{T} Q_{0}^{-1} \overline{Q}_{0} w)}{(2\pi\hbar\,\rmi)^{d/2} (\pi\hbar)^{d/4}}
    \parentheses{ \det Q_{0} }^{-1/2} \brackets{ \det\parentheses{ \frac{1}{2\pi\hbar\,\rmi}\,P_{0}Q_{0}^{-1} } }^{-1/2} \\
                              &\quad\times
                                \exp\brackets{
    -\pi \parentheses{ \frac{x}{2\pi\hbar} + \frac{\rmi}{\pi\sqrt{\hbar}}\,(Q_{0}^{-1})^{T}w }^{T}
                                \parentheses{ \frac{1}{2\pi\hbar\,\rmi}\,P_{0}Q_{0}^{-1} }^{-1}
                                \parentheses{ \frac{x}{2\pi\hbar} + \frac{\rmi}{\pi\sqrt{\hbar}}\,(Q_{0}^{-1})^{T}w }
                                } \\
                              &= \frac{(\det P_{0})^{-1/2}}{(\pi\hbar)^{d/4}}
                                \exp\braces{ \frac{\rmi}{2\hbar}x^{T} (-Q_{0} P_{0}^{-1}) x + \frac{2}{\sqrt{\hbar}}\,w^{T} P_{0}^{-1} x + w^{T} \brackets{2\rmi P_{0}^{-1} (Q_{0}^{-1})^{T} - Q_{0}^{-1} \overline{Q}_{0}} w }.
  \end{align*}
  Now for the last term, recall from \eqref{def:Sp2d-block-Lubich} that $S_{0} = \begin{tbmatrix}
  \Re Q_{0} & \Im Q_{0} \smallskip\\
  \Re P_{0} & \Im P_{0}
\end{tbmatrix} \in \Sp(2d,\R)$ implies that $Q_{0}^{*}P_{0} - P_{0}^{*}Q_{0} = 2\rmi I_{d}$.
  Taking the transpose of it and multiplying both sides from the left by $P_{0}^{-1}(Q_{0}^{-1})^{T}$ gives
  \begin{equation*}
    P_{0}^{-1} (Q_{0}^{-1})^{T}P_{0}^{T} \overline{Q}_{0} - P_{0}^{-1}\overline{P}_{0} = 2\rmi P_{0}^{-1}(Q_{0}^{-1})^{T}.
  \end{equation*}
  But then $P_{0}Q_{0}^{-1} \in \Sigma_{d}$ as we have seen in \eqref{eq:PQinv}, and so $(Q_{0}^{-1})^{T}P_{0}^{T} = (P_{0}Q_{0}^{-1})^{T} = P_{0}Q_{0}^{-1}$; hence,
  \begin{equation*}
   Q_{0}^{-1} \overline{Q}_{0} - P_{0}^{-1}\overline{P}_{0} = 2\rmi P_{0}^{-1}(Q_{0}^{-1})^{T},
  \end{equation*}
  and so $2\rmi P_{0}^{-1}(Q_{0}^{-1})^{T} - Q_{0}^{-1} \overline{Q}_{0} = P_{0}^{-1}\overline{P}_{0}$.
  Therefore, we have
  \begin{equation*}
    \widehat{J} \Gamma^{\hbar}(S_{0}; w, x) 
    = \frac{(\det P_{0})^{-1/2}}{(\pi\hbar)^{d/4}}
      \exp\brackets{ \frac{\rmi}{2\hbar}x^{T} (-Q_{0} P_{0}^{-1}) x + \frac{2}{\sqrt{\hbar}}\,w^{T} P_{0}^{-1} x - w^{T} P_{0}^{-1} \overline{P}_{0} w }.
  \end{equation*}
  How about $\Gamma^{\hbar}(J S_{0}; w, x)$?
  We have
  \begin{equation*}
    J S_{0} =
    \begin{bmatrix}
      0 & I_{d} \smallskip\\
      -I_{d} & 0
    \end{bmatrix}
    \begin{bmatrix}
      \Re Q_{0} & \Im Q_{0} \smallskip\\
      \Re P_{0} & \Im P_{0}
    \end{bmatrix}
    =
    \begin{bmatrix}
      \Re P_{0} & \Im P_{0} \smallskip\\
      -\Re Q_{0} & -\Im Q_{0}
    \end{bmatrix},
  \end{equation*}
  and thus $\mu(J S_{0}, \rmi I_{d}) = (\det P_{0})^{-1/2}$; so it is easy to see from the definition~\eqref{eq:Gamma-generalized} that $\Gamma^{\hbar}(J S_{0}; w, x)$ takes the same form as the above expression for $\widehat{J} \Gamma^{\hbar}(S_{0}; w, x)$; hence $\widehat{J} \Gamma^{\hbar}(S_{0}; w, x) = \Gamma^{\hbar}(J S_{0}; w, x)$.
  
  Next set $\widehat{S} = \widehat{V}_{R}$ with any $R \in \Sym(d,\R)$.
  It is easy to see that, using \eqref{eq:Vhat}
  \begin{equation*}
    \widehat{V}_{R} \Gamma^{\hbar}(S_{0}; w, x)
    = \frac{(\det Q_{0})^{-1/2}}{(\pi\hbar)^{d/4}} \exp\brackets{
      \frac{\rmi}{2\hbar}x^{T} (R + P_{0} Q_{0}^{-1}) x + \frac{2}{\sqrt{\hbar}}\,w^{T} Q_{0}^{-1} x - w^{T} Q_{0}^{-1} \overline{Q}_{0} w
    }.
  \end{equation*}
  On the other hand,
  \begin{equation*}
    V_{R} S_{0} = 
    \begin{bmatrix}
      I_{d} & 0 \smallskip\\
      R & I_{d}
    \end{bmatrix}
    \begin{bmatrix}
      \Re Q_{0} & \Im Q_{0} \smallskip\\
      \Re P_{0} & \Im P_{0}
    \end{bmatrix} \\
    =
    \begin{bmatrix}
      \Re Q_{0} & \Im Q_{0} \smallskip\\
      R \Re Q_{0} + \Re P_{0} & R \Im Q_{0} + \Im P_{0}
    \end{bmatrix},
  \end{equation*}
  and thus $\mu(V_{R} S_{0}, \rmi I_{d}) = (\det Q_{0})^{-1/2}$; hence, using \eqref{eq:Gamma-generalized}, $\Gamma^{\hbar}(V_{R} S_{0}; w, x)$ yields the same expression as the one above for $\widehat{V}_{R} \Gamma^{\hbar}(S_{0}; w, x)$; hence $\widehat{V}_{R} \Gamma^{\hbar}(S_{0}; w, x) = \Gamma^{\hbar}(V_{R} S_{0}; w, x)$.
  
  Finally, with $\widehat{S} = \widehat{M}_{L}$ for any $L \in \GL(d,\R)$, we have, using \eqref{eq:Mhat2}, 
  \begin{equation*}
    \widehat{M}_{L} \Gamma^{\hbar}(S_{0}; w, x)
    = \frac{[\det (L^{-1} Q_{0})]^{-1/2}}{(\pi\hbar)^{d/4}} \exp\parentheses{
      \frac{\rmi}{2\hbar}x^{T} L^{T} P_{0} Q_{0}^{-1} L x + \frac{2}{\sqrt{\hbar}}\,w^{T} Q_{0}^{-1} L x - w^{T} Q_{0}^{-1} \overline{Q}_{0} w
    },
  \end{equation*}
   whereas 
  \begin{equation*}
    M_{L} S_{0} =
    \begin{bmatrix}
      L^{-1} & 0 \smallskip\\
      0 & L^{T}
    \end{bmatrix}
    \begin{bmatrix}
      \Re Q_{0} & \Im Q_{0} \smallskip\\
      \Re P_{0} & \Im P_{0}
    \end{bmatrix} \\
    =
    \begin{bmatrix}
      L^{-1}\Re Q_{0} & L^{-1}\Im Q_{0} \smallskip\\
      L^{T}\Re P_{0} & L^{T}\Im P_{0}
    \end{bmatrix},
  \end{equation*}
  and thus $\mu(M_{L} S_{0}, \rmi I_{d}) = [\det(L^{-1} Q_{0})]^{-1/2}$; so it is easy to see, using \eqref{eq:Gamma-generalized}, that $\widehat{M}_{L} \Gamma^{\hbar}(S_{0}; w, x) = \Gamma^{\hbar}(M_{L} S_{0}; w, x)$.
\end{proof}

\subsection{The Generating Function for the Hagedorn Wave Packets}
It is now easy to prove the main result of this section:
\begin{theorem}
  Let $\Gamma^{\hbar}(w,\,\cdot\,)$ be the generating function~\eqref{eq:Gamma-Hermite} for the Hermite functions, and define the function $\Gamma^{\hbar}(S, z; w,\,\cdot\,) \in \mathscr{S}(\R^{d})$ with $S =
  \begin{tbmatrix}
    \Re Q & \Im Q \smallskip\\
    \Re P & \Im P
  \end{tbmatrix} \in \Sp(2d,\R)$, $z \in T^{*}\R^{d}$, and $w \in \C^{d}$ as
  \begin{equation}
    \label{eq:Gamma-Hagedorn}
    \Gamma^{\hbar}(S, z; w,\,\cdot\,) \defeq e^{-\frac{\rmi}{2\hbar} p\cdot q}\, \widehat{T}_{z}\, \widehat{S}\, \Gamma^{\hbar}(w,\,\cdot\,).
  \end{equation}
  Then it takes the form
  \begin{equation}
    \label{eq:Gamma-Hagedorn2}
    \Gamma^{\hbar}(S, z; w, x) = \varphi^{\hbar}_{0}(S, z; x) \exp\parentheses{ \frac{2}{\sqrt{\hbar}}\,w^{T} Q^{-1} (x - q) - w^{T} Q^{-1} \overline{Q} w },
  \end{equation}
  and is the generating function for the Hagedorn wave packets $\{ \varphi^{\hbar}_{n}(S,z;\,\cdot\,) \}_{n \in \N_{0}^{d}}$, i.e.,
  \begin{equation*}
    \Gamma^{\hbar}(S, z; w, x) = \sum_{n \in \N_{0}^{d}} \varphi^{\hbar}_{n}(S, z; x)\,\frac{w^{n}}{n!}.
  \end{equation*}
  Equivalently,
  \begin{equation}
    \label{eq:gamma-Hagedorn}
    \gamma^{\hbar}(S, z; w, x) \defeq \frac{\Gamma^{\hbar}(S, z; w, x)}{\varphi^{\hbar}_{0}(S, z; x)}
    = \exp\parentheses{ \frac{2}{\sqrt{\hbar}}\,w^{T} Q^{-1} (x - q) - w^{T} Q^{-1} \overline{Q} w }
  \end{equation}
  is the generating function for the Hagedorn polynomials $\{ \mathcal{P}^{\hbar}_{n}(S, z; \,\cdot\,) \}_{n \in \N_{0}^{d}}$, i.e.,
  \begin{equation*}
    \gamma^{\hbar}(S, z; w, x) = \sum_{n \in \N_{0}^{d}} \mathcal{P}^{\hbar}_{n}(S, z; \,\cdot\,)\,\frac{w^{n}}{n!}.
  \end{equation*}
\end{theorem}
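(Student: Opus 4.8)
The plan is to read both assertions off the corresponding Hermite facts by transporting them through the single unitary $\widehat{U} \defeq e^{-\frac{\rmi}{2\hbar} p\cdot q}\, \widehat{T}_{z}\, \widehat{S}$, exactly the operator used in Theorem~\ref{thm:Hagedorn-Hermite} and Corollary~\ref{cor:Hagedorn_is_cons}. Since $\Gamma^{\hbar}(S,z;w,\cdot) = \widehat{U}\,\Gamma^{\hbar}(w,\cdot)$ by the definition~\eqref{eq:Gamma-Hagedorn}, everything reduces to computing the action of $\widehat{U}$ on the Hermite generating function~\eqref{eq:Gamma-Hermite}, first to extract the closed form~\eqref{eq:Gamma-Hagedorn2} and then to expand it as a series.

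For the closed form, I would first dispose of the metaplectic factor using the symplectic covariance already proved. Observing that $\Gamma^{\hbar}(w,\cdot) = \Gamma^{\hbar}(I_{2d}; w, \cdot)$ in the notation of~\eqref{eq:Gamma-generalized}, Lemma~\ref{lem:Mp-gen_fcn} with $S_{0} = I_{2d}$ gives at once $\widehat{S}\,\Gamma^{\hbar}(w,\cdot) = \Gamma^{\hbar}(S; w, \cdot)$, whose explicit Gaussian form is~\eqref{eq:Gamma-generalized}. It then remains only to apply $e^{-\frac{\rmi}{2\hbar} p\cdot q}\, \widehat{T}_{z}$. Here I would factor $\Gamma^{\hbar}(S; w, x) = \Gamma^{\hbar}(S; 0, x)\,\exp\!\parentheses{ \frac{2}{\sqrt{\hbar}}\,w^{T} Q^{-1} x - w^{T} Q^{-1} \overline{Q} w }$, noting that $\Gamma^{\hbar}(S; 0, \cdot) = \widehat{S}\,\psi^{\hbar}_{0}$ is precisely the squeezed ground state. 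The point is that the phase and the translation $x \mapsto x - q$ carried by $e^{-\frac{\rmi}{2\hbar} p\cdot q}\, \widehat{T}_{z}$ convert $\Gamma^{\hbar}(S;0,\cdot)$ into $\varphi^{\hbar}_{0}(S,z;\cdot)$ by the ground-state identity~\eqref{eq:Hagedorn-Hermite_0}, while their only effect on the remaining $w$-dependent multiplier is to shift its argument to $x - q$. Combining the two factors yields~\eqref{eq:Gamma-Hagedorn2} directly.

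For the generating-function property, I would apply $\widehat{U}$ termwise to~\eqref{eq:Gamma-Hermite}. The series there converges in $L^{2}(\R^{d})$ for every $w \in \C^{d}$ (the squared coefficients sum to $\prod_{j} e^{2\abs{w_{j}}^{2}}$ against the orthonormal $\psi^{\hbar}_{n}$), so the bounded operator $\widehat{U}$ passes through the sum, and~\eqref{eq:Hagedorn-Hermite} of Theorem~\ref{thm:Hagedorn-Hermite} replaces each $\psi^{\hbar}_{n}$ by $\varphi^{\hbar}_{n}(S,z;\cdot)$ with the same coefficients. This is the claimed expansion of $\Gamma^{\hbar}(S,z;\cdot)$ in the Hagedorn wave packets. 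Dividing through by $\varphi^{\hbar}_{0}(S,z;\cdot)$ and substituting the definition~\eqref{eq:P_n} of the Hagedorn polynomials then gives~\eqref{eq:gamma-Hagedorn}, the $c_{n}$ factors cancelling to leave the coefficients $w^{n}/n!$.

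I do not expect a genuine obstacle: the one nontrivial computation---the metaplectic action $\widehat{S}$ on a Gaussian-times-exponential, including the Gaussian integral and the branch of $\mu$---has already been carried out in Lemma~\ref{lem:Mp-gen_fcn}, so this theorem is essentially its corollary. The only points requiring care are the bookkeeping of the Heisenberg--Weyl phase and translation in~\eqref{eq:Gamma-Hagedorn2}, and the justification (via $L^{2}$ convergence and boundedness of $\widehat{U}$) that $\widehat{U}$ may be applied term by term to the infinite series.
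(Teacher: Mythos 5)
Your proposal is correct and follows essentially the same route as the paper's proof: Lemma~\ref{lem:Mp-gen_fcn} with $S_{0} = I_{2d}$ to evaluate $\widehat{S}\,\Gamma^{\hbar}(w,\,\cdot\,)$, identification of the Gaussian factor with $\varphi^{\hbar}_{0}(S,z;\,\cdot\,)$ under $e^{-\frac{\rmi}{2\hbar} p\cdot q}\,\widehat{T}_{z}$ (the paper verifies the equivalent identity directly from \eqref{eq:varphi_0} rather than citing \eqref{eq:Hagedorn-Hermite_0}, a cosmetic difference), and termwise application of the unitary to the Hermite series via Theorem~\ref{thm:Hagedorn-Hermite}, followed by division by $\varphi^{\hbar}_{0}$ and the definition~\eqref{eq:P_n}. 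Your explicit $L^{2}$-convergence estimate for the series is a slight refinement of the paper's bare appeal to unitarity for the term-by-term step, but the argument is otherwise identical.
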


\begin{remark}
  Again, strictly speaking, there are two expressions for \eqref{eq:Gamma-Hagedorn2} that differ by the sign, depending on how one takes the branch cut in $\mu(S,\rmi I_{d}) = (\det Q)^{-1/2}$ of $\varphi^{\hbar}_{0}(S,z;\,\cdot\,)$; see \eqref{eq:varphi_0}.
\end{remark}

\begin{proof}
  First it is easy to see that setting $S_{0} = I_{2d}$ in Lemma~\ref{lem:Mp-gen_fcn} implies that $\widehat{S}\,\Gamma^{\hbar}(w, x) = \Gamma^{\hbar}(S; w, x)$ holds for any $S \in \Sp(2d,\R)$, and therefore, using \eqref{eq:Gamma-generalized}, we have
  \begin{equation*}
    \widehat{S}\,\Gamma^{\hbar}(w, x) = \Gamma^{\hbar}(S; w, x) = \frac{\mu(S,\rmi I_{d})}{(\pi\hbar)^{d/4}} \exp\parentheses{ \frac{\rmi}{2\hbar}x^{T} P Q^{-1} x + \frac{2}{\sqrt{\hbar}}\,w^{T} Q^{-1} x - w^{T} Q^{-1} \overline{Q} w }
  \end{equation*}
  for any $S \in \Sp(2d,\R)$.
  Then the expression \eqref{eq:Gamma-Hagedorn2} follows easily from the definition~\eqref{eq:Gamma-Hagedorn}:
  \begin{align*}
    \Gamma^{\hbar}(S, z; w,x)
    &\defeq e^{-\frac{\rmi}{2\hbar} p\cdot q}\, \widehat{T}_{z}\, \widehat{S}\, \Gamma^{\hbar}(w,x) \\
    &= e^{-\frac{\rmi}{2\hbar} p\cdot q}\, \widehat{T}_{z}\, \Gamma^{\hbar}(S; w,x) \\
    &= \frac{\mu(S,\rmi I_{d})}{(\pi\hbar)^{d/4}}\, e^{-\frac{\rmi}{2\hbar} p\cdot q}\, \widehat{T}_{z} \brackets{
      \exp\parentheses{ \frac{\rmi}{2\hbar}x^{T} P Q^{-1} x }
      \exp\parentheses{ \frac{2}{\sqrt{\hbar}}\,w^{T} Q^{-1} x - w^{T} Q^{-1} \overline{Q} w }
      } \\
    &= \varphi^{\hbar}_{0}(S, z; x)
      \exp\parentheses{ \frac{2}{\sqrt{\hbar}}\,w^{T} Q^{-1} (x - q) - w^{T} Q^{-1} \overline{Q} w },
  \end{align*}
  where we used the following identity in the last equality:
  \begin{equation*}
    \varphi^{\hbar}_{0}(S, z; x) = \frac{\mu(S,\rmi I_{d})}{(\pi\hbar)^{d/4}}\, e^{-\frac{\rmi}{2\hbar} p\cdot q}\,
    \widehat{T}_{z} \brackets{
      \exp\parentheses{ \frac{\rmi}{2\hbar}x^{T} P Q^{-1} x }
    },
  \end{equation*}
  which is easy to verify using \eqref{eq:varphi_0} and \eqref{eq:mu} along with \eqref{eq:That}.

  Now recall the generating function~\eqref{eq:Gamma-Hermite} of the Hermite polynomials, i.e.,
  \begin{equation*}
    \Gamma^{\hbar}(w, x)
    = \sum_{n \in \N_{0}^{d}} \psi^{\hbar}_{n}(x)\,\frac{c_{n}}{n!}\,w^{n},
  \end{equation*}
  and let us apply the operator $e^{-\frac{\rmi}{2\hbar} p\cdot q}\, \widehat{T}_{z}\, \widehat{S}$ to both sides.
  As mentioned in the beginning of the section, this operator is unitary and thus applies to the series on the right-hand side term by term, i.e.,
  \begin{equation*}
    \Gamma^{\hbar}(S, z;w,x)
    = \sum_{n \in \N_{0}^{d}} e^{-\frac{\rmi}{2\hbar} p\cdot q}\, \widehat{T}_{z}\, \widehat{S}\,\psi^{\hbar}_{n}(x)\,\frac{c_{n}}{n!}\,w^{n} \\
    = \sum_{n \in \N_{0}^{d}} \varphi^{\hbar}_{n}(S, z; x)\,\frac{c_{n}}{n!}\,w^{n},
  \end{equation*}
  where we used \eqref{eq:Hagedorn-Hermite} from Theorem~\ref{thm:Hagedorn-Hermite}.
  Dividing both sides by $\varphi^{\hbar}_{0}(S, z; x)$, we have,
  \begin{equation*}
    \gamma^{\hbar}(S, z; w, x)
    = \sum_{n \in \N_{0}^{d}} c_{n}\,\frac{ \varphi^{\hbar}_{n}(S, z; x) }{ \varphi^{\hbar}_{0}(S, z; x) } \frac{w^{n}}{n!} \\
    = \sum_{n \in \N_{0}^{d}} \mathcal{P}^{\hbar}_{n}(S, z; x)\,\frac{w^{n}}{n!},
  \end{equation*}
  where we used the definition~\eqref{eq:P_n} of the Hagedorn polynomials $\{ \mathcal{P}^{\hbar}_{n}(S, z; \,\cdot\,) \}_{n \in \N_{0}^{d}}$.
\end{proof}

We may now exploit the generating function~\eqref{eq:Gamma-Hagedorn} to find the relationship between the Hagedorn polynomials $\{ \mathcal{P}^{\hbar}_{n}(S, z; \,\cdot\,) \}_{n \in \N_{0}^{d}}$ and the Hermite polynomials $\{ p^{\hbar}_{n}(\,\cdot\,) \}_{n \in \N_{0}^{d}}$:
\begin{corollary}
  \label{cor:Hagedorn-Hermite_polynomials}
  For each $n \in \N_{0}^{d}$, the Hagedorn polynomial $\mathcal{P}^{\hbar}_{n}(S, z; x)$ is written in terms of the Hermite polynomials of the $|n|$-th excited states as follows:
  \begin{equation*}
    \mathcal{P}^{\hbar}_{n}(S, z; x) = \sum_{\substack{k \in \N_{0}^{d}\\|k|=|n|}} \frac{n!}{k!}\, f^{k}_{n}(Q)\, p^{\hbar}_{k}\parentheses{ |Q|^{-1}(x - q) },
  \end{equation*}
  where $|Q| \defeq (Q Q^{*})^{1/2}$ and the coefficients $\{ f^{k}_{n}(Q) \}_{k, n \in \N_{0}^{d}}$ are defined such that, for any $k \in \N_{0}^{d}$ and any $n \in \N_{0}^{d}$ with $|n| = |k|$,     
  \begin{equation}
    \label{eq:f}
    (|Q|^{-1}\overline{Q}w)^{k} = \sum_{n \in \N_{0}^{d}} f^{k}_{n}(Q)\,w^{n},
  \end{equation}
  and $f^{k}_{n}(Q) = 0$ if $|k| \neq |n|$.
\end{corollary}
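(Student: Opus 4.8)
The plan is to route through the Hermite generating function by a single change of variables, exactly paralleling Theorem~\ref{thm:Hagedorn-Hermite}. The key observation is that the generating function \eqref{eq:gamma-Hagedorn} for the Hagedorn polynomials is obtained from the Hermite one \eqref{eq:gamma-Hermite} by the substitution $y = |Q|^{-1}(x-q)$ and $v = |Q|^{-1}\overline{Q}\,w$; that is, I would first establish the identity
\[
  \gamma^{\hbar}(S,z;w,x) = \gamma^{\hbar}\bigl( |Q|^{-1}\overline{Q}\,w,\ |Q|^{-1}(x-q) \bigr).
\]

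To verify this I need the algebraic fact that $|Q|^{2} = QQ^{*}$ is real, symmetric, and positive-definite, so that $|Q|$ is a genuine real symmetric positive-definite matrix satisfying $|Q|^{T} = |Q|$ and $|Q|^{-2} = (Q\overline{Q}^{T})^{-1} = (\overline{Q}^{T})^{-1}Q^{-1}$. This follows from the parametrization \eqref{def:Sp2d-block-Lubich}: using $Q^{T}P = P^{T}Q$ and $Q^{*}P - P^{*}Q = 2\rmi I_{d}$ one computes $Q^{*}\,\Im(PQ^{-1})\,Q = I_{d}$, hence $\Im(PQ^{-1}) = (QQ^{*})^{-1}$; since the left-hand side is manifestly real, so is $QQ^{*}$, and positive-definiteness is inherited from $PQ^{-1} \in \Sigma_{d}$. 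Plugging the substitution into the right-hand side, the linear term collapses via $\overline{Q}^{T}|Q|^{-2} = \overline{Q}^{T}(\overline{Q}^{T})^{-1}Q^{-1} = Q^{-1}$ and the quadratic term via $\overline{Q}^{T}|Q|^{-2}\overline{Q} = Q^{-1}\overline{Q}$, matching the exponent of \eqref{eq:gamma-Hagedorn} exactly.

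With the identity in hand I would expand. Substituting $v = |Q|^{-1}\overline{Q}\,w$ and $y = |Q|^{-1}(x-q)$ into the Hermite expansion $\gamma^{\hbar}(v,y) = \sum_{k} p^{\hbar}_{k}(y)\,v^{k}/k!$ and then expanding the monomial $v^{k} = (|Q|^{-1}\overline{Q}\,w)^{k} = \sum_{n} f^{k}_{n}(Q)\,w^{n}$ according to the definition \eqref{eq:f}, I obtain
\[
  \gamma^{\hbar}(S,z;w,x) = \sum_{n \in \N_{0}^{d}}\Bigl( \sum_{k \in \N_{0}^{d}} \frac{f^{k}_{n}(Q)}{k!}\, p^{\hbar}_{k}\bigl(|Q|^{-1}(x-q)\bigr) \Bigr) w^{n}.
\]
Comparing this with $\gamma^{\hbar}(S,z;w,x) = \sum_{n} \mathcal{P}^{\hbar}_{n}(S,z;x)\,w^{n}/n!$ and matching the coefficient of $w^{n}$ gives the claimed formula; the vanishing $f^{k}_{n}(Q)=0$ whenever $|k|\neq|n|$ is immediate because $(|Q|^{-1}\overline{Q}\,w)^{k}$ is homogeneous of degree $|k|$ in $w$.

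The main obstacle is the first step, namely recognizing the correct substitution and checking that it turns \eqref{eq:gamma-Hagedorn} into the Hermite generating function. This hinges entirely on the reality and symmetry of $QQ^{*}$, which is invisible from $Q$ alone and must be extracted from the symplectic relations; once $\Im(PQ^{-1}) = (QQ^{*})^{-1}$ is established, the remainder is routine bookkeeping with generating-function coefficients.
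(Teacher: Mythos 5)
Your route is the paper's own: its proof likewise rests on the substitution identity $\gamma^{\hbar}(S,z;w,x) = \gamma^{\hbar}\bigl(|Q|^{-1}\overline{Q}\,w,\ |Q|^{-1}(x-q)\bigr)$, followed by expanding the Hermite generating function via \eqref{eq:f} and reading off the coefficient of $w^{n}$. Your verification of that identity is correct and in fact more detailed than the paper's, which merely remarks that $Q Q^{*}$ is positive-definite: your computation $Q^{*}\,\Im(PQ^{-1})\,Q = I_{d}$, hence $\Im(PQ^{-1}) = (QQ^{*})^{-1}$, correctly extracts the reality and symmetry of $QQ^{*}$ from the relations in \eqref{def:Sp2d-block-Lubich} (using, as you note, the symmetry of $PQ^{-1}$, which comes from $Q^{T}P = P^{T}Q$). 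The homogeneity observation giving $f^{k}_{n}(Q) = 0$ for $|k| \neq |n|$ is also fine.

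The gap is in what you dismiss as ``routine bookkeeping.'' After substituting $v^{k} = \sum_{n} f^{k}_{n}(Q)\,w^{n}$ into $\sum_{k} p^{\hbar}_{k}(y)\,v^{k}/k!$, you regroup the doubly indexed series as $\sum_{n}\bigl(\sum_{k}\cdots\bigr)w^{n}$ and match coefficients of $w^{n}$. That is a rearrangement of a double series, and identifying the coefficient of $w^{n}$ with $\mathcal{P}^{\hbar}_{n}(S,z;x)/n!$ requires knowing that the rearranged series still represents $\gamma^{\hbar}(S,z;w,x)$; this does not follow from the formal manipulation alone. It is precisely the step on which the paper spends the bulk of its proof: it establishes absolute summability of the family $\bigl\{ f^{k}_{n}(Q)\, p^{\hbar}_{k}(|Q|^{-1}(x-q))\, w^{n}/k! \bigr\}$ for $\norm{w}_{1} < r/\parentheses{d\, \norm{ |Q|^{-1}\overline{Q} }_{\infty}}$, combining the Hermite estimate \eqref{eq:p_n-estimate} with the bound $\sum_{n} \bigl| f^{k}_{n}(Q) \bigr| \le \parentheses{d\,\norm{ |Q|^{-1}\overline{Q} }_{\infty}}^{|k|}$ and a count of multi-indices of fixed degree, and only then differentiates at $w = 0$. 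The omission is fillable---within each homogeneous degree $\ell$ both inner sums are finite, and for an entire function composed with a linear map $L$ one has $\sum_{n} \abs{ f^{k}_{n}\,w^{n} } \le (\tilde{L}\abs{w})^{k}$ with $\tilde{L}$ the entrywise absolute value of $L$ and $\abs{w}$ the entrywise absolute value of $w$, which yields absolute summability---but as written your proposal asserts the conclusion without licensing the interchange, so you should supply either the paper's estimate or an argument of this kind.
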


\begin{proof}
One sees from \eqref{eq:gamma-Hermite} and \eqref{eq:gamma-Hagedorn} that these generating functions are related to each other as follows:
\begin{equation*}
  \gamma^{\hbar}(S, z; w, x) = \gamma^{\hbar}\parentheses{ |Q|^{-1}\overline{Q}w , |Q|^{-1}(x - q) },
\end{equation*}
where we defined $|Q| \defeq (Q Q^{*})^{1/2}$ as in \citet{Ha2015}; note that $S \in \Sp(2d,\R)$ implies that $Q Q^{*}$ is positive-definite.
As a result, we find
\begin{equation*}
  \sum_{n \in \N_{0}^{d}} \mathcal{P}^{\hbar}_{n}(S, z; x)\,\frac{w^{n}}{n!} 
  = \sum_{k \in \N_{0}^{d}} p^{\hbar}_{k}\parentheses{ |Q|^{-1}(x - q) } \frac{(|Q|^{-1}\overline{Q}w)^{k}}{k!},
\end{equation*}
but then the above definition of $f^{k}_{n}(Q)$ yields
\begin{equation}
  \label{eq:double_sum}
  \sum_{n \in \N_{0}^{d}} \mathcal{P}^{\hbar}_{n}(S, z; x)\,\frac{w^{n}}{n!} 
  = \sum_{k \in \N_{0}^{d}} \sum_{n \in \N_{0}^{d}} f^{k}_{n}(Q)\, \frac{ p^{\hbar}_{k}\parentheses{ |Q|^{-1}(x - q) } }{k!}\, w^{n}.
\end{equation}

Let us now show that the series on right-hand side converges absolutely in a neighborhood of $w = 0$.
First, setting $w = (1, \dots, 1)$ in the definition~\eqref{eq:f} of $f^{k}_{n}(Q)$, one obtains the estimate
\begin{equation*}
  \sum_{n \in \N_{0}^{d}} \bigl| f^{k}_{n}(Q) \bigr| \le d^{|k|} \norm{ |Q|^{-1}\overline{Q} }_{\infty}^{|k|}
  = \parentheses{ d\, \norm{ |Q|^{-1}\overline{Q} }_{\infty} }^{|k|}.
\end{equation*}
Also, using the estimate~\eqref{eq:p_n-estimate} for the Hermite polynomials, we have, for any $r > 0$,
\begin{equation*}
  \bigl| p^{\hbar}_{k}\parentheses{ |Q|^{-1}(x - q) } \bigr|
  \le \frac{k!}{r^{|k|}} \exp\parentheses{ d\,r^{2} + \frac{2r}{\sqrt{\hbar}} \norm{ |Q|^{-1}(x - q) }_{1} }
  = K(d, r, Q, \hbar, x - q)\, \frac{k!}{r^{|k|}},
\end{equation*}
where we set
\begin{equation*}
  K(d, r, Q, \hbar, x) \defeq \exp\parentheses{ d\,r^{2} + \frac{2r}{\sqrt{\hbar}} \norm{ |Q|^{-1}x }_{1} }.
\end{equation*}
Furthermore, since $\abs{ w^{n} } \le \norm{ w }_{1}^{|n|}$ and $f^{k}_{n}(Q) = 0$ for $|n| \neq |k|$, we have, for a fixed $k \in \N_{0}^{d}$,
\begin{align*}
  \sum_{n \in \N_{0}^{d}} \abs{ f^{k}_{n}(Q)\, \frac{ p^{\hbar}_{k}\parentheses{ |Q|^{-1}(x - q) } }{k!}\, w^{n} }
  &\le \sum_{\substack{n \in \N_{0}^{d}\\ |n| = |k| }} \bigl| f^{k}_{n}(Q) \bigr|\, \frac{ \abs{ p^{\hbar}_{k}\parentheses{ |Q|^{-1}(x - q) } } }{k!}\, \norm{w}_{1}^{|k|} \\
  &\le \parentheses{ \sum_{n \in \N_{0}^{d}} \bigl| f^{k}_{n}(Q) \bigr| } \frac{ \abs{ p^{\hbar}_{k}\parentheses{ |Q|^{-1}(x - q) } } }{k!}\, \norm{w}_{1}^{|k|} \\
  &\le K(d, r, Q, \hbar, x - q)\, \parentheses{ \frac{ d\, \norm{ |Q|^{-1}\overline{Q} }_{\infty} \norm{w}_{1} }{r} }^{|k|}.
\end{align*}
Hence
\begin{equation*}
  \sum_{k \in \N_{0}^{d}} \sum_{n \in \N_{0}^{d}} \abs{ f^{k}_{n}(Q)\, \frac{ p^{\hbar}_{k}\parentheses{ |Q|^{-1}(x - q) } }{k!}\, w^{n} }
  \le K(d, r, Q, \hbar, x - q)\, \sum_{k \in \N_{0}^{d}} \parentheses{ \frac{ d\, \norm{ |Q|^{-1}\overline{Q} }_{\infty} \norm{w}_{1} }{r} }^{|k|}.
\end{equation*}
But then
\begin{align*}
  \sum_{k \in \N_{0}^{d}} \parentheses{ \frac{ d\, \norm{ |Q|^{-1}\overline{Q} }_{\infty} \norm{w}_{1} }{r} }^{|k|}
  &= \sum_{\ell=0}^{\infty} \sum_{\substack{k \in \N_{0}^{d}\\ |k| = \ell }} \parentheses{ \frac{ d\, \norm{ |Q|^{-1}\overline{Q} }_{\infty} \norm{w}_{1} }{r} }^{|k|} \\
  &= \sum_{\ell=0}^{\infty} {\ell + d - 1 \choose d - 1} \parentheses{ \frac{ d\, \norm{ |Q|^{-1}\overline{Q} }_{\infty} \norm{w}_{1} }{r} }^{\ell},
\end{align*}
which converges for those $w \in \C^{n}$ that satisfy $\norm{w}_{1} < r/\parentheses{ d\, \norm{ |Q|^{-1}\overline{Q} }_{\infty} }$.
Therefore, we can change the order of the double summation in \eqref{eq:double_sum} to obtain, for $\norm{w}_{1} < r/\parentheses{ d\, \norm{ |Q|^{-1}\overline{Q} }_{\infty} }$,
\begin{align*}
  \sum_{n \in \N_{0}^{d}} \mathcal{P}^{\hbar}_{n}(S, z; x)\,\frac{w^{n}}{n!} 
  = \sum_{n \in \N_{0}^{d}} \sum_{k \in \N_{0}^{d}} f^{k}_{n}(Q)\, \frac{ p^{\hbar}_{k}\parentheses{ |Q|^{-1}(x - q) } }{k!}\, w^{n}.
\end{align*}
The result follows by taking the derivatives of both sides at $w = 0$.
\end{proof}

\section*{Acknowledgments}
I would like to thank George Hagedorn for very kindly sharing with me his ideas and works on the multi-dimensional generalization of the minimal uncertainty product and the generating function, and also for being a constant source of inspiration.
I would like to acknowledge that it was Christian Lubich who suggested me to contact George Hagedorn regarding the minimal uncertainty product problem during my first visit at Universit\"at T\"ubingen during the Summer 2013, which was partially supported by the AMS--Simons Travel Grant.

The initial inspiration for the work on the ladder operators for the Hagedorn wave packets came from Stephanie Troppmann's talk at the Summer School on Mathematical and Computational Methods in Quantum Dynamics at University of Wisconsin--Madison during the Summer 2013.
I also thank Caroline Lasser for the helpful discussion during the workshop ``Mathematical and Numerical Methods for Complex Quantum Systems'' in March 2014 at the University of Illinois at Chicago regarding their work~\cite{LaTr2014}.
I would also like to take this opportunity to thank Shi Jin and Christof Sparber, the organizers of the summer school and workshop, for the generous travel support through the NSF Research Network in Mathematical Sciences ``KI-Net.''

I would also like to thank Stefan Teufel for hosting my second visit at Universit\"at T\"ubingen as a Simons Visiting Professor during the Summer 2015, where a part of this work was done.

Last but not the least, I would like to thank Johannes Keller and Stephanie Troppmann for the helpful discussions on generating functions during the Oberwolfach workshop ``Mathematical Methods in Quantum Molecular Dynamics'' in June 2015, and also Caroline Lasser for hosting my visit at Technische Universit\"at M\"unchen as a Simons Visiting Professor in June 2015, where a part of this work was done.
These research stays in Summer 2015 were partially supported by the Simons Foundation and by the Mathematisches Forschungsinstitut Oberwolfach.

\appendix
\numberwithin{equation}{section}

\section{The Heisenberg--Weyl and Metaplectic Operators}
\label{sec:HW_and_Mp}
This appendix gives a brief review of the Heisenberg--Weyl and metaplectic operators.
The purpose is to make the paper self-contained as well as accessible to a broad audience.
Our main reference is \citet[Chapters~3 \& 7]{Go2011}; see also \citet{Go2006} and \citet[Chapter~4]{Fo1989}.

\subsection{The Heisenberg--Weyl Operator}
\label{ssec:HW_operator}
First recall that the {\em Heisenberg--Weyl operator} $\widehat{T}_{z}$ with the parameter $z = (q,p) \in T^{*}\R^{d}$ is the unitary operator on $L^{2}(\R^{d})$ defined as follows:
\begin{equation}
  \label{eq:That}
  \widehat{T}_{z}\colon L^{2}(\R^{d}) \to L^{2}(\R^{d});
  \qquad
  (\widehat{T}_{z}f)(x) \defeq e^{\frac{\rmi}{\hbar}p\cdot(x - q/2)}\,f(x - q).
\end{equation}
We oftentimes restrict the domain of definition of $\widehat{T}_{z}$ to the Schwartz space $\mathscr{S}(\R^{d})$ and see $\widehat{T}_{z}$ as the operator $\widehat{T}_{z}\colon\mathscr{S}(\R^{d}) \to \mathscr{S}(\R^{d})$.

One may think of $\widehat{T}_{z_{0}}$ with a fixed $z_{0} \in T^{*}\R^{d}$ as a quantization of the phase space translation
\begin{equation*}
  T_{z_{0}}\colon T^{*}\R^{d} \to T^{*}\R^{d};
  \qquad
  z \mapsto z - z_{0}.
\end{equation*}
In fact, straightforward computations show that the standard position and momentum operators $\hat{z} = (\hat{x},\hat{p})$ satisfy
\begin{equation*}
  \widehat{T}_{z_{0}} \hat{z}\,\widehat{T}_{z_{0}}^{*} = \hat{z} - z_{0}
\end{equation*}
for any $z_{0} \in T^{*}\R^{d}$.

\subsection{The Metaplectic Group $\Mp(2d,\R)$}
\label{ssec:Mp}

The metaplectic group $\Mp(2d,\R)$ is a subgroup of the group $\U(L^{2}(\R^{d}))$ of the unitary operators on $L^{2}(\R^{d})$, and is generated by the following three classes of unitary operators on $L^{2}(\R^{d})$.
First we define $\widehat{J}\colon \mathscr{S}(\R^{d}) \to \mathscr{S}(\R^{d})$ as follows: For any $\psi \in \mathscr{S}(\R^{d})$,
\begin{equation}
  \label{eq:Jhat}
  \widehat{J}\psi(x) \defeq \frac{1}{ (2\pi\hbar\,\rmi)^{d/2} } \int_{\R^{d}} e^{-\frac{\rmi}{\hbar}x \cdot \tilde{x}}\, \psi(\tilde{x})\,d\tilde{x},
\end{equation}
and hence $\widehat{J} = \rmi^{-d/2} \mathscr{F}_{\hbar}$ with $\mathscr{F}_{\hbar}$ being the semiclassical Fourier transform~\eqref{eq:mathcalF_hbar}, i.e.,
\begin{equation*}
  \mathscr{F}_{\hbar}\psi(x) = \frac{1}{ (2\pi\hbar)^{d/2} } \int_{\R^{d}} e^{-\frac{\rmi}{\hbar}x \cdot \tilde{x}}\, \psi(\tilde{x})\,d\tilde{x}.
\end{equation*}
Therefore, we may think of $\widehat{J}$ as an isomorphism from $\mathscr{S}(\R^{d})$ to itself with its inverse given by
\begin{equation*}
  \widehat{J}^{-1} \psi(x)
  = \rmi^{d/2} \mathscr{F}_{\hbar}^{-1} \psi(x)
  = \parentheses{ \frac{\rmi}{ 2\pi\hbar }}^{d/2} \int_{\R^{d}} e^{\frac{\rmi}{\hbar} x \cdot \tilde{x}}\, \psi(\tilde{x})\,d\tilde{x}.
\end{equation*}
Since $\widehat{J}$ is essentially the Fourier transform $\mathscr{F}_{\hbar}$, one can easily extend it to the unitary operator $\widehat{J} \in \U(L^{2}(\R^{d}))$ and so $\widehat{J}^{*} = \widehat{J}^{-1}$.
Secondly, we define, for any $R \in \Sym(d,\R)$ (meaning $R$ is a $d \times d$ real symmetric matrix), $\widehat{V}_{R} \in \mathsf{U}(L^{2}(\R^{d}))$ as follows:
\begin{equation}
  \label{eq:Vhat}
  \widehat{V}_{R} \psi(x) \defeq e^{\frac{\rmi}{2\hbar}x^{T} R x} \psi(x).
\end{equation}
It is clearly a unitary operator on $L^{2}(\R^{d})$ with its inverse given by
\begin{equation*}
  (\widehat{V}_{R})^{-1} = (\widehat{V}_{R})^{*} = \widehat{V}_{-R}.
\end{equation*}
Lastly, for any $L \in \GL(d,\R)$, we define $\widehat{M}_{L}^{m} \in \mathsf{U}(L^{2}(\R^{d}))$ as follows:
\begin{equation}
  \label{eq:Mhat}
  \widehat{M}_{L}^{m} \psi(x) \defeq \rmi^{m} \sqrt{ |\det L| }\, \psi(L x),
\end{equation}
where the index $m \in \mathbb{Z}$ is defined by
\begin{equation*}
  m \pi \equiv \arg(\det L) \pmod{2\pi}.
\end{equation*}
This implies that there are two versions of $\widehat{M}_{L}^{m}$ that differ by the sign; see Remark~\ref{rem:Maslov} below.
Its inverse is given by
\begin{equation*}
  (\widehat{M}_{L}^{m})^{-1} = (\widehat{M}_{L}^{m})^{*} = \widehat{M}_{L^{-1}}^{-m}.
\end{equation*}
Alternatively, we may also write
\begin{equation}
  \label{eq:Mhat2}
  \widehat{M}_{L} \psi(x) \defeq (\det L)^{1/2}\, \psi(L x),
\end{equation}
where we incorporated the term $\rmi^{m}$ into the square root, and is taken care of by the branch chosen to define the square root.

Since the three classes of operators $\widehat{J}$, $\widehat{V}_{R}$, and $\widehat{M}_{L}^{m}$ are all elements of the group $\U(L^{2}(\R^{d}))$, one may consider the subgroup of $\U(L^{2}(\R^{d}))$ generated by these elements.
The {\em metaplectic group} $\Mp(2d,\R)$ is precisely this subgroup of $\U(L^{2}(\R^{d}))$, i.e., any element in $\Mp(2d,\R)$ is written as a composition of the above three classes of operators.

We may then construct (see \citet[Chapters~3]{Go2011} and \citet[Chapter~4]{Fo1989} for details) the homomorphism $\pi_{\Mp}\colon \Mp(2d,\R) \to \Sp(2d,\R)$ such that the generators $\widehat{J}$, $\widehat{V}_{R}$, and $\widehat{M}_{L}^{m}$ can be related to the following generators (see \citet[Corollary~63]{Go2011}) of the symplectic group $\Sp(2d,\R)$ as follows:
\begin{equation}
  \label{eq:pi_Mp_of_generators}
  \begin{array}{c}
    \pi_{\Mp}\parentheses{ \widehat{J} } =
    J =
    \begin{bmatrix}
      0 & I_{d} \\
      -I_{d} & 0
    \end{bmatrix},
           \quad
           \pi_{\Mp}\parentheses{ \widehat{V}_{R} } =
           V_{R} \defeq
           \begin{bmatrix}
             I_{d} & 0 \\
             R & I_{d}
           \end{bmatrix},
    \medskip\\
    \pi_{\Mp}\parentheses{ \widehat{M}_{L}^{m} } = 
    M_{L} \defeq
    \begin{bmatrix}
      L^{-1} & 0 \\
      0 & L^{T}
    \end{bmatrix}.
  \end{array}
\end{equation}
One can also show that $\ker\pi_{\Mp} = \{ \pm \id_{L^{2}(\R^{d})} \}$ and hence $\pi_{\Mp}\colon \Mp(2d,\R) \to \Sp(2d,\R)$ is a double cover.

In general, it is not straightforward to construct a concrete form of $\widehat{S} \in \Mp(2d,\R)$ for a given $S \in \Sp(2d,\R)$ such that $\pi_{\Mp}(\widehat{S}) = S$.
However, this can be done with a particular class of elements of $\Sp(2d,\R)$.
Specifically, let us define the set of {\em free symplectic matrices} as
\begin{equation}
  \label{eq:FSp}
  \mathsf{FSp}(2d,\R) \defeq
  \setdef{
    \begin{bmatrix}
      A & B \\
      C & D
    \end{bmatrix}
    \in \Sp(2d,\R)
  }{ A, B, C, D \in \Mat_{d}(\R),\, \det B \neq 0 }.
\end{equation}
Note that $\mathsf{FSp}(2d,\R)$ is {\em not} a subgroup of $\Sp(2d,\R)$ but just a subset of $\Sp(2d,\R)$.
One may then associate those classical linear canonical/symplectic transformations defined by elements of $\mathsf{FSp}(2d,\R)$ with the corresponding metaplectic operators in an explicit manner as follows:
For any free symplectic matrix $S =
\begin{tbmatrix}
  A & B \smallskip\\
  C & D
\end{tbmatrix} \in \mathsf{FSp}(2d,\R)$, one may define the corresponding quadratic function $W_{S}\colon \R^{d} \times \R^{d} \to \R$ by
\begin{equation*}
  W_{S}(\tilde{x}, x) \defeq \frac{1}{2} \tilde{x}^{T} B^{-1}A \tilde{x} - \tilde{x}^{T}B^{-1}x + \frac{1}{2} x^{T} D B^{-1} x.
\end{equation*}
This is the generating function for the canonical/symplectic transformation $\tilde{z} \defeq (\tilde{q}, \tilde{p}) \mapsto z = (q, p)$ defined by
\begin{equation}
  \label{eq:tildez_to_z}
  z = S \tilde{z}
  \quad \text{or} \quad
  \begin{bmatrix}
    q \\
    p
  \end{bmatrix}
  =
  \begin{bmatrix}
    A & B \\
    C & D
  \end{bmatrix}
  \begin{bmatrix}
    \tilde{q} \\
    \tilde{p}
  \end{bmatrix}
\end{equation}
in the sense that \eqref{eq:tildez_to_z} is equivalent to
\begin{equation*}
  \tilde{p} = -D_{1}W(\tilde{q}, q),
  \qquad
  p = D_{2}W(\tilde{q}, q),
\end{equation*}
where $D_{1}$ and $D_{2}$ stand for the partial derivatives with respect to the first and second variables, respectively.
Then we define the corresponding operator $\widehat{S}^{m}$ on $\mathscr{S}(\R^{d})$ as follows:
\begin{equation}
  \label{eq:quad_Fourier_transform}
  \widehat{S}^{m} \psi(x) \defeq \frac{\rmi^{m}}{\sqrt{ (2\pi\hbar\,\rmi)^{d}\, |\det(B)| }} \int_{\R^{d}} e^{\frac{\rmi}{\hbar}W_{S}(\tilde{x},x)}\, \psi(\tilde{x})\,d\tilde{x},
\end{equation}
where $m \in \mathbb{Z}$ is defined by
\begin{equation*}
  m \pi \equiv \arg(\det(B)) \pmod{2\pi}.
\end{equation*}
It is straightforward to check that
\begin{equation*}
  \widehat{S}^{m} = \widehat{V}_{D B^{-1}}\, \widehat{M}_{B^{-1}}^{m}\, \widehat{J}\, \widehat{V}_{B^{-1} A},
\end{equation*}
and hence $\widehat{S}^{m}$ is also an element in $\Mp(2d,\R)$, and also 
\begin{equation*}
  \pi_{\Mp}\parentheses{ \widehat{S}^{m} } = V_{D B^{-1}}\, M_{B^{-1}}\, J\, V_{B^{-1} A} =
  \begin{bmatrix}
    A & B \\
    C & D
  \end{bmatrix}
  = S.
\end{equation*}
Alternatively, we may write
\begin{equation}
  \label{eq:quad_Fourier_transform2}
  \widehat{S} \psi(x) \defeq \frac{(\det B)^{-1/2}}{ (2\pi\hbar\,\rmi)^{d/2} } \int_{\R^{d}} e^{\frac{\rmi}{\hbar}W_{S}(\tilde{x},x)}\, \psi(\tilde{x})\,d\tilde{x},
\end{equation}
where, as is the case with $\widehat{M}_{L}$, the sign due to the term $\rmi^{m}$ is determined by the branch chosen to define the square root in the factor $(\det B)^{-1/2}$.
Then we have
\begin{equation*}
  \widehat{S} = \widehat{V}_{D B^{-1}}\, \widehat{M}_{B^{-1}}\, \widehat{J}\, \widehat{V}_{B^{-1} A},
\end{equation*}
with appropriate choices of branches for $\widehat{S}$ and $\widehat{M}_{B^{-1}}$.

\begin{remark}
  \label{rem:Maslov}
  One can see that the above index $m$ is essentially in $\mathbb{Z}/4\mathbb{Z}$ as follows.
  If $\det B > 0$ then $m$ must be even, i.e., $m = 2l$ with $l \in \mathbb{Z}$ and so $\rmi^{m} = (-1)^{l}$; hence the sign of $\widehat{S}^{m} = \widehat{S}^{2l}$ depends on the parity of $l$:
  If $l$ is even, i.e., $l = 2k$ with $k \in \mathbb{Z}$, then $\widehat{S}^{m} = \widehat{S}^{4k}$ is the same operator for any $k \in \mathbb{Z}$, and if $l$ is odd, i.e., $l = 2k+1$ with $k \in \mathbb{Z}$, then $\widehat{S}^{m} = \widehat{S}^{4k+2}$ is the same for any $k \in \mathbb{Z}$ as well, and these two versions differ only by the sign, i.e., $\widehat{S}^{4k+2} = -\widehat{S}^{4k}$.
  Likewise, if $\det B < 0$ then $m = 2l + 1$ with $l \in \mathbb{Z}$ and so $\rmi^{m} = (-1)^{l}\,\rmi$, and thus the sign of $\widehat{S}^{m} = \widehat{S}^{2l+1}$ again depends on the parity of $l$:
  With $l = 2k$ and $k \in \mathbb{Z}$, $\widehat{S}^{m} = \widehat{S}^{4k+1}$ is the same for any $k \in \mathbb{Z}$ and with $l = 2k + 1$ and $k \in \mathbb{Z}$, the same goes with $\widehat{S}^{m} = \widehat{S}^{4k+3}$, and these two differ only by the sign, i.e., $\widehat{S}^{4k+3} = -\widehat{S}^{4k+1}$.
  That is, given any element $S \in \FSp(2d,\R)$, there exist {\em two} elements written as $\widehat{S}^{m}$.
  The same goes with the above definition~\eqref{eq:Mhat} of $\widehat{M}_{L}^{m}$.
\end{remark}

\begin{remark}
  \label{rem:abstractness_of_Mp}
  Unfortunately, not all the elements of $\Mp(2d,\R)$ are written in the form \eqref{eq:quad_Fourier_transform} or \eqref{eq:quad_Fourier_transform2}.
  However, one can show (see, e.g., \cite[Proposition~110]{Go2011}) that any element $\widehat{S} \in \Mp(2d,\R)$ may be written as the composition of two operators of the form \eqref{eq:quad_Fourier_transform2} (or \eqref{eq:quad_Fourier_transform}), i.e., $\widehat{S} = \widehat{S}_{1} \widehat{S}_{2}$ with those elements $S_{1}, S_{2} \in \mathsf{FSp}(2d,\R)$ such that $S = \pi_{\Mp}(\widehat{S}) = S_{1} S_{2}$, although this factorization is not unique.
\end{remark}

The integral expression~\eqref{eq:quad_Fourier_transform2} suggests that that the metaplectic operators $\widehat{S} \in \Mp(2d,\R)$ are, in a sense, a quantization of the linear symplectic transformation $z \mapsto S z$ on the phase space $T^{*}\R^{d}$ defined by the matrix $S \in \Sp(2d,\R)$.
This can be also illustrated by the following fact:
Taking the conjugation of the Heisenberg--Weyl operator~\eqref{eq:That} by a metaplectic operator $\widehat{S} \in \Mp(2d,\R)$ corresponding to $S \in \Sp(2d,\R)$, one obtains (see, e.g., \citet[Theorem~128 on p.~95]{Go2011})
\begin{equation}
  \label{eq:T-S_covariance}
  \widehat{S}\, \widehat{T}_{z}\, \widehat{S}^{*} = \widehat{T}_{S z}.
\end{equation}
Such a property is called {\em symplectic covariance}~\cite{Go2011}, and is very useful in calculations involving the Heisenberg--Weyl and metaplectic operators as illustrated in the main body of the paper; see, e.g., the proofs of Proposition~\ref{prop:A_covariance} and Corollary~\ref{cor:Hagedorn-Fourier}.

\section{The Hermite Functions and Hermite Polynomials}
\label{sec:Hermite}
This appendix is a summary of some facts on the Hermite functions and Hermite polynomials.
The purpose is mainly to set up our notation to avoid confusion due to a few different versions of definitions as well as to collect those results that are relevant to us.

\subsection{The Hermite Functions and Hermite Polynomials}
Let us start with the one-dimensional case.
Let $\tilde{\psi}_{n}$ be the $n$-th {\em Hermite function} with $n \in \N_{0} \defeq \N \cup \{0\}$, i.e., we have, for $x \in \R$,
\begin{equation*}
  \tilde{\psi}_{n}(x) \defeq \frac{\tilde{p}_{n}(x)}{\sqrt{2^{n}n!}\,\pi^{1/4}}\,\exp\parentheses{-x^{2}/2},
\end{equation*}
where $\tilde{p}_{n}$ is the $n$-th {\em Hermite polynomial}, i.e., $\tilde{p}_{0}(x) = 1$, $\tilde{p}_{1}(x) = 2x$, $\tilde{p}_{2}(x) = 4x^{2} - 2$, and so on.
Specifically, for $n = 0$, we have
\begin{equation*}
  \tilde{\psi}_{0}(x) = \frac{1}{\pi^{1/4}}\,\exp\parentheses{-x^{2}/2},
\end{equation*}
and so
\begin{equation*}
  \tilde{\psi}_{n}(x) = \frac{\tilde{p}_{n}(x)}{\tilde{c}_{n}}\,\tilde{\psi}_{0}(x)
\end{equation*}
with
\begin{equation*}
  \tilde{c}_{n} \defeq \sqrt{2^{n}n!}.
\end{equation*}

It is straightforward to generalize them to $d$-dimensions with $d \in \N$.
Let $n = (n_{1}, \dots, n_{d}) \in \N_{0}^{d}$ be a multi-index and $x = (x_{1}, \dots, x_{d}) \in \R^{d}$.
We define the {\em Hermite function} with the multi-index $n \in \N_{0}^{d}$ as
\begin{equation*}
  \psi_{n}(x) \defeq \prod_{j=1}^{d} \tilde{\psi}_{n_{j}}(x_{j})
  = \frac{p_{n}(x)}{c_{n}\,\pi^{d/4}}\,e^{-x^{2}/2}
  = \frac{p_{n}(x)}{c_{n}}\,\psi_{0}(x),
\end{equation*}
where
\begin{equation}
  \label{eq:c_n}
  c_{n} \defeq \tilde{c}_{n_{1}} \dots \tilde{c}_{n_{d}} = \sqrt{2^{|n|}n!}
\end{equation}
with $n! \defeq n_{1}! \dots n_{d}!$ and $|n| = n_{1} + \dots + n_{d}$, 
and $p_{n}$ is the {\em Hermite polynomial} with the multi-index $n \in \N_{0}^{d}$ defined as
\begin{equation*}
  p_{n}(x) \defeq \prod_{j=1}^{d} \tilde{p}_{n_{j}}(x_{j}),
\end{equation*}
and specifically, for $n = 0$, we have the Gaussian
\begin{equation*}
  \psi_{0}(x) = \frac{1}{\pi^{d/4}}\,\exp\parentheses{-x^{2}/2}.
\end{equation*}

Using the semiclassical scaling $x \to x/\sqrt{\hbar}$, we have the {\em semiclassically scaled Hermite functions}, i.e., for any $n \in \N_{0}^{d}$,
\begin{equation*}
  \psi^{\hbar}_{n}(x) \defeq \frac{1}{\hbar^{d/4}}\psi_{n}(x/\sqrt{\hbar})
  = \frac{p^{\hbar}_{n}(x)}{c_{n} (\pi\hbar)^{d/4}}\,e^{-\frac{x^{2}}{2\hbar}}
  = \frac{p^{\hbar}_{n}(x)}{c_{n}}\,\psi^{\hbar}_{0}(x),
\end{equation*}
where we defined the {\em semiclassically scaled Hermite polynomials}
\begin{equation}
  \label{eq:p^hbar_n}
 p^{\hbar}_{n}(x) \defeq p_{n}(x/\sqrt{\hbar})
\end{equation}
and particularly
\begin{equation*}
  \psi^{\hbar}_{0}(x) = \frac{1}{(\pi\hbar)^{d/4}}\,\exp\parentheses{-\frac{x^{2}}{2\hbar}}.
\end{equation*}
With the ladder operators defined by
\begin{equation*}
  \hat{a} \defeq \frac{1}{\sqrt{2\hbar}} (\hat{x} + \rmi\,\hat{p}),
  \qquad
  \hat{a}^{*} \defeq \frac{1}{\sqrt{2\hbar}} (\hat{x} - \rmi\,\hat{p}),
\end{equation*}
one sees that the Gaussian $\psi^{\hbar}_{0}$ is the ground state in the sense that $\hat{a}\,\psi_{0} = 0$, and also that, for any multi-index $n \in \N_{0}^{d}$ and $j \in \{1, \dots, d\}$,
\begin{equation}
  \label{eq:psi_n-ladders}
  \psi^{\hbar}_{n - e_{j}} = \frac{1}{\sqrt{n_{j}}}\, \hat{a}_{j} \psi^{\hbar}_{n},
  \qquad
  \psi^{\hbar}_{n + e_{j}} = \frac{1}{\sqrt{n_{j} + 1}}\, \hat{a}^{*}_{j} \psi^{\hbar}_{n},
\end{equation}
where $e_{j}$ is the unit vector in $\R^{d}$ whose $j$-th entry is 1, and $n_{j} \ge 1$ is assumed in the first equation.

\subsection{Generating Function}
Again, let us start with the one-dimensional case.
The generating function for the one-dimensional Hermite polynomials $\{ \tilde{p}_{n} \}_{n \in \N_{0}}$ is defined as
\begin{equation*}
  \tilde{\gamma}(w, x) \defeq \exp( 2w x  - w^{2} )
\end{equation*}
and satisfies
\begin{equation*}
  \tilde{\gamma}(w, x) = \sum_{n = 0}^{\infty} \tilde{p}_{n}(x)\,\frac{w^{n}}{n!}
\end{equation*}
for $x \in \R$ and $w \in \C$, or equivalently,
\begin{align*}
  \tilde{\Gamma}(w, x) &\defeq \psi_{0}(x)\,\tilde{\gamma}(w, x)
  = \frac{1}{\pi^{1/4}} \exp\parentheses{ -\frac{x^{2}}{2} + 2w x  - w^{2} } \\
  &= \sum_{n = 0}^{\infty} \tilde{p}_{n}(x)\,\tilde{\psi}_{0}(x)\,\frac{w^{n}}{n!}
  = \sum_{n = 0}^{\infty} \tilde{\psi}_{n}(x)\,\frac{\tilde{c}_{n}}{n!}\,w^{n}.
\end{align*}
The generating function $\tilde{\gamma}$ can be exploited along with the Cauchy integral formula to give the following estimate for the Hermite polynomials (see, e.g., \citet[Exercise 7.4]{Ar1997}):
For any $n \in \N_{0}$ and $r > 0$, we have
\begin{equation}
  \label{eq:tildep_n-estimate}
  \bigl| \tilde{p}_{n}(x) \bigr| \le \frac{n!}{r^{n}} \exp(r^{2} + 2r|x|).
\end{equation}
The multi-dimensional generating function is the following simple product of the one-dimensional generating functions:
\begin{equation*}
  \gamma(w, x) \defeq \prod_{j=1}^{d} \tilde{\gamma}(w_{j}, x_{j})
  = \exp( 2w^{T}x  - w^{2} ) = \sum_{n \in \N_{0}^{d}} p_{n}(x)\,\frac{w^{n}}{n!}
\end{equation*}
or
\begin{align*}
  \Gamma(w, x) &\defeq \prod_{j=1}^{d} \tilde{\Gamma}(w_{j}, x_{j})
                 = \psi_{0}(x)\,\gamma(w,x) \\
               &= \frac{1}{\pi^{d/4}} \exp\parentheses{ -\frac{x^{2}}{2} + 2w^{T}x  - w^{2} } \\
               &= \sum_{n \in \N_{0}^{d}} p_{n}(x)\,\psi_{0}(x)\,\frac{w^{n}}{n!}
                 = \sum_{n \in \N_{0}^{d}} \psi_{n}(x)\,\frac{c_{n}}{n!}\,w^{n},
\end{align*}
where $w = (w_{1}, \dots, w_{d}) \in \C^{d}$ and $w^{n}$ stands for $w_{1}^{n_{1}} \dots w_{d}^{n_{d}}$.
The above estimate~\eqref{eq:tildep_n-estimate} on the one-dimensional Hermite polynomials can be easily extended to the multi-dimensional Hermite polynomials:
For any $n \in \N_{0}^{d}$ and $r > 0$, we have
\begin{equation}
  \label{eq:p_n-estimate}
  | p_{n}(x) | \le \frac{n!}{r^{|n|}} \exp( d\,r^{2} + 2r\norm{x}_{1} ),
\end{equation}
where $\norm{x}_{1} \defeq \sum_{j=1}^{d} |x_{j}|$.
With the semiclassical scaling, we have the following generating function shown in \eqref{eq:gamma-Hermite}:
\begin{equation*}
  \gamma^{\hbar}(w, x)
  \defeq \gamma(w, x/\sqrt{\hbar})
  = \exp\parentheses{ \frac{2}{\sqrt{\hbar}}\,w^{T}x - w^{2} } = \sum_{n \in \N_{0}^{d}} p^{\hbar}_{n}(x)\,\frac{w^{n}}{n!},
\end{equation*}
where $p^{\hbar}_{n}(x) \defeq p_{n}(x/\sqrt{\hbar})$, and so 
\begin{align*}
  \Gamma^{\hbar}(w,x)
  &\defeq \frac{1}{\hbar^{d/4}}\,\Gamma^{\hbar}(w,x/\sqrt{\hbar}) \\
  &= \frac{1}{(\pi\hbar)^{d/4}}\,\exp\parentheses{ -\frac{x^{2}}{2\hbar} + \frac{2}{\sqrt{\hbar}}\,w^{T}x - w^{2} } \\
  &= \sum_{n \in \N_{0}^{d}} p^{\hbar}_{n}(x)\,\psi_{0}^{\hbar}(x)\,\frac{w^{n}}{n!}
  = \sum_{n \in \N_{0}^{d}} \psi^{\hbar}_{n}(x)\,\frac{c_{n}}{n!}\,w^{n},
\end{align*}
which is \eqref{eq:Gamma-Hermite}.

\bibliography{Hagedorn-Hermite}
\bibliographystyle{plainnat}

\end{document}